\newtheorem{theorem}{Theorem}
\newtheorem{definition}{Definition}
\newtheorem{lemma}{Lemma}
\newtheorem{prop}{Proposition}
\newtheorem{prob}{Problem}
\newtheorem{cor}{Corollary}
\begin{document}

\title{How to Map Linear Differential Equations to Schr\"{o}dinger Equations via Carleman and Koopman-von Neumann Embeddings for Quantum Algorithms}

\author{Yuki Ito}
\email{yuki.itoh.osaka@gmail.com}
\affiliation{%
Graduate School of Engineering Science, Osaka University,
1-3 Machikaneyama, Toyonaka, Osaka 560-8531, Japan\\
}%

\author{Yu Tanaka}
\affiliation{
Advanced Research Laboratory, Technology Infrastructure Center, Technology Platform,
Sony Group Corporation, 1-7-1 Konan, Minato-ku, Tokyo, 108-0075, Japan\\
}%

\author{Keisuke Fujii}
\email{fujii@qc.ee.es.osaka-u.ac.jp}
\affiliation{%
Graduate School of Engineering Science, Osaka University,
1-3 Machikaneyama, Toyonaka, Osaka 560-8531, Japan\\
}%
\affiliation{Center for Quantum Information and Quantum Biology, Osaka University, 1-2 Machikaneyama, Toyonaka, Osaka 560-0043, Japan}
\affiliation{RIKEN Center for Quantum Computing (RQC), Hirosawa 2-1, Wako, Saitama 351-0198, Japan}

\date{\today}

\begin{abstract}
Solving linear and nonlinear differential equations with large degrees of freedom is an important task 
for scientific and industrial applications. 
In order to solve such differential equations on a quantum computer, it is necessary to embed 
classical variables into a quantum state. 
While the Carleman and Koopman-von Neumann embeddings have been investigated so far,
the class of problems that can be mapped to the  Schr{\"o}dinger equation is not well understood even for linear differential equations.
In this work, 
we investigate the conditions for linear differential equations
to be mapped to the Schr{\"o}dinger equation and solved on a quantum computer.
Interestingly, we find that these conditions are identical for both Carleman and Koopman-von Neumann embeddings.
We also compute the computational complexity associated with estimating the expected values of an observable. This is done by assuming a state preparation oracle, block encoding of the mapped Hamiltonian via either Carleman or Koopman-von Neumann embedding, and block encoding of the observable using $\mathcal{O}(\log M)$ qubits with $M$ is the mapped system size.
Furthermore, we consider a general classical quadratic Hamiltonian dynamics and find a sufficient condition to map it into the Schr\"{o}dinger equation.
As a special case, this includes the coupled harmonic oscillator model [Babbush et al., \cite{babbush_exponential_2023}]. 
We also find a concrete example that cannot be described as the coupled harmonic oscillator but 
can be mapped to the Schr{\"o}dinger equation in our framework.
These results are important in the construction of quantum algorithms for solving differential equations of large-degree-of-freedom.
\end{abstract}

\maketitle

\section{Introduction}
Differential equations are often used for describing complex phenomena, 
such as the Maxwell equation in electromagnetism, the Navier-Stokes Equation in fluid dynamics, the Magnetohydrodynamics equation in plasma physics, the primitive equation in climate modeling, and 
Lotka-Volterra equation in ecology.
While some differential equations can be solved analytically, many of practical significance cannot. 
To understand the behavior of these complex systems, these equations must be solved numerically. 
However, numerical simulation of such systems with a large number of variables requires substantial computational resources.

Quantum computers are thought to offer powerful computational capability for a certain class of problems such as prime factorization~\cite{shor_polynomial-time_1997}, simulation of quantum many-body system~\cite{reiher_elucidating_2017}, and linear system solver~\cite{harrow_quantum_2009}. 
We aim to harness such an advantage of quantum computers to solve linear or non-linear differential equations in a large number of variables efficiently.
When restricted to linear differential equations,
there have been investigated algorithms for solving sparse linear differential equations using the quantum linear system algorithm and time discretization ~\cite{berry_high-order_2014, berry_quantum_2017}. 
Furthermore, the quantum algorithm \cite{babbush_exponential_2023} to solve coupled harmonic oscillators with exponential speedup has been proposed recently and shown to include a BQP-complete problem.

On the other hand, solving nonlinear differential equations with quantum computers requires embedding the classical variables of the differential equations into quantum states. 
As introduced in Refs. \cite{kowalski_nonlinear_1997, engel_linear_2021}, there are two known ways to realize such embedding.
The first one, Carleman embedding \cite{carleman1932application}, embeds classical variables into the amplitudes of a quantum state directly. 
The second one, Koopman-von Neumann embedding \cite{koopman1931hamiltonian, von-Neumann_1932a, von-Neumann_1932b}, is to embed classical variables into quantum amplitudes through orthogonal polynomials.
In general, these embeddings require infinite variables because Carleman and Koopman-von Neumann embeddings convert
a nonlinear differential equation into a linear one by introducing additional degrees of freedom with respect to nonlinear terms.

One of the advantages of Carleman embedding is the mapping between classical variables and quantum amplitudes is straightforward; each classical variable corresponds to a complex amplitude with an appropriate normalization.
By virtue of this property, if the nonlinear differential equations are sparse, the mapped linear equations will also be sparse.
Based on this mapping, a quantum algorithm to solve a nonlinear differential equation has been proposed ~\cite{liu_efficient_2021}.
Unfortunately, for the Carleman embedding, the matrix defining the mapped linear system is not hermitian in general, 
and hence it cannot be regarded as Schr\"{o}dinger equation.
To handle this,
we have to discretize in the time direction and use quantum linear system algorithms to solve it~\cite{childs_quantum_2017, harrow_quantum_2009}.

On the other hand, 
Koopman-von Neumann embedding \cite{engel_linear_2021, tanaka_quantum_2023} using orthogonal polynomials to map
classical variables to quantum amplitudes provides a linear differential equation that has 
a form of Schr\"{o}dinger equation and hence can be solved by Hamiltonian simulation ~\cite{low_hamiltonian_2019, gilyen_quantum_2019}.
However, as a drawback, the mapped Hamiltonian is not guaranteed to be sparse even if the 
original nonlinear differential equation is sparse. 
In Ref.~\cite{tanaka_quantum_2023}, the authors clarify a sufficient condition on the nonlinear differential equation that the mapped Hamiltonian becomes sparse, and hence can be solved efficiently on a quantum computer.

In the perspective of solving differential equations in a large number of variables with quantum computers, 
it remains unclear what kind of differential equations can be converted into the Schr{\"o}dinger equation either Carleman or Koopman-von Neumann embedding.
Furthermore, it is not well understood how the classes of linear and nonlinear equations that can be solved by Carleman and Koopman-von Neumann embedding differ, 
and in what sense they can be efficiently solved by a quantum computer when the conditions are met.

In this study, we investigate the relation between Carleman and Koopman-von Neumann embedding, specifically focusing on linear differential equations.
We consider general linear differential equations, but not necessarily in the form of the Schr\"{o}dinger equation.
In order to map them to the Schr\"{o}dinger equation via Carleman or Koopman-von Neumann embedding, 
we define a linear transformation to change the variables and clarify the necessary and sufficient condition on the linear differential equations for being able to do that.
Specifically, in the case of the Koopman-von Neumann embedding, 
the dimension of the mapped Schr{\"o}dinger equation becomes infinite even if the original differential equation is linear
because of the nature of the Koopman-von Neumann embedding.
However, if the mapped Hamiltonian preserves the total fock number, 
we can restrict the system into the one-particle subspace and find a finite dimensional Schr{\"o}dinger equation to solve the original problem.
Interestingly,
the necessary and sufficient condition on the linear differential equation to be mapped into the Schr{\"o}dinger equation
is equivalent for both Carleman and Koopman-von Neumann embeddings.
We also argue the computational complexity of the quantum algorithms solving these linear differential equations 
with assuming state preparation oracle, block encoding of the mapped Hamiltonian, and block encoding of the observable, using $\mathcal{O} (\log M)$ qubits, where $M$ is the mapped system size.

Furthermore, we consider a case, where the differential equation is given as 
Hamilton's canonical equation of a quadratic classical Hamiltonian,
which includes coupled harmonic oscillator model~\cite{babbush_exponential_2023} as a special case.
We provide a sufficient condition on classical quadratic Hamiltonian so that 
we can map the corresponding Hamiltonian's canonical equation to the Schr{\"o}dinger equation.
While the coupled harmonic oscillators treated in Ref.~\cite{babbush_exponential_2023} is a special case of the above result,
we also provide a nontrivial example that cannot be described as a coupled harmonic oscillator by the method of Ref.~\cite{babbush_exponential_2023} but 
can be mapped to the Schr{\"o}dinger equation in our framework.
By using the same argument as in Ref.~\cite{babbush_exponential_2023}, we show that
a quantum algorithm obtained via Koopman-von Neumann embedding can solve BQP-complete problems even in the case of linear differential equations.
These results are general in the situation of embedding general linear differential equations into the Schr{\"o}dinger one, and are very important 
for constructing efficient quantum algorithms for linear and nonlinear differential equations in the future.

The rest of the paper is organized as follows.
In Sec. \ref{section:preliminary}, we introduce notations about the mathematical terms of matrices.
In Sec. \ref{section:carleman-linear}, we investigate the necessary and sufficient condition for embedding 
a given differential equation into the Schr{\"o}dinger equation via Carleman embedding.
We also argue about the computational complexity of a related quantum algorithm to solve the linear differential equation.
In Sec. \ref{section:Koopman-von Neumann}, the same argument is expanded for the Koopman-von Neumann embedding.
In Sec. \ref{section:relation-carleman-koopman}, we investigate the relation between Carleman and Koopman-von Neumann embedding,
whether or not one can construct one embedding from another in a constructive way.
In Sec. \ref{section:classical-quadratic-hamiltonian-dynamics}, we consider a special case where the linear differential equation 
is given as a Hamilton's canonical equation with a quadratic classical Hamiltonian.
In Sec. \ref{section:coupled-harmonic-oscillators}, we revisit the result obtained in Ref.~\cite{babbush_exponential_2023}
in our framework.
Sec. \ref{section:conclusion} is devoted to conclusion and discussion.

\section{Preliminary} \label{section:preliminary}
In this section, we define notations in this study.
First, we say $A  \in \mathbb{C}^{N \times N} (N \in \mathbb{Z}_{>0})$ is diagonalizable when there exists a regular matrix $P \in \mathbb{C}^{N \times N}$ such that $P^{-1}AP$ is a diagonal matrix.
Note that $P$ is not necessarily a unitary matrix but a regular matrix in general.
Second, we denote $\sigma_p(A)$ as the set of the eigenvalues of $A  \in \mathbb{C}^{N \times N} (N \in \mathbb{Z}_{>0})$.
Third, we say $A  \in \mathbb{C}^{N \times N} (N \in \mathbb{Z}_{>0})$ is similar to $B \in \mathbb{C}^{N \times N}$ when there exists a regular matrix $P \in \mathbb{C}^{N \times N}$ such that $B=P^{-1}AP$.
Fourth, we define the real and imaginary parts of a matrix $A \in \mathbb{C}^{N \times N^\prime} (N, N^\prime \in \mathbb{Z}_{>0})$ as 
\begin{equation}
    \mathfrak{Re} A = \frac{1}{2} \left(A + A^* \right),
    \hspace{1em}
    \mathfrak{Im} A = \frac{1}{2i} \left(A - A^* \right).
\end{equation}
Fifth, we denote $\mathrm{Im} B$ as the image of $B \in \mathbb{C}^{N \times N^\prime} (N, N^\prime \in \mathbb{Z}_{>0})$.
Finally, we denote $I_N (N \in \mathbb{Z}_{>0})$ as a $N \times N$ identity matrix and $O_{N \times N^\prime} (N, N^\prime \in \mathbb{Z}_{>0})$ as a $N \times N^\prime$ zero matrix.

Next, to extend the scope of linear differential equations that can be mapped to the Schr\"{o}dinger equation, we define the one-way reversible linear transformation as a method for changing the variables as follows.
We say $B \in \mathbb{C}^{M \times N} \ (M,N \in \mathbb{Z}_{>0} \ \mathrm{with} \ M \ge N)$ is a one-way reversible linear transformation when $B$ satisfies $\mathrm{rank} B = N$.
This transformation is a generalization of the technique used in Ref.~\cite{babbush_exponential_2023}.
Next, for a one-way reversible linear transformation $B$, we define the characteristic matrix $C_B$ as $C_B \coloneqq (B^\dagger B)^{-1} B^\dagger$.
The properties of $C_B$ are given as follows:
\begin{lemma} \label{lem:property-CB}
    Let $\mathbb{K}$ be $\mathbb{C} \ \mathrm{or} \ \mathbb{R}$, and $B$ belongs to $\mathbb{K}^{M \times N} \ \mathrm{with} \ \mathrm{rank} B =N$.
    $C_B = (B^\dagger B)^{-1} B^\dagger$ satisfies $C_B B = I_N$ and $C_B |_{(\mathrm{Im} B)^\perp} = 0$.
\end{lemma}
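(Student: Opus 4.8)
The plan is to verify the two claimed identities directly, after first confirming that $C_B$ is well defined, i.e.\ that $B^\dagger B$ is invertible.

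\textbf{Step 1 (well-definedness).} Because $\mathrm{rank}\, B = N$ equals the number of columns of $B$, the linear map $x \mapsto Bx$ from $\mathbb{K}^N$ to $\mathbb{K}^M$ is injective. Hence, for every nonzero $x \in \mathbb{K}^N$,
\[
    x^\dagger (B^\dagger B) x = \norm{Bx}^2 > 0,
\]
so $B^\dagger B \in \mathbb{K}^{N \times N}$ is Hermitian (symmetric when $\mathbb{K} = \mathbb{R}$) and positive definite, and in particular invertible. This makes the definition $C_B = (B^\dagger B)^{-1} B^\dagger$ legitimate; the computation is identical for $\mathbb{K} = \mathbb{R}$, reading $\dagger$ as transpose throughout.

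\textbf{Step 2 (the two identities).} The first identity is an immediate cancellation: $C_B B = (B^\dagger B)^{-1} (B^\dagger B) = I_N$. For the second, I would record the elementary fact that $(\mathrm{Im}\, B)^\perp = \ker B^\dagger$: a vector $v$ lies in $(\mathrm{Im}\, B)^\perp$ iff $(Bx)^\dagger v = x^\dagger (B^\dagger v) = 0$ for all $x \in \mathbb{K}^N$, which is equivalent to $B^\dagger v = 0$. Then for any $v \in (\mathrm{Im}\, B)^\perp$ we obtain $C_B v = (B^\dagger B)^{-1} (B^\dagger v) = (B^\dagger B)^{-1} 0 = 0$, which is exactly $C_B|_{(\mathrm{Im}\, B)^\perp} = 0$.

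There is essentially no genuine obstacle here: the only step demanding a moment's attention is deducing the invertibility of $B^\dagger B$ from the rank hypothesis (and noting that the real case $\mathbb{K} = \mathbb{R}$ is covered by the same argument). Once that is in place, both assertions reduce to one-line substitutions.
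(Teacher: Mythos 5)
Your proposal is correct and follows essentially the same route as the paper: both establish invertibility of $B^\dagger B$ from the identity $x^\dagger B^\dagger B x = \norm{Bx}^2$ together with injectivity of $B$, and both deduce the second identity from $(\mathrm{Im}\, B)^\perp = \mathrm{Ker}\, B^\dagger$. The only cosmetic difference is that you phrase the invertibility step as positive definiteness while the paper phrases it as triviality of $\mathrm{Ker}(B^\dagger B)$.
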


\begin{proof}
    $B$ satisfies $\mathrm{rank} B = N$, so that from rank–nullity theorem we obtain $\dim \mathrm{Ker} B = 0$. Next, $x \in \mathrm{Ker} B^\dagger B$ satisfies $x^\dagger B^\dagger B x = 0$, so $0=x^\dagger B^\dagger B x=\lVert B x \rVert^2$. Now,  $\mathrm{Ker} B = \{0\}$, then $\mathrm{Ker} B^\dagger B = \{0\}$. Therefore, there exists the inverse $(B^\dagger B)^{-1}$ of $B^\dagger B$. This leads the existence of $C_B = (B^\dagger B)^{-1} B^\dagger$. We can calculate like following $C_B B = (B^\dagger B)^{-1} B^\dagger B = I_N$, then we obtain $C_B B = I_N$. 
    Since $(\mathrm{Im} B)^\perp = \mathrm{Ker} B^\dagger$, $C_B = (B^\dagger B)^{-1} B^\dagger$ satisfies $C_B |_{(\mathrm{Im} B)^\perp} = 0$.
\end{proof}

\section{Carleman Embedding for Linear Differential Equations} \label{section:carleman-linear}
We would like to know when a given linear differential equation
can be mapped into the Schr{\"o}dinger equation and then
a quantum computer can be applied to solve it if 
appropriate oracle and/or block encoding are provided.
To do so,
it is necessary to consider a concrete method for how to embed classical variables into quantum states. 
The most straightforward approach would be to regard the vector of classical variables as a quantum state vector with appropriate normalization. 
Since this approach is equivalent to considering only the linear terms of the Carleman embedding (see Appendix \ref{appendix:carleman-embedding} for a review), we will refer to it as Carleman embedding in the linear case as well. 
That is, here we use the terminology, Carleman embedding, as a method to embed classical variables into a quantum state.

Let us start with a general linear differential equation:
\begin{equation*}
    \frac{d}{dt}
    \begin{pmatrix}
        x_1 (t) \\
        \vdots \\
        x_N (t)
    \end{pmatrix}
    =A
    \begin{pmatrix}
        x_1 (t) \\
        \vdots \\
        x_N (t)
    \end{pmatrix},
\end{equation*}
where $A \in \mathbb{R}^{N \times N} (N \in \mathbb{Z}_{>0})$.
Since $A$ is not necessarily anti-hermitian, 
to map the above differential equation to the Schr{\"o}dinger equation,
we define the one-way reversible linear transformations $B$,
which can be regarded as a generalization of the result obtained in Ref.~\cite{babbush_exponential_2023}.
Suppose we have a one-way reversible linear transformation
$B \in \mathbb{C}^{M \times N} \ (M,N \in \mathbb{Z}_{>0} \ \mathrm{with} \ M \ge N)$ and define $\bm{y}(t):= (y_1(t), \dots, y_M(t))^\top = B (x_1(t), \dots, x_N(t))^\top$.
Then we obtain a linear differential equation with respect to $\bm{y}(t)$:
\begin{equation*}
    \frac{d}{dt} 
    \begin{pmatrix}
        y_1(t) \\
        \vdots \\
        y_M(t) 
    \end{pmatrix}
    =
    BAC 
    \begin{pmatrix}
        y_1(t) \\
        \vdots \\
        y_M(t) 
    \end{pmatrix},
\end{equation*}
where $C \in \mathbb{C}^{N \times M}$ satisfies $CB = I_N$.
In this case, we assumed the existence of $C$ satisfying $CB = I_N$ for reversing the transformed solution $\bm{y} (t)$ to the original solution $\bm{x}(t)$, since we want to know the original solution $\bm{x}(t)$ from $\bm{y}(t)$.
We also imposed $\mathrm{rank} B = N$ on $B$ as described in Sec.\ref{section:preliminary} so that there exists $C \in \mathbb{C}^{N \times M}$ such that $CB = I_N$ holds.
As seen in Sec.~\ref{section:preliminary}, 
we can find such a $C$ as a Moore–Penrose pseudo-inverse of $B$, i.e., $C_B = (B^\dagger B)^{-1} B^\dagger$.

The problem we want to solve here is to clarify what is the necessary and sufficient condition on $A$ for $BAC$ to be anti-hermitian,
and hence the linear differential equation with respect to $\bm{y}$ becomes the Schr{\"o}dinger equation.
To do so we define a property, {\it pure imaginary diagonalizable}, of a given matrix $A$ as follows.
\begin{definition}
    $A \in \mathbb{R}^{N \times N}$ is called pure imaginary diagonalizable iff $A$ is diagonalizable and $\sigma_p(A) \subset i \mathbb{R}$.
\end{definition}

We note that if $A$ is pure imaginary diagonalizable then $A$ is similar to a diagonal anti-hermitian matrix, i.e., 
there exists a regular matrix $P \in \mathbb{C}^{N \times N}$ such that $P^{-1}AP$ is the diagonal matrix whose diagonal components are in $i \mathbb{R}$.
We also note that this $P$ is not necessarily a unitary matrix.

\begin{theorem} \label{thm:carleman-schrodinger}
    For a given linear differential equation $\frac{d}{dt} \bm{x}(t) = A \bm{x} (t)$, where $A \in \mathbb{R}^{N \times N}$,
    there exists $(B, C) \in \mathbb{C}^{M \times N} \times \mathbb{C}^{N \times M} \ \mathrm{with} \ \mathrm{rank} B = N \ \mathrm{and} \ CB=I_N$ such that 
    a linear differential equation for $\bm{y}(t):=B\bm{x}(t)$, 
    $\frac{d}{dt} \bm{y}(t)  = BAC \bm{y}(t)$ becomes the Schr\"{o}dinger equation, i.e., $BAC$ is anti-hermitian
    iff $A$ is pure imaginary diagonalizable.
\end{theorem}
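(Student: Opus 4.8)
The plan is to prove the two implications separately, transferring spectral and diagonalizability properties back and forth between $A$ and the mapped generator $BAC$.

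For the "if" direction, suppose $A$ is pure imaginary diagonalizable, so there is a regular $P\in\mathbb{C}^{N\times N}$ with $D:=P^{-1}AP$ diagonal and with diagonal entries (the eigenvalues of $A$) lying in $i\mathbb{R}$; hence $D^\dagger=-D$. I would simply take $M=N$, $B=P^{-1}$ and $C=P$. Then $\mathrm{rank}\,B=N$ and $CB=PP^{-1}=I_N$, so $(B,C)$ is an admissible pair, while $BAC=P^{-1}AP=D$ is anti-hermitian, so the equation $\frac{d}{dt}\bm{y}=D\bm{y}$ for $\bm{y}(t)=B\bm{x}(t)$ is a Schr\"{o}dinger equation and $\bm{x}(t)=C\bm{y}(t)$ recovers the original solution. (One could equally use the characteristic matrix $C_B=(B^\dagger B)^{-1}B^\dagger$ of Lemma \ref{lem:property-CB}, but the theorem only requires \emph{some} $C$ with $CB=I_N$, so this choice is enough.)

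For the "only if" direction, assume such a pair $(B,C)$ exists with $\mathrm{rank}\,B=N$, $CB=I_N$, and $H:=BAC$ anti-hermitian. The key algebraic identity is $CHB=C(BAC)B=(CB)A(CB)=A$, so $A$ is recovered from $H$ by a rectangular conjugation. Since $H$ is anti-hermitian, it is diagonalizable with $\sigma_p(H)\subset i\mathbb{R}$, and I would transfer both facts to $A$. First, $\mathrm{rank}\,B=N$ makes $B:\mathbb{C}^N\to\mathbb{C}^M$ injective, so if $A\bm{v}=\lambda\bm{v}$ with $\bm{v}\neq0$ then $H(B\bm{v})=BACB\bm{v}=BA\bm{v}=\lambda B\bm{v}$ with $B\bm{v}\neq0$, giving $\sigma_p(A)\subseteq\sigma_p(H)\subset i\mathbb{R}$. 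Second, $W:=\mathrm{Im}\,B$ is $H$-invariant, since $H(B\bm{v})=BA\bm{v}\in W$ for all $\bm{v}$; using the standard fact (provable by applying Lagrange interpolation polynomials in $H$, or via minimal polynomials) that the restriction of a diagonalizable operator to an invariant subspace is diagonalizable, $H|_W$ is diagonalizable. Finally, $B$ restricts to a linear isomorphism $\mathbb{C}^N\to W$ intertwining $A$ and $H|_W$ — indeed $B(A\bm{v})=(H|_W)(B\bm{v})$ for every $\bm{v}$ — so $A$ is conjugate to $H|_W$ and therefore diagonalizable. Together with the spectral inclusion, $A$ is pure imaginary diagonalizable.

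I expect no serious obstacle here: the only points needing care are bookkeeping ones arising from $B$ and $C$ being rectangular — namely that $(B,C)$ need not be the canonical pseudo-inverse pair, that $B\bm{v}\neq0$ and that $\mathrm{Im}\,B$ is genuinely $H$-invariant — together with a clean invocation of the lemma that invariant subspaces of diagonalizable maps inherit diagonalizability. If anything is the "hard part," it is organizing these transfers so that both the spectrum condition and the diagonalizability condition are obtained simultaneously from the single identity $A=CHB$, $H=BAC$, $CB=I_N$.
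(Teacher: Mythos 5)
Your proof is correct, but it takes a genuinely different route from the paper's. The paper proves the theorem through a chain of equivalences: it first shows (Proposition \ref{prop:BACB-anti-hermitian}) that an arbitrary admissible $C$ can be replaced by the pseudo-inverse $C_B$, then (Proposition \ref{prop:BdBA-anti-hermitian}) that $BAC_B$ being anti-hermitian is equivalent to $B^\dagger BA$ being anti-hermitian, then identifies the set of matrices $B^\dagger B$ with the positive matrices (Lemma \ref{lem:complex-positive}), and finally characterizes, via polar decomposition of the diagonalizing matrix $P$, when there exists $D>0$ with $DA$ anti-hermitian (Proposition \ref{prop:complex-A-diagonalizable-pure-imaginary}). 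You bypass all of this: for sufficiency you use the explicit square choice $B=P^{-1}$, $C=P$ (legitimate, since $M$ is existentially quantified and a diagonal matrix with spectrum in $i\mathbb{R}$ is anti-hermitian), and for necessity you exploit the intertwining relation $HB=BA$ with $H=BAC$ to transfer the spectrum ($Bv\neq 0$ by injectivity of $B$) and diagonalizability ($\mathrm{Im}\,B$ is $H$-invariant and restrictions of diagonalizable operators to invariant subspaces are diagonalizable, e.g.\ by the minimal-polynomial argument). Your argument is shorter and more elementary; what the paper's longer route buys is a collection of intermediate results that are reused elsewhere — the reduction to $C_B$ matters for the algorithmic setup in Problem \ref{prob:hamiltonian-simulation-Carleman}, and the ``$\exists D>0$ with $DA$ anti-hermitian'' characterization is the workhorse for the Koopman--von Neumann analogue (Proposition \ref{prop:real-A-diagonalizable-pure-imaginary}) and for the quadratic-Hamiltonian results in Section \ref{section:classical-quadratic-hamiltonian-dynamics}. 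If you wanted your sufficiency direction to also exhibit a $B$ with $BAC_B$ anti-hermitian for a prescribed larger $M$, you would still need something like the paper's positive-matrix construction.
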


\begin{proof}
    We prove this in Appendix \ref{appendix-sub:carleman-thm}.
\end{proof}

Specifically, 
we can take $C$ to be $C_B = (B^{\dag}B)^{-1}B^{\dag}$, i.e., Moore–Penrose pseudo-inverse of $B$, where $C_B B= I_N$ as shown in Lemma~\ref{lem:property-CB}.

We remember how to solve a linear differential equation $\frac{d}{dt} \bm{x}(t) = A \bm{x}(t)$, where $A \in \mathbb{R}^{N \times N}$.
If $A$ satisfies Theorem \ref{thm:carleman-schrodinger}, the solution $\bm{x}$ can be represented as $\bm{x}(t)=P^{-1} \exp(\Lambda t) P \bm{x}(0) $, where the regular matrix $P \in \mathbb{C}^{N \times N}$ such that $PAP^{-1}$ is a diagonal matrix and $\Lambda \coloneqq PAP^{-1}$.
$\Lambda$ is the diagonal matrix whose diagonal components are in $i \mathbb{R}$, so $\exp(\Lambda t)$ is an unitary matrix.
That is,
the one-way reversible linear transformation
change the basis (or classical variables) so that 
the time evolution can be given as a unitary transformation, i.e.,
the Schr\"{o}dinger equation.
Since the new variables $\bm{y}$ can be regarded as complex amplitudes of 
quantum state with an appropriate normalization, 
we call this Carleman embedding of the linear case and $B \in \mathbb{C}^{M \times N}$ is a Carleman transforming matrix of $A$ here when $BAC_B$ is an anti-hermitian matrix.

Note that $A$ itself is not an anti-hermitian matrix in general.
For example, as in Ref.~\cite{babbush_exponential_2023}, we can transform 
Hamilton's canonical equation of the coupled harmonic oscillator into the Schr\"{o}dinger equation as follows:
\begin{equation} \label{eq:coupled-harmonic-oscillators-schrodinger}
    \frac{d}{dt}
    \left(
    B
    \begin{pmatrix}
        x_1(t) \\
        \vdots \\
        x_N(t) \\
        \dot{x}_1(t) \\
        \vdots \\
        \dot{x}_N(t)
    \end{pmatrix}
    \right)
    =
    BAC_{B}
    \left(
    B
    \begin{pmatrix}
        x_1(t) \\
        \vdots \\
        x_N(t) \\
        \dot{x}_1(t) \\
        \vdots \\
        \dot{x}_N(t)
    \end{pmatrix}
    \right),
\end{equation}
where $A \in \mathbb{R}^{2N \times 2N}$ satisfies $\frac{d}{dt} (x_1(t), \dots, x_N(t), \dot{x}_1(t), \dots, \dot{x}_N(t))^\top = A (x_1(t), \dots, x_N(t), \dot{x}_1(t), \dots, \dot{x}_N(t))^\top$,
$B \in \mathbb{C}^{N(N+3)/2 \times 2N}$ satisfies 
\begin{align} \label{eq:quantum-speedup-carleman-B}
B
\begin{pmatrix}
    x_1(t) \\
    \vdots \\
    x_N(t) \\
    \dot{x}_1(t) \\
    \vdots \\
    \dot{x}_N(t)
\end{pmatrix}
=
\begin{pmatrix}
    \sqrt{m_1} \dot{x}_1(t) \\
    \vdots \\
    \sqrt{m_N} \dot{x}_N(t) \\
    i \sqrt{\kappa_{11}} x_1(t) \\
    \vdots \\
    i \sqrt{\kappa_{NN}} x_N(t) \\
    i \sqrt{\kappa_{12}} \left( x_1(t)-x_2(t) \right) \\
    \vdots \\
    i \sqrt{\kappa_{N-1 \, N}} \left( x_{N-1}(t)-x_N(t) \right)
\end{pmatrix},
\end{align}
$C_{B} \in \mathbb{C}^{2N \times N(N+3)/2}$ is defined in Sec. \ref{section:preliminary},
$m_j >0 \ (j \in \{1, \dots, N\})$ is the point positive mass, and $\kappa_{jk} \ (j,k \in \{1, \dots, N\})$ is the spring constant satisfying $\kappa_{jk}=\kappa_{kj} \ge 0$.
$B A C_{B}$ becomes an anti-hermitian matrix, and this $B$ is the example of 
a Carleman transforming matrix of $A$.

Now we consider the following problem to solve a linear differential equation defined by a pure imaginary diagonalizable matrix $A$,
as in Ref.~\cite{babbush_exponential_2023}.
\begin{prob} \label{prob:hamiltonian-simulation-Carleman}
    Let $A$ be pure imaginary diagonalizable and $B$ be a Carleman transforming matrix of $A$.
    Let $\bm{x}(t)$ be a solution of $\frac{d}{dt} \bm{x} (t) = A \bm{x}(t)$ and define the normalized state
    \begin{align}
        \ket{\psi(t)} = \frac{1}{ \lVert B \bm{x} (t) \rVert} B \bm{x}(t).
    \end{align}
    Let observable $O \in \mathbb{C}^{M \times M}$ be a hermitian matrix.
    Assume we are given $(\alpha_{\rm Car}, a_{\rm Car}, 0)$-block encoding $U_{H_{\rm Car}}$ of $H_{\rm Car} \coloneqq iBAC_B$, 
    $(\beta_{\rm Car}, b_{\rm Car}, 0)$-block encoding $U_O$ of $O$,
    and oracle access to a unitary $U_{\rm ini}$ that prepares the initial state, i.e., $U_{\rm ini} \ket{0} \mapsto \ket{\psi(0)}$.
    Given $t \ge 0$ and $\varepsilon > 0$, the goal is to estimate a expectation value that is $\varepsilon$-close to $\braket{\psi(t) | O | \psi(t)}$.
\end{prob}

\begin{theorem} \label{thm:hamiltonian-simulation-Carleman}
    Let $\delta >0$.
    Problem \ref{prob:hamiltonian-simulation-Carleman} can be solved with probability at least $1-\delta$ by a quantum algorithm that makes 
    $\mathcal{O} (\beta_{\rm Car} \log(1/\delta)/\varepsilon)$ uses of $U_O$, $U_O^\dagger$, controlled-$U_O$, controlled-$U_O^\dagger$,
    $U_{\rm ini}$, $U_{\rm ini}^\dagger$ and controlled-$U_{H_{\rm Car}}$ or its inverse, and
    $\mathcal{O} \left( \frac{\beta_{\rm Car} \log(1/\delta)}{\varepsilon} \left(\alpha_{\rm Car} t + \frac{\log (\beta_{\rm Car}/\varepsilon)}{\log (e+\log (\beta_{\rm Car}/\varepsilon)/\alpha_{\rm Car} t)} \right) \right)$ uses of $U_{H_{\rm Car}}$ or its inverse,
    with $\mathcal{O} (\log M)$ qubits.
\end{theorem}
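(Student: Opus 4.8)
The argument composes three standard ingredients, following the strategy of Ref.~\cite{babbush_exponential_2023}: optimal block-encoded Hamiltonian simulation to approximately prepare the evolved state $\ket{\psi(t)}$, amplitude estimation against the block encoding $U_O$ to read off $\braket{\psi(t)|O|\psi(t)}$, and a median-of-means step to reach confidence $1-\delta$. First I would record that $H_{\rm Car}=iBAC_B$ is Hermitian---this is precisely Theorem~\ref{thm:carleman-schrodinger} together with the definition of $H_{\rm Car}$---and that $\frac{d}{dt}\bigl(B\bm{x}(t)\bigr) = BAC_B\,\bigl(B\bm{x}(t)\bigr) = -iH_{\rm Car}\bigl(B\bm{x}(t)\bigr)$, so $B\bm{x}(t) = e^{-iH_{\rm Car}t}B\bm{x}(0)$ and $\lVert B\bm{x}(t)\rVert$ is independent of $t$. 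Hence $\ket{\psi(t)} = e^{-iH_{\rm Car}t}\ket{\psi(0)}$ exactly, and it is enough to act with an approximation of $e^{-iH_{\rm Car}t}$ on $U_{\rm ini}\ket{0}$.

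For the simulation step, apply optimal block-encoded Hamiltonian simulation (quantum signal processing / qubitization, \cite{low_hamiltonian_2019, gilyen_quantum_2019}) to the $(\alpha_{\rm Car}, a_{\rm Car}, 0)$-block encoding $U_{H_{\rm Car}}$ to produce a unitary $\widetilde U(t)$ with $\lVert \widetilde U(t) - e^{-iH_{\rm Car}t}\rVert \le \varepsilon'$ using
\begin{equation*}
\mathcal{O}\!\left(\alpha_{\rm Car}t + \frac{\log(1/\varepsilon')}{\log\!\bigl(e + \log(1/\varepsilon')/(\alpha_{\rm Car}t)\bigr)}\right)
\end{equation*}
calls to $U_{H_{\rm Car}}$ and $U_{H_{\rm Car}}^\dagger$ (only $\mathcal{O}(1)$ of which need to be controlled, for phase bookkeeping) on $\mathcal{O}(1)$ extra qubits. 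Since $U_O$ block-encodes $O$ we have $\lVert O\rVert \le \beta_{\rm Car}$, hence $\ket{\widetilde\psi(t)}\coloneqq \widetilde U(t)U_{\rm ini}\ket{0}$ satisfies $\bigl|\braket{\widetilde\psi(t)|O|\widetilde\psi(t)} - \braket{\psi(t)|O|\psi(t)}\bigr| \le 2\beta_{\rm Car}\varepsilon'$; moreover the simulation error accumulates at most linearly over the amplitude-estimation iterations below, so choosing $\varepsilon' = \Theta(\varepsilon^2/\beta_{\rm Car}^2)$ holds the induced error in the final estimate to $\mathcal{O}(\varepsilon)$. Since $\log(1/\varepsilon') = \Theta(\log(\beta_{\rm Car}/\varepsilon))$, this is the origin of the $\log(\beta_{\rm Car}/\varepsilon)$ term inside the stated $U_{H_{\rm Car}}$ count.

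For the estimation step, let $V\coloneqq \widetilde U(t)U_{\rm ini}$ prepare $\ket{\widetilde\psi(t)}$ and observe that $\braket{\widetilde\psi(t)|O|\widetilde\psi(t)}$ equals $\beta_{\rm Car}$ times the amplitude for the composed unitary $W\coloneqq (I_{b_{\rm Car}}\otimes V^\dagger)\,U_O\,(I_{b_{\rm Car}}\otimes V)$---which is a $(\beta_{\rm Car}, b_{\rm Car}, 0)$-block encoding of $V^\dagger O V$---to map the all-zero state to itself; this amplitude is real because $O$ is Hermitian. A control-free amplitude-estimation variant (maximum-likelihood or iterative) applied to a Hadamard-test encoding of this amplitude estimates it to additive precision $\Theta(\varepsilon/\beta_{\rm Car})$ with $\mathcal{O}(\beta_{\rm Car}/\varepsilon)$ applications of $W$, $W^\dagger$ and a reflection about the all-zero state. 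Two observations keep the controls cheap: (i) a controlled $W$ can be realized as $(I_{b_{\rm Car}}\otimes V^\dagger)$, then controlled-$U_O$, then $(I_{b_{\rm Car}}\otimes V)$, because $V$ commutes with the control register---so $\widetilde U(t)$, and hence $U_{H_{\rm Car}}$, is applied uncontrolled apart from the $\mathcal{O}(1)$ controlled uses internal to the circuit of $\widetilde U(t)$; and (ii) the control-free estimation variant applies the Grover iterate without control. Thus one iteration uses $\mathcal{O}(1)$ copies each of $U_O$, $U_O^\dagger$, controlled-$U_O$, controlled-$U_O^\dagger$, $U_{\rm ini}$, $U_{\rm ini}^\dagger$ and controlled-$U_{H_{\rm Car}}$, plus the $\mathcal{O}(\alpha_{\rm Car}t+\cdots)$ uncontrolled copies of $U_{H_{\rm Car}}$ from the simulation step. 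Finally, a median over $\mathcal{O}(\log(1/\delta))$ independent repetitions lifts the constant amplitude-estimation success probability to $\ge 1-\delta$ by a Chernoff bound, multiplying every count by $\log(1/\delta)$ and yielding exactly the stated complexities. The workspace consists of the $n=\mathcal{O}(\log M)$ system qubits, the $a_{\rm Car}, b_{\rm Car}=\mathcal{O}(\log M)$ block-encoding ancillas, and $\mathcal{O}(1)$ further qubits for the estimation machinery, i.e. $\mathcal{O}(\log M)$ in total.

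The main obstacle I anticipate is the bookkeeping of the last paragraph: making rigorous that only $\mathcal{O}(1)$ \emph{controlled} applications of $U_{H_{\rm Car}}$ per amplitude-estimation iteration are required---rather than a controlled copy of the entire Hamiltonian-simulation subroutine---and verifying that amplitude estimation stays correct when run with the approximate preparation unitary $V$, so that the $\varepsilon'$-error of $\widetilde U(t)$ propagates as asserted (in particular that the compounding is only linear in the number of iterations). A secondary, essentially citational point is reproducing the sharp Low--Chuang cost with $\log(e + \log(1/\varepsilon')/(\alpha_{\rm Car}t))$ in the denominator rather than the coarser $\alpha_{\rm Car}t + \log(1/\varepsilon')/\log\log(1/\varepsilon')$ form.
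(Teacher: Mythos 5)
Your proposal is correct and follows essentially the same route as the paper: optimal block-encoded Hamiltonian simulation of $e^{-itH_{\rm Car}}$ via Ref.~\cite{gilyen_quantum_2019}, followed by amplitude estimation against $U_O$ with $\log(1/\delta)$ confidence amplification (the paper cites the high-confidence estimator of Ref.~\cite{knill_optimal_2007} where you use a median of repetitions, which is equivalent for the stated counts). Your worry about simulation error compounding over amplitude-estimation iterations is unnecessary---the estimator targets the amplitude of the \emph{actual} circuit, so the simulation error enters only once as an additive offset, which is why the paper's milder choice $\varepsilon'=\varepsilon/4\beta_{\rm Car}$ already suffices; your more stringent $\varepsilon'=\Theta(\varepsilon^2/\beta_{\rm Car}^2)$ changes nothing asymptotically.
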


\begin{proof}
    From Theorem 58 and Corollary 60 in Ref.~\cite{gilyen_quantum_2019}, we can implement an $(1,a_{\rm Car}+2, \varepsilon/4 \beta)$-block encoding $V$ of $e^{-it H_{\rm Car}}=e^{BAC_B t}$ with $\mathcal{O} \left(\alpha_{\rm Car} t + \frac{\log (\beta_{\rm Car}/\varepsilon)}{\log (e+\log (\beta_{\rm Car}/\varepsilon)/\alpha_{\rm Car} t)} \right)$ uses of $U_{H_{\rm Car}}$ or its inverse, 
    3 uses of controlled-$U_{H_{\rm Car}}$ or its inverse.
    Then, $\ket{\phi(t)}= (\bra{0}^{\otimes a_{\rm Car}+2} V \ket{0}^{\otimes a_{\rm Car}+2}) U_{\rm ini} \ket{0}$ is $\varepsilon/4\beta_{\rm Car}$-close to $\ket{\psi(t)}$.
    Since $\lVert O \rVert \le \beta_{\rm Car}$ holds by the definition of $U_O$ in Problem \ref{prob:hamiltonian-simulation-Carleman},
    then
    \begin{align}
        &\quad \left\lVert  \braket{\psi(t) | O | \psi(t)} - \beta (\bra{0}^{\otimes b_{\rm Car}} \bra{\phi (t)})  U_O (\ket{0}^{\otimes b_{\rm Car}} \ket{\phi (t)})  \right\rVert \\
        &= \left\lVert \braket{\psi (t) | O | \psi(t)} - \braket{\phi (t) | O | \phi(t)}  \right\rVert \\
        &\le \left\lVert \braket{\psi (t) | O | \psi(t)} - \braket{\psi (t) | O | \phi(t)} \right\rVert + \left\lVert \braket{\psi (t) | O | \phi(t)} - \braket{\phi (t) | O | \phi(t)} \right\rVert \\
        &\le \lVert O \rVert \lVert \ket{\psi (t)} - \ket{\phi(t)} \rVert \lVert \ket{\psi(t)} \rVert + \lVert O \rVert \lVert \ket{\psi (t)} - \ket{\phi(t)} \rVert \lVert \ket{\phi(t)} \rVert \\
        &\le 2 \cdot \beta_{\rm Car} \cdot (\varepsilon/4\beta_{\rm Car}) = \varepsilon/2
    \end{align}
    holds.
    If we can estimate $(\bra{0}^{\otimes b_{\rm Car}} \bra{\phi (t)}) U_O (\ket{0}^{\otimes b_{\rm Car}} \ket{\phi (t)})$ with additive error at most $\varepsilon/2 \beta_{\rm Car}$ and error probability $\delta$, we gain a state that is $\varepsilon$-close to $\braket{\psi(t) | O | \psi(t)}$.
    A method based on high-confidence amplitude estimation \cite{knill_optimal_2007} provides a quantum algorithm for estimating the $(\bra{0}^{\otimes b_{\rm Car}} \bra{\phi (t)}) U_O (\ket{0}^{\otimes b_{\rm Car}} \ket{\phi (t)})$ that makes $\mathcal{O} (\beta_{\rm Car} \log(1/\delta)/\varepsilon)$ uses of $U_O$ and its inverse, $\mathcal{O} (\beta_{\rm Car} \log(1/\delta)/\varepsilon)$ uses of conrolled-$U_O$ and its inverse, and $\mathcal{O} (\beta_{\rm Car} \log(1/\delta)/\varepsilon)$ uses of the quantum circuit that prepares $\ket{\phi(t)}$ or its inverse.
    
    That is, Problem \ref{prob:hamiltonian-simulation-Carleman} can be solved with probability at least $1-\delta$ by a quantum algorithm that makes 
    $\mathcal{O} (\beta_{\rm Car} \log(1/\delta)/\varepsilon)$ uses of $U_O$, $U_O^\dagger$, controlled-$U_O$, controlled-$U_O^\dagger$,
    $U_{\rm ini}$, $U_{\rm ini}^\dagger$ and controlled-$U_{H_{\rm Car}}$ or its inverse, and
    \begin{align*}
    \mathcal{O} \left( \frac{\beta_{\rm Car} \log(1/\delta)}{\varepsilon} \left(\alpha_{\rm Car} t + \frac{\log (\beta_{\rm Car}/\varepsilon)}{\log (e+\log (\beta_{\rm Car}/\varepsilon)/\alpha_{\rm Car} t)} \right) \right)
    \end{align*}
    uses of $U_{H_{\rm Car}}$ or its inverse.
\end{proof}

\section{Koopman-von Neumann Embedding for Linear Differential Equations} \label{section:Koopman-von Neumann}
Next, we apply Koopman-von Neumann embedding for a linear differential equation $\frac{d}{dt} \bm{x}(t) = A \bm{x}(t)$.
Specifically, we employ the Hermite polynomial $\{H_n(x)\}_{n=0}^\infty$ as an orthogonal polynomial sequence for Koopman-von Neumann embedding, for simplicity.
In the Koopman-von Neumann embedding,
we define a position eigenstate:
\begin{align}
    \hat{x}_j  \ket{ \bm{x}   } = x_j \ket{ \bm{x}   },
\end{align}
where $\hat{x}_j$ is a $j$-th position operator, and a mapped Hamiltonian
\begin{align} \label{eq:mapped-hamiltonian}
    \hat{H} = \sum_{j=1}^N \frac{1}{2} \left( \hat{k}_j \left( \sum_{l=1}^N a_{jl}\hat{x}_l \right) +  \left( \sum_{l=1}^N a_{jl}\hat{x}_l \right) \hat{k}_j \right),
\end{align}
where $a_{jl}$ is the $(j,l)$ component of $A$ and 
$\hat{k}_j$ is a $j$-th momentum operator satiftying $[\hat{x}_l, \hat{k}_j]=i \delta_{l,j}$.
The relation between classical variables and quantum states is as follows:
\begin{align}
    \left( \bigotimes_{j=1}^N \bra{n_j} \right) \ket{\bm{x}} = \prod_{j=1}^N w(x_j)^{1/2} H_{n_j} (x_j),
\end{align}
where $\bigotimes_{j=1}^N \ket{n_j}$ is a tensor product of the eigenvector of the number operators and $w(x)$ is the weight function of the Hermite polynomial.
Since the position eigenstate, which is a continuous variable quantum state, is employed,
the mapped Hamiltonian and Schr\"{o}dinger equation $i \frac{d}{dt} \ket{\bm{x}} = \hat{H} \ket{\bm{x}}$ have infinite dimensions
even if the original differential equation is linear and finite dimensional.
Fortunately, when the Hamiltonian $\hat{H}$ preserves the total fock number, $\hat{H}$ can be block diagonalizable under the fock basis, so we focus on the subspace with the total fock number one.
Furthermore, if $\mathrm{div} (A \bm{x})=0$, i.e., $\mathrm{tr} A =0$ holds, that is the weight function is constant over time and the norm of $\ket{\bm{x}}$ does not decay as described in Appendix \ref{appendix:koopman}, the Schr\"{o}dinger equation $i \frac{d}{dt} \ket{\bm{x}} = \hat{H} \ket{\bm{x}}$ restricted in the space with the total fock number one is equivalent to the original linear differential equation $\frac{d}{dt} \bm{x}(t) = A \bm{x}(t)$.

Similarly to Sec. \ref{section:carleman-linear},
we consider the cases where linear differential equations can be transformed,  by a one-way reversible linear transformation, into the Schr\"{o}dinger equation through the Koopman-von Neumann embedding. 
Suppose we have a one-way reversible linear transformation
$B \in \mathbb{C}^{M \times N} \ \mathrm{with} \ \mathrm{rank} B =N \ (M,N \in \mathbb{Z}_{>0} \ \mathrm{with} \ M \ge N)$ and $C \in \mathbb{C}^{N \times M}$ satisfying $CB = I_N$, as in Sec. \ref{section:carleman-linear}.
We define $\bm{y}(t):= (y_1(t), \dots, y_M(t))^\top = B (x_1(t), \dots, x_N(t))^\top$, then obtain a linear differential equation with respect to $\bm{y}(t)$:
\begin{equation*}
    \frac{d}{dt} 
    \begin{pmatrix}
        y_1(t) \\
        \vdots \\
        y_M(t) 
    \end{pmatrix}
    =
    BAC 
    \begin{pmatrix}
        y_1(t) \\
        \vdots \\
        y_M(t) 
    \end{pmatrix}.
\end{equation*}
For the mapped Hamiltonian \eqref{eq:mapped-hamiltonian} being hermitian, we assume $BAC$ is a real matrix.
We also assume $B$ and $C$ are a real matrix so that $BAC$ must be a real matrix, for simplicity.
To preserve the norm of the mapped position state, we have to impose $\mathrm{div} (BAC \bm{y} )=0$, i.e., $\mathrm{tr} (BAC) =0$.
The problem we want to solve here is to clarify what is the necessary and sufficient condition on $A$ which $BAC$ is real anti-hermitian (i.e., real anti-symmetric) and $\mathrm{tr} (BAC) =0$ and also the mapped Hamiltonian preserves the total fock number,
and hence the linear differential equation $\bm{y}$ becomes the Schr{\"o}dinger equation which can be block diagonalizable under the fock basis. 

\begin{theorem} \label{thm:koopman-linear-schrodinger-condition}
    For a given linear differential equation $\frac{d}{dt} \bm{x}(t) = A \bm{x} (t)$ with any initial value $\bm{x} (0) \in \mathbb{R}^N$, where $A \in \mathbb{R}^{N \times N}$,
    there exists $(B, C) \in \mathbb{R}^{M \times N} \times \mathbb{R}^{N \times M} \ \mathrm{with} \ \mathrm{rank} B = N \ \mathrm{and} \ CB=I_N$ such that 
    a linear differential equation for $\bm{y}(t):=B\bm{x}(t)$, 
    $\frac{d}{dt} \bm{y}(t)  = BAC \bm{y}(t)$ satisfies the mapped Hamiltonian preserves the total fock number and $\mathrm{div} (BAC \bm{y})=0$, i.e., $\mathrm{tr} (BAC) =0$ holds
    iff $A$ is pure imaginary diagonalizable.
\end{theorem}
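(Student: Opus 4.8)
The plan is to reduce Theorem~\ref{thm:koopman-linear-schrodinger-condition} to a purely linear-algebraic statement about $BAC$ and then to re-run, with the extra reality constraint, the argument behind Theorem~\ref{thm:carleman-schrodinger}. First I would show that for real $B,C$ (so that $BAC\in\mathbb{R}^{M\times M}$) the mapped Hamiltonian \eqref{eq:mapped-hamiltonian} built from $BAC$ preserves the total Fock number \emph{if and only if} $BAC$ is anti-symmetric, and that in that case the condition $\operatorname{tr}(BAC)=0$ is automatic, since the diagonal of a real anti-symmetric matrix vanishes and $\operatorname{div}(BAC\bm{y})=\operatorname{tr}(BAC)$. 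The ``only if'' half here is the computational core: writing $\hat{x}_j,\hat{k}_j$ in terms of ladder operators $a_j,a_j^{\dagger}$, the Fock-number--violating part of $\hat{H}$ is proportional to $\sum_{j,l}(BAC)_{jl}\,a_j a_l$ minus its Hermitian conjugate; because $a_j a_l=a_l a_j$, only the symmetric part of $BAC$ survives, so this operator vanishes exactly when the symmetric part of $BAC$ is zero (using linear independence of $\{a_j a_l\}_{j\le l}$). Granting this, the theorem becomes: there exist real $B\in\mathbb{R}^{M\times N}$ with $\operatorname{rank}B=N$ and real $C$ with $CB=I_N$ such that $BAC$ is real anti-symmetric, iff $A$ is pure imaginary diagonalizable.

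For the ``if'' direction I would invoke the real canonical form. A real matrix $A$ that is diagonalizable over $\mathbb{C}$ with $\sigma_p(A)\subset i\mathbb{R}$ is similar \emph{over $\mathbb{R}$} to a block-diagonal matrix $J=S^{-1}AS$ ($S\in\mathbb{R}^{N\times N}$ regular) built from $2\times2$ blocks $\left(\begin{smallmatrix}0&\omega_k\\-\omega_k&0\end{smallmatrix}\right)$ together with a zero block; each block, hence $J$, is real anti-symmetric. Taking $M=N$, $B=S^{-1}$, $C=S$ (one checks $C=C_B$ in this case, $CB=I_N$, $\operatorname{rank}B=N$) gives $BAC=J$ as required; if one insists on $M>N$ for consistency with the coupled-oscillator example, one pads $B$ with additional rows while preserving anti-symmetry of $BAC$.

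For the ``only if'' direction, assume real $(B,C)$ as above exist with $BAC$ real anti-symmetric. A real anti-symmetric matrix is normal, hence diagonalizable over $\mathbb{C}$ with spectrum in $i\mathbb{R}$. Eigenvalues transfer from $A$ to $BAC$: if $A\bm{v}=\lambda\bm{v}$ with $\bm{v}\neq 0$ then $BAC(B\bm{v})=BA(CB)\bm{v}=\lambda B\bm{v}$, and $B\bm{v}\neq 0$ since $B$ is injective, so $\sigma_p(A)\subseteq\sigma_p(BAC)\subset i\mathbb{R}$. For diagonalizability, note $\operatorname{Im}B$ is $BAC$-invariant (indeed $\operatorname{Im}(BAC)\subseteq\operatorname{Im}B$), that $B$ is an isomorphism $\mathbb{C}^N\to\operatorname{Im}B$ with inverse $C|_{\operatorname{Im}B}$, and that $(BAC)|_{\operatorname{Im}B}=B\,A\,C|_{\operatorname{Im}B}$ is therefore conjugate to $A$; since a diagonalizable operator restricts diagonalizably to any invariant subspace (its minimal polynomial has distinct roots, and that of the restriction divides it), $A$ is diagonalizable. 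Hence $A$ is pure imaginary diagonalizable. This mirrors the proof of Theorem~\ref{thm:carleman-schrodinger} and makes transparent why the Carleman and Koopman--von Neumann conditions coincide.

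I expect the main obstacle to be the first step: carefully establishing that total-Fock-number preservation is \emph{equivalent} to (not merely implied by) anti-symmetry of $BAC$ through the ladder-operator bookkeeping, together with the need to realize the change of variables with \emph{real} $B,C$ (which is precisely what the real canonical form supplies). The surrounding linear algebra is essentially identical to the Carleman case.
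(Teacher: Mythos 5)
Your proposal is correct, but it departs from the paper's proof in two substantive ways, both worth noting. First, you establish the equivalence between Fock-number preservation and anti-symmetry of $BAC$ purely algebraically in both directions, by isolating the number-violating part of $\hat{H}$ as $\tfrac{i}{2}\sum_{j,l}(BAC)_{jl}(a_j^\dagger a_l^\dagger - a_j a_l)$ and observing that only the symmetric part of $BAC$ survives the symmetry $a_ja_l=a_la_j$; the paper instead proves the ``preservation $\Rightarrow$ anti-symmetry'' direction dynamically (Proposition~\ref{prop:koopman-preserving}), tracking norm conservation of the one-particle wavefunction, invoking continuity and the ``any initial value'' hypothesis to extract $BA=-i\hat{H}^{(1)}B$, and then using $\braket{y|BAC_B|y}\in\mathbb{R}\cap i\mathbb{R}$ together with Lemma~\ref{lem:anti-symmetric}. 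Your version is cleaner and does not need the initial-value quantifier. Second, for the equivalence with pure imaginary diagonalizability you argue directly: the ``if'' direction via the real canonical form ($A$ real-similar to a direct sum of $2\times 2$ rotation generators and a zero block, taking $B=S^{-1}$, $C=S$), and the ``only if'' via eigenvalue transfer plus the fact that a diagonalizable operator restricts diagonalizably to the invariant subspace $\mathrm{Im}\,B$. The paper instead factors through the chain $BAC_B$ anti-symmetric $\Leftrightarrow$ $B^\top BA$ anti-symmetric $\Leftrightarrow$ $\exists D>0$ real with $DA$ anti-symmetric $\Leftrightarrow$ pure imaginary diagonalizable (Propositions~\ref{prop:tBBA-anti-symmetric}, \ref{prop:DA-anti-symmetric}, \ref{prop:real-A-diagonalizable-pure-imaginary}), reusing the complex Carleman result by taking real parts of $B^\dagger B$ (Lemma~\ref{lem:Re-BdB}). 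What the paper's longer route buys is a collection of intermediate statements ($BAC_B$ versus $B^\top BA$, the realification lemma, Corollary~\ref{cor:koopman-to-carleman}) that are reused in Sections~\ref{section:relation-carleman-koopman}--\ref{section:coupled-harmonic-oscillators}; what your route buys is a shorter, more self-contained proof whose ``if'' direction makes the geometric content (real rotation blocks) explicit. Your reduction to general $C$ rather than $C_B$ is harmless, since your ``only if'' argument uses only $CB=I_N$ and $\mathrm{Im}(BAC)\subseteq\mathrm{Im}\,B$.
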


\begin{proof}
    We prove in Appendix \ref{appendix-sub:koopman}.
\end{proof}

Similarly to  Sec. \ref{section:carleman-linear},
we can take $C$ to be $C_B :=(B^{\dag}B)^{-1}B^{\dag}$ for a given 
one-way reversible linear transformation $B$.
We here call such a transformation $B \in \mathbb{R}^{M \times N}$ 
that results in  a real anti-symmetric matrix $BAC_B$ 
via the Koopman-von Neumann embedding, a Koopman-von Neumann transforming matrix of $A$.

Now we consider the following problem to solve a linear differential equation defined by a pure imaginary diagonalizable matrix $A$, using Koopman-von Neumann embedding, as in Sec. \ref{section:carleman-linear}.
\begin{prob} \label{prob:hamiltonian-simulation-Koopman-von Neumann}
    Let $A$ be pure imaginary diagonalizable and $B$ be a Koopman-von Neumann transforming matrix of $A$.
    Let $\bm{x}(t)$ be a solution of $\frac{d}{dt} \bm{x} (t) = A \bm{x}(t)$ and define the normalized state
    \begin{align}
        \ket{\psi(t)} = \frac{1}{ \lVert B \bm{x} (t) \rVert} B \bm{x}(t).
    \end{align}
    Let observable $O \in \mathbb{C}^{M \times M}$ be a hermitian matrix.
    Assume we are given $(\alpha_{\rm Koo}, a_{\rm Koo}, 0)$-block encoding $U_{H_{\rm Koo}}$ of $H_{\rm Koo} \coloneqq iBAC_B$, 
    $(\beta_{\rm Koo}, b_{\rm Koo}, 0)$-block encoding $U_O$ of $O$,
    and oracle access to a unitary $U_{\rm ini}$ that prepares the initial state, i.e., $U_{\rm ini} \ket{0} \mapsto \ket{\psi(0)}$.
    Given $t \ge 0$ and $\varepsilon > 0$, the goal is to estimate a expectation value that is $\varepsilon$-close to $\braket{\psi(t) | O | \psi(t)}$.
\end{prob}

From Corollary \ref{cor:koopman-to-carleman} in Appendix \ref{appendix-sub:koopman}, if $B$ is a Koopman-von Neumann transforming matrix of pure imaginary diagonalizable $A$, then $B$ is a Carleman transforming matrix of $A$.
Thus, we can solve Problem \ref{prob:hamiltonian-simulation-Koopman-von Neumann} in the same way as Problem \ref{prob:hamiltonian-simulation-Carleman}.

\begin{theorem} \label{thm:koopman-simulation-complexity}
    Let $\delta >0$.
    Problem \ref{prob:hamiltonian-simulation-Koopman-von Neumann} can be solved with probability at least $1-\delta$ by a quantum algorithm that makes 
    $\mathcal{O} (\beta_{\rm Koo} \log(1/\delta)/\varepsilon)$ uses of $U_O$, $U_O^\dagger$, controlled-$U_O$, controlled-$U_O^\dagger$,
    $U_{\rm ini}$, $U_{\rm ini}^\dagger$ and controlled-$U_{H_{\rm Koo}}$ or its inverse, and
    $\mathcal{O} \left( \frac{\beta_{\rm Koo} \log(1/\delta)}{\varepsilon} \left(\alpha_{\rm Koo} t + \frac{\log (\beta_{\rm Koo}/\varepsilon)}{\log (e+\log (\beta_{\rm Koo}/\varepsilon)/\alpha_{\rm Koo} t)} \right) \right)$ uses of $U_{H_{\rm Koo}}$ or its inverse,
    with $\mathcal{O} (\log M)$ qubits.
\end{theorem}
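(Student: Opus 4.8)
The plan is to reduce Problem~\ref{prob:hamiltonian-simulation-Koopman-von Neumann} to Problem~\ref{prob:hamiltonian-simulation-Carleman} and then replay the argument of Theorem~\ref{thm:hamiltonian-simulation-Carleman} almost verbatim, with the subscripts $\mathrm{Car}\to\mathrm{Koo}$. The crucial input is Corollary~\ref{cor:koopman-to-carleman}: since $B$ is a Koopman-von Neumann transforming matrix of the pure imaginary diagonalizable matrix $A$, it is also a Carleman transforming matrix of $A$, so $BAC_B$ is anti-hermitian and $H_{\rm Koo}=iBAC_B$ is hermitian. Consequently $e^{-itH_{\rm Koo}}=e^{BAC_B t}$ is unitary, the map $t\mapsto\ket{\psi(t)}=B\bm{x}(t)/\lVert B\bm{x}(t)\rVert$ is generated by the Schr\"{o}dinger equation $i\frac{d}{dt}\ket{\psi(t)}=H_{\rm Koo}\ket{\psi(t)}$, and the normalization $\lVert B\bm{x}(t)\rVert$ is independent of $t$.

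With that established, the first step is to build an approximate block encoding of the time-evolution operator. Feeding the $(\alpha_{\rm Koo},a_{\rm Koo},0)$-block encoding $U_{H_{\rm Koo}}$ into Theorem~58 and Corollary~60 of Ref.~\cite{gilyen_quantum_2019}, one obtains a $(1,a_{\rm Koo}+2,\varepsilon/4\beta_{\rm Koo})$-block encoding $V$ of $e^{-itH_{\rm Koo}}$ using $\mathcal{O}\!\left(\alpha_{\rm Koo}t+\frac{\log(\beta_{\rm Koo}/\varepsilon)}{\log(e+\log(\beta_{\rm Koo}/\varepsilon)/\alpha_{\rm Koo}t)}\right)$ calls to $U_{H_{\rm Koo}}$ or its inverse and $\mathcal{O}(1)$ calls to its controlled version. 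Applying $V$ after $U_{\rm ini}$ and post-selecting the ancillas yields a subnormalized state $\ket{\phi(t)}$ that is $\varepsilon/4\beta_{\rm Koo}$-close to $\ket{\psi(t)}$.

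The second step is the error analysis of the measured expectation value, which is identical to the Carleman case: using $\lVert O\rVert\le\beta_{\rm Koo}$ (from the definition of $U_O$) and applying the triangle inequality twice, the quantity $\beta_{\rm Koo}(\bra{0}^{\otimes b_{\rm Koo}}\bra{\phi(t)})U_O(\ket{0}^{\otimes b_{\rm Koo}}\ket{\phi(t)})$ differs from $\braket{\psi(t)|O|\psi(t)}$ by at most $\varepsilon/2$. It therefore suffices to estimate $(\bra{0}^{\otimes b_{\rm Koo}}\bra{\phi(t)})U_O(\ket{0}^{\otimes b_{\rm Koo}}\ket{\phi(t)})$ to additive error $\varepsilon/2\beta_{\rm Koo}$ with failure probability $\delta$, which high-confidence amplitude estimation \cite{knill_optimal_2007} achieves with $\mathcal{O}(\beta_{\rm Koo}\log(1/\delta)/\varepsilon)$ uses of $U_O$, controlled-$U_O$, their inverses, and of the circuit preparing $\ket{\phi(t)}$ (one use of $U_{\rm ini}$ and one of $V$ each). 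Multiplying the per-preparation cost of $V$ by the $\mathcal{O}(\beta_{\rm Koo}\log(1/\delta)/\varepsilon)$ repetitions gives the claimed $U_{H_{\rm Koo}}$ query count, and collecting the remaining counts yields the statement.

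I do not expect a genuine obstacle here, since the content is inherited wholesale from Theorem~\ref{thm:hamiltonian-simulation-Carleman} once Corollary~\ref{cor:koopman-to-carleman} is invoked; the only step requiring care is confirming that the Koopman-von Neumann hypotheses packaged into ``Koopman-von Neumann transforming matrix'' (reality of $B$ and $C_B$, total-Fock-number preservation, and $\mathrm{tr}(BAC_B)=0$) are precisely what Corollary~\ref{cor:koopman-to-carleman} consumes to deliver anti-hermiticity of $BAC_B$, so that $e^{BAC_B t}$ is literally unitary and $\lVert B\bm{x}(t)\rVert$ is time-independent; everything downstream is then a copy of the earlier proof.
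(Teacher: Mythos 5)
Your proposal is correct and follows exactly the paper's route: the paper likewise invokes Corollary~\ref{cor:koopman-to-carleman} to conclude that a Koopman-von Neumann transforming matrix is a Carleman transforming matrix, and then solves Problem~\ref{prob:hamiltonian-simulation-Koopman-von Neumann} by repeating the proof of Theorem~\ref{thm:hamiltonian-simulation-Carleman} with the subscripts changed. Your write-up is in fact slightly more explicit than the paper's, which leaves the replay of the Carleman argument implicit.
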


We note that when considering the case of $d=2$ in the quantum solvable ODE of Ref.~\cite{tanaka_quantum_2023}, 
quantum solvable ODE becomes a linear differential equation and $H_{\rm Koo}$ becomes a sparse Hamiltonian.
We also note that Ref.~\cite{tanaka_quantum_2023} provides the computational complexity required for the construction of $H_{\rm Koo}$, so by combining it with Theorem \ref{thm:koopman-simulation-complexity}, 
we can obtain a more detailed computational complexity.

So far, we have considered the cases where linear differential equations can be transformed,  by a one-way reversible linear transformation (i.e., Carleman or Koopman-von Neumann transforming matrices), into the Schr\"{o}dinger equation through the Carleman or Koopman-von Neumann embedding. 
Interestingly, as a result of the discussion above, we found that in both cases, the condition for reducing to the Schr\"{o}dinger equation is the same: $A$ must be pure imaginary diagonalizable. 

In contrast to the existing approach to solve nonlinear differential equations using the linear system solver~\cite{liu_efficient_2021},
it is unique that the present approach does not require any information about the condition number or discretization of time of the given linear differential equations. 
Once the Carleman transforming matrix or Koopman-von Neumann transforming matrix is established, and a block encoding of the mapped Hamiltonian is given,
the present approach allows us to simulate the linear differential equation by Hamiltonian simulation.
On the other hand, when the original linear differential equations are sparse, the sparsity is preserved when using the linear system solver to solve linear differential equations ~\cite{liu_efficient_2021}.
However, in our approach, 
it is not guaranteed that the obtained differential equations (Schr{\"o}dinger equation) will also exhibit sparsity, presenting a potential drawback.
In the next section, we will explore the relation between the Carleman and Koopman-von Neumann embeddings and see the Carleman and Koopman-von Neumann transforming matrices can be constructed from each other.

\section{The Relation between Carleman and Koopman-von Neumann Embedding} \label{section:relation-carleman-koopman}
Next, we investigate how a Carleman transforming matrix and a Koopman-von Neumann transforming matrix can be constructed from each other.
Corollary \ref{cor:koopman-to-carleman} in Appendix \ref{appendix-sub:koopman} tells us one of the ways to transform a Koopman-von Neumann transforming matrix into a Carleman transforming matrix.
Then, we search for the conversely way to transform a Carleman transforming matrix to a Koopman-von Neumann transforming matrix.

\begin{theorem} \label{thm:carleman-to-koopman}
    Let $A \in \mathbb{R}^{N \times N}$ be pure imaginary diagonalizable, and $B \in \mathbb{C}^{M \times N}$ be a Carleman transforming matrix of $A$.
    Then,
    $
    B^\prime \coloneqq 
    \begin{pmatrix}
        \mathfrak{Re} B \\
        \mathfrak{Im} B
    \end{pmatrix}
    \in \mathbb{R}^{2M \times N}$
    is a Koopman-von Neumann transforming matrix. 
\end{theorem}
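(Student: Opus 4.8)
The plan is to translate the condition ``$BAC_B$ is anti-hermitian'' into a condition on the Gram matrix $B^\dagger B$ alone, and then to combine it with the single observation that the (entrywise) real part of $B^\dagger B$ coincides with $(B')^\top B'$. I would begin with the easy part, the rank. If $B'v=0$ for some $v\in\mathbb{C}^N$, then, reading off the two stacked blocks separately, $(\mathfrak{Re}B)v=0$ and $(\mathfrak{Im}B)v=0$, hence $Bv=(\mathfrak{Re}B)v+i(\mathfrak{Im}B)v=0$, so $v=0$ because $\mathrm{rank}B=N$. Thus $B'\in\mathbb{R}^{2M\times N}$ (with $2M\ge N$) is a one-way reversible linear transformation and, by Lemma~\ref{lem:property-CB}, the real matrix $C_{B'}=((B')^\top B')^{-1}(B')^\top$ exists and satisfies $C_{B'}B'=I_N$; in particular $B'AC_{B'}$ is automatically real, so it only remains to show that it is anti-symmetric.

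Next I would establish an auxiliary criterion valid for any $D\in\mathbb{K}^{L\times N}$ ($\mathbb{K}=\mathbb{C}$ or $\mathbb{R}$) with $\mathrm{rank}D=N$ and any real $A$: the matrix $DAC_D$ is anti-hermitian if and only if $(D^\dagger D)A+A^\top (D^\dagger D)=0$. To see this, note that $(D^\dagger D)^{-1}$ is hermitian and $A^\dagger=A^\top$, so $(DAC_D)^\dagger=D(D^\dagger D)^{-1}A^\top D^\dagger$; hence $DAC_D$ anti-hermitian is the statement $D\bigl[A(D^\dagger D)^{-1}+(D^\dagger D)^{-1}A^\top\bigr]D^\dagger=0$. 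Multiplying on the left by $C_D$ and on the right by $C_D^\dagger$ and using $C_DD=I_N$ together with $D^\dagger C_D^\dagger=(C_DD)^\dagger=I_N$ (Lemma~\ref{lem:property-CB}) strips off the outer factors, leaving $A(D^\dagger D)^{-1}+(D^\dagger D)^{-1}A^\top=0$; multiplying this $N\times N$ identity on both sides by the invertible matrix $D^\dagger D$ then gives the claimed form, and every step is reversible. Applying the criterion with $D=B$ and invoking the hypothesis that $B$ is a Carleman transforming matrix of $A$ yields $(B^\dagger B)A+A^\top (B^\dagger B)=0$.

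Now I would extract real parts. Writing $B=\mathfrak{Re}B+i\,\mathfrak{Im}B$, a direct computation gives $B^\dagger B=\bigl[(\mathfrak{Re}B)^\top\mathfrak{Re}B+(\mathfrak{Im}B)^\top\mathfrak{Im}B\bigr]+i\bigl[(\mathfrak{Re}B)^\top\mathfrak{Im}B-(\mathfrak{Im}B)^\top\mathfrak{Re}B\bigr]$, so $\mathfrak{Re}(B^\dagger B)=(\mathfrak{Re}B)^\top\mathfrak{Re}B+(\mathfrak{Im}B)^\top\mathfrak{Im}B=(B')^\top B'$. Since $A$ is real, $\mathfrak{Re}\bigl((B^\dagger B)A\bigr)=\mathfrak{Re}(B^\dagger B)\,A$ and $\mathfrak{Re}\bigl(A^\top (B^\dagger B)\bigr)=A^\top\mathfrak{Re}(B^\dagger B)$, so taking the real part of the identity from the previous step gives $\bigl((B')^\top B'\bigr)A+A^\top\bigl((B')^\top B'\bigr)=0$. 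By the auxiliary criterion applied now with $\mathbb{K}=\mathbb{R}$ and $D=B'$, this is exactly the statement that $B'AC_{B'}$ is real anti-symmetric. Together with $\mathrm{rank}B'=N$ — and with the facts that a real anti-symmetric matrix has vanishing trace and that, as shown in the proof of Theorem~\ref{thm:koopman-linear-schrodinger-condition}, real anti-symmetry of $B'AC_{B'}$ forces the mapped Hamiltonian to preserve the total fock number — this is precisely the statement that $B'$ is a Koopman-von Neumann transforming matrix of $A$.

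I expect the only point needing care to be the bookkeeping in the auxiliary criterion: using the full column rank of $D$ (equivalently, the left inverse $C_D$ from Lemma~\ref{lem:property-CB}) to pass between ``$DXD^\dagger=0$'' and ``$X=0$'', and between the $C_D$-form and the Gram-matrix form of the anti-hermiticity condition. Everything else reduces to the identity $\mathfrak{Re}(B^\dagger B)=(B')^\top B'$ and the legitimacy of taking real parts, which rests on $A$ being real. Note that pure imaginary diagonalizability of $A$ enters only implicitly, via Theorem~\ref{thm:carleman-schrodinger}, to guarantee that a Carleman transforming matrix $B$ exists in the first place; it is not invoked again in the construction.
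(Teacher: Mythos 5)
Your proposal is correct and follows essentially the same route as the paper: pass from the anti-hermiticity of $BAC_B$ to the Gram-matrix condition that $B^\dagger BA$ is anti-hermitian (your auxiliary criterion is the paper's Proposition~\ref{prop:BdBA-anti-hermitian} in equivalent form), take real parts using $\mathfrak{Re}(B^\dagger B)=(B')^\top B'$ and the reality of $A$ (the paper's Lemma~\ref{lem:Re-BdB}), verify $\mathrm{rank}\,B'=N$ by the same kernel argument, and convert back via the real version of the criterion (Proposition~\ref{prop:tBBA-anti-symmetric}). The only cosmetic difference is that you re-derive the Gram-matrix equivalence inline rather than citing the paper's propositions.
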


\begin{proof}
    We prove in Appendix \ref{appendix-sub:real-imaginary}.
\end{proof}

Theorem \ref{thm:carleman-to-koopman} also tells us Koopman-von Neumann embedding is a ``decomplixification'' (converse of the complexification) of Carleman embedding in some sense.

Next, we investigate the sparsity of the matrices obtained via the Carleman transforming matrix and the Koopman-von Neumann transforming matrix, when they are applied to a given pure imaginary diagonalizable matrix.
Corollary \ref{cor:koopman-to-carleman} shows the Koopman-von Neumann transforming matrix is also the Carleman transforming matrix.
Considering them as the same,
the sparsity of the Hamiltonian transformed by the Carleman transforming matrix is preserved when compared to the Hamiltonian transformed by the Koopman-von Neumann transforming matrix.
On the other hand, when constructing the Koopman-von Neumann transforming matrix from the Carleman transforming matrix using the method outlined in Theorem \ref{thm:carleman-to-koopman}, 
the sparsity of the Hamiltonian transformed by the Carleman transforming matrix is inherited by the Hamiltonian transformed by the Koopman-von Neumann transforming matrix under some assumptions, as stated in Proposition \ref{prop:sparsity-Carleman-to-Koopman-von Neumann}.

\begin{prop} \label{prop:sparsity-Carleman-to-Koopman-von Neumann}
    Let $A \in \mathbb{R}^{N \times N}$ be pure imaginary diagonalizable, and $B \in \mathbb{C}^{M \times N}$ be a Carleman transforming matrix of $A$.
    If $B^\dagger B \in \mathbb{R}^{N \times N}$ holds, $BAC_B$ is $s$-sparse and $BAC_B^*$ is $s^\prime$-sparse, then $B^\prime AC_{B^\prime}$ is $2(s+s^\prime)$-sparse, where $
    B^\prime \coloneqq 
    \begin{pmatrix}
        \mathfrak{Re} B \\
        \mathfrak{Im} B
    \end{pmatrix}
    \in \mathbb{R}^{2M \times N}$.
\end{prop}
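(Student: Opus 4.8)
The plan is to compute $B' A C_{B'}$ explicitly in terms of the blocks $\mathfrak{Re}\,B$ and $\mathfrak{Im}\,B$ and then count nonzeros row by row. First I would work out $C_{B'}$. Since $B' = \begin{pmatrix}\mathfrak{Re}\,B\\ \mathfrak{Im}\,B\end{pmatrix}$, we have $B'^\dagger B' = (\mathfrak{Re}\,B)^\top(\mathfrak{Re}\,B) + (\mathfrak{Im}\,B)^\top(\mathfrak{Im}\,B)$, and using $B = \mathfrak{Re}\,B + i\,\mathfrak{Im}\,B$ together with the hypothesis $B^\dagger B \in \mathbb{R}^{N\times N}$, one finds $B'^\dagger B' = \mathfrak{Re}(B^\dagger B) = B^\dagger B$ (the imaginary cross terms cancel precisely because $B^\dagger B$ is real). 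Hence $C_{B'} = (B^\dagger B)^{-1}\begin{pmatrix}(\mathfrak{Re}\,B)^\top & (\mathfrak{Im}\,B)^\top\end{pmatrix} = \begin{pmatrix}\mathfrak{Re}(C_B) & -\mathfrak{Im}(C_B)\end{pmatrix}$ after identifying $\mathfrak{Re}(C_B) = (B^\dagger B)^{-1}(\mathfrak{Re}\,B)^\top$ and $\mathfrak{Im}(C_B) = -(B^\dagger B)^{-1}(\mathfrak{Im}\,B)^\top$; here I am using that $(B^\dagger B)^{-1}$ is real.

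Next I would assemble the product. Writing $B = B_R + iB_I$ and $C_B = C_R + iC_I$ (with $B_R = \mathfrak{Re}\,B$, etc.), and using that $A$ is real, a direct block multiplication gives
\begin{equation*}
B' A C_{B'} = \begin{pmatrix} B_R A C_R & -B_R A C_I \\ B_I A C_R & -B_I A C_I \end{pmatrix}.
\end{equation*}
The point is to express each of these four blocks in terms of the two matrices we have sparsity control on, namely $BAC_B = (B_R+iB_I)A(C_R+iC_I)$ and $BAC_B^* = (B_R+iB_I)A(C_R-iC_I)$. Adding and subtracting these two: $\mathfrak{Re}(BAC_B) = B_RAC_R - B_IAC_I$ and $\mathfrak{Re}(BAC_B^*) = B_RAC_R + B_IAC_I$, so $B_RAC_R = \tfrac12(\mathfrak{Re}(BAC_B) + \mathfrak{Re}(BAC_B^*))$ and $B_IAC_I = \tfrac12(\mathfrak{Re}(BAC_B^*) - \mathfrak{Re}(BAC_B))$; similarly the imaginary parts give $B_RAC_I = \tfrac12(\mathfrak{Im}(BAC_B) - \mathfrak{Im}(BAC_B^*))$ and $B_IAC_R = \tfrac12(\mathfrak{Im}(BAC_B) + \mathfrak{Im}(BAC_B^*))$. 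Thus every entry of every block of $B'AC_{B'}$ is (up to sign and the factor $\tfrac12$) a sum of two entries drawn from $\mathfrak{Re}(BAC_B)$, $\mathfrak{Im}(BAC_B)$, $\mathfrak{Re}(BAC_B^*)$, $\mathfrak{Im}(BAC_B^*)$.

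Finally I would do the sparsity bookkeeping. A given row of $B'AC_{B'}$ is indexed by a choice of half ($B_R$ or $B_I$ block on the left, i.e. the "top" or "bottom" of $B'$) and a row index $j \in \{1,\dots,M\}$; the nonzeros in that row, spread over the two column-halves, come from the $j$-th rows of the four matrices above. Since $\mathfrak{Re}(BAC_B)$ and $\mathfrak{Im}(BAC_B)$ each have at most $s$ nonzeros per row (being real/imaginary parts of the $s$-sparse matrix $BAC_B$) and likewise $\mathfrak{Re}(BAC_B^*), \mathfrak{Im}(BAC_B^*)$ each have at most $s'$ per row, a row of $B'AC_{B'}$ has at most $2s + 2s' = 2(s+s')$ nonzeros. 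The analogous count for columns is symmetric (using the column sparsity of the same four matrices), giving the claimed $2(s+s')$-sparsity. I would also remark that the argument confirms $B'AC_{B'}$ is real, consistent with Theorem~\ref{thm:carleman-to-koopman}.

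The main obstacle is the first step: verifying that the hypothesis $B^\dagger B \in \mathbb{R}^{N\times N}$ is exactly what makes $C_{B'}$ split cleanly into $\begin{pmatrix}\mathfrak{Re}(C_B) & -\mathfrak{Im}(C_B)\end{pmatrix}$, so that the four blocks of the product genuinely reduce to combinations of $BAC_B$ and $BAC_B^*$ rather than to something involving a separate, possibly denser matrix. Without that reality assumption the cross terms in $B'^\dagger B'$ do not vanish and the pseudo-inverse mixes real and imaginary parts of $B$ in a way that destroys the clean count; so I would be careful to state precisely where $B^\dagger B \in \mathbb{R}^{N \times N}$ is used.
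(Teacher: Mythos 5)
Your proposal is correct and follows essentially the same route as the paper: use the reality of $B^\dagger B$ to get $(B'^\dagger B')^{-1}=(B^\dagger B)^{-1}$, write $B'AC_{B'}$ as a $2\times 2$ block matrix, and recover each block as a half-sum of real/imaginary parts of the $s$-sparse $BAC_B$ and the $s'$-sparse $BAC_B^*$, yielding $2(s+s')$ nonzeros per row and column. The only cosmetic difference is that you phrase the blocks via $\mathfrak{Re}(C_B)$ and $\mathfrak{Im}(C_B)$ while the paper writes them directly as $(\mathfrak{Re}\,B)A(B^\dagger B)^{-1}(\mathfrak{Re}\,B)^\top$ etc., which is the same decomposition.
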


\begin{proof}
    Since $BAC_B$ is $s$-sparse, $\mathfrak{Re} (BAC_B)$ and $\mathfrak{Im} (BAC_B)$ are $s$-sparse.
    Similarly, since $BAC_B^*$ is $s^\prime$-sparse, $\mathfrak{Re} (BAC_B^*)$ and $\mathfrak{Im} (BAC_B^*)$ are $s^\prime$-sparse.
    Since $A$ and $B^\dagger B$ are real matrices and $C_B = (B^\dagger B)^{-1} B^\dagger$ by Lemma \ref{lem:property-CB}, 
    \begin{align}
        \mathfrak{Re} (BAC_B) &= (\mathfrak{Re} B) A (B^\dagger B)^{-1} (\mathfrak{Re} B)^\top + (\mathfrak{Im} B) A (B^\dagger B)^{-1} (\mathfrak{Im} B)^\top \\
        \mathfrak{Im} (BAC_B) &= (\mathfrak{Im} B) A (B^\dagger B)^{-1} (\mathfrak{Re} B)^\top - (\mathfrak{Re} B) A (B^\dagger B)^{-1} (\mathfrak{Im} B)^\top \\
        \mathfrak{Re} (BAC_B^*) &= (\mathfrak{Re} B) A (B^\dagger B)^{-1} (\mathfrak{Re} B)^\top - (\mathfrak{Im} B) A (B^\dagger B)^{-1} (\mathfrak{Im} B)^\top \\
        \mathfrak{Im} (BAC_B^*) &= (\mathfrak{Im} B) A (B^\dagger B)^{-1} (\mathfrak{Re} B)^\top + (\mathfrak{Re} B) A (B^\dagger B)^{-1} (\mathfrak{Im} B)^\top
    \end{align}
    holds.
    Thus,
    \begin{align}
        (\mathfrak{Re} B) A (B^\dagger B)^{-1} (\mathfrak{Re} B)^\top &= \frac{1}{2} \left(\mathfrak{Re} (BAC_B) + \mathfrak{Re} (BAC_B^*) \right) \\
        (\mathfrak{Im} B) A (B^\dagger B)^{-1} (\mathfrak{Im} B)^\top &= \frac{1}{2} \left(\mathfrak{Re} (BAC_B) - \mathfrak{Re} (BAC_B^*) \right) \\
        (\mathfrak{Im} B) A (B^\dagger B)^{-1} (\mathfrak{Re} B)^\top &= \frac{1}{2} \left(\mathfrak{Im} (BAC_B^*) + \mathfrak{Im} (BAC_B) \right) \\
        (\mathfrak{Re} B) A (B^\dagger B)^{-1} (\mathfrak{Im} B)^\top &= \frac{1}{2} \left(\mathfrak{Im} (BAC_B^*) - \mathfrak{Im} (BAC_B) \right)
    \end{align}
    holds.
    
    Then, $(\mathfrak{Re} B) A (B^\dagger B)^{-1} (\mathfrak{Re} B)^\top, (\mathfrak{Im} B) A (B^\dagger B)^{-1} (\mathfrak{Im} B)^\top, (\mathfrak{Im} B) A (B^\dagger B)^{-1} (\mathfrak{Re} B)^\top$ and $(\mathfrak{Re} B) A (B^\dagger B)^{-1} (\mathfrak{Im} B)^\top$ are $(s+s^\prime)$-sparse.
    $B^\prime AC_{B^\prime}$ is 
    \begin{align}
        B^\prime AC_{B^\prime} &= B^\prime A(B^{\prime \dagger} B^\prime)^{-1} B^{\prime \dagger} \\
        &=
        \begin{pmatrix}
            \mathfrak{Re} B \\
            \mathfrak{Im} B
        \end{pmatrix}
        A (B^\dagger B)^{-1}
        \begin{pmatrix}
            (\mathfrak{Re} B)^\top &(\mathfrak{Im} B)^\top
        \end{pmatrix} \\
        &=
        \begin{pmatrix}
            (\mathfrak{Re} B) A (B^\dagger B)^{-1} (\mathfrak{Re} B)^\top &(\mathfrak{Re} B) A (B^\dagger B)^{-1} (\mathfrak{Im} B)^\top \\
            (\mathfrak{Im} B) A (B^\dagger B)^{-1} (\mathfrak{Re} B)^\top &(\mathfrak{Im} B) A (B^\dagger B)^{-1} (\mathfrak{Im} B)^\top
        \end{pmatrix},
    \end{align}
    so $B^\prime AC_{B^\prime}$ is $2(s+s^\prime)$-sparse.
\end{proof}

Proposition \ref{prop:sparsity-Carleman-to-Koopman-von Neumann} shows 
there is the possibility that 
the way to solve a BQP-complete problem efficiently with Carleman embedding and sparse Hamiltonian simulation converts easily to that with Koopman-von Neumann embedding and sparse Hamiltonian simulation.

\section{Classical Quadratic Hamiltonian Dynamics} \label{section:classical-quadratic-hamiltonian-dynamics}
Next, we focus on the case where the differential equation is given as Hamilton's canonical equation of a quadratic classical Hamiltonian, using the above results.
The authors of Ref.~\cite{babbush_exponential_2023} regard the following equation as the generalized form:
\begin{align} \label{eq:speedup-start}
    M \frac{d^2}{dt^2} \bm{x} = -F \bm{x},
\end{align}
where $M$ is a positive matrix, $F$ is a positive-semidefinite matrix, and $\bm{x}=(x_1, \dots, x_N)^\top$.
Eq. \eqref{eq:speedup-start}'s Hamiltonian $H(q_1, \dots, q_N, p_1, \dots, p_N)$, where $q_j=x_j, p_j=m_j \dot{x}_j\ (j \in \{1, \dots, N\})$ and $m_j > 0 \ (j \in \{1, \dots, N\})$ is the point positive mass, is
\begin{align} \label{eq:quantum-speedup-hamiltonian}
    H(q_1, \dots, q_N, p_1, \dots, p_N) &=
    \begin{pmatrix}
        q_1 &\hdots &q_N &p_1 &\hdots &p_N
    \end{pmatrix}
    \begin{pmatrix}
        H_1 &O_{N \times N} \\
        O_{N \times N} &H_2
    \end{pmatrix}
    \begin{pmatrix}
        q_1 \\
        \vdots \\
        q_N \\
        p_1 \\
        \vdots \\
        p_N
    \end{pmatrix},
\end{align}
where $H_1 \ge 0$ and $H_2>0$.
This Hamiltonian's canonical equation can be mapped into the Schr\"{o}dinger equation according to Ref.~\cite{babbush_exponential_2023}.
Since the Hamiltonian $H(q_1, \dots, q_N, p_1, \dots, p_N)$ is imposed strong constraints,
it is expected that more general Hamiltonian's canonical equation can be mapped to the Schr\"{o}dinger equation.
Indeed, we show the quadratic Hamiltonian whose quadratic form is positive or negative can be mapped to the Schr\"{o}dinger equation.
Furthermore, we provide a nontrivial example that cannot be treated by the way in Ref.~\cite{babbush_exponential_2023} but can be mapped to the Schr{\"o}dinger equation in our framework.

\subsection{General Case} \label{subsec:classical-quadratic-hamiltonian-dynamics-general}
We investigate a sufficient condition on classical quadratic Hamiltonian so that 
we can map the corresponding Hamiltonian's canonical equation to the Schr{\"o}dinger equation.
Assume that the Hamiltonian $H(q_1, \dots, q_N, p_1, \dots, p_N)$ is given that is a quadratic form of generalized position and momentum $q_1, \dots, q_N, p_1, \dots, p_N \ (N \in \mathbb{Z}_{>0})$ 
as follows:
\begin{equation}
    H(q_1, \dots, q_N, p_1, \dots, p_N)
    =
    \begin{pmatrix}
        q_1 &\dots &q_N &p_1 &\dots &p_N
    \end{pmatrix}
    \tilde{H}
    \begin{pmatrix}
        q_1 \\
        \vdots \\
        q_N \\
        p_1 \\
        \vdots \\
        p_N
    \end{pmatrix},
\end{equation}
where $\tilde{H} \in \mathbb{R}^{2N \times 2N}$ is taken as a symmetric matrix without loss of generality.
Hamilton's canonical equation of $H(q_1, \dots, q_N, p_1, \dots, p_N)$ is
\begin{equation} \label{eq:hamilton-canonical-linear}
    \frac{d}{dt}
    \begin{pmatrix}
        q_1 \\
        \vdots \\
        q_N \\
        p_1 \\
        \vdots \\
        p_N
    \end{pmatrix}
    =
    A
    \begin{pmatrix}
        q_1 \\
        \vdots \\
        q_N \\
        p_1 \\
        \vdots \\
        p_N
    \end{pmatrix},
\end{equation}
where $A \coloneqq 
2
\begin{pmatrix}
    O_{N \times N} &I_N \\
    -I_N &O_{N \times N}
\end{pmatrix}
\tilde{H}
$.

Now, we assume that $\tilde{H} > 0$.
According to Lemma \ref{lem:real-positive} in Appendix \ref{appendix-sub:koopman}, there exists $B \in \mathbb{R}^{M \times 2N} \ \mathrm{with} \ \mathrm{rank} B =2N$ such that $\tilde{H} = B^\top B$.
Eq. \eqref{eq:hamilton-canonical-linear} become
\begin{equation} \label{eq:hamiton-caonical-linear-B}
    \begin{split}
        \frac{d}{dt}
        B
        \begin{pmatrix}
            q_1 \\
            \vdots \\
            q_N \\
            p_1 \\
            \vdots \\
            p_N
        \end{pmatrix}
        &=
        B
        A
        C_B B
        \begin{pmatrix}
            q_1 \\
            \vdots \\
            q_N \\
            p_1 \\
            \vdots \\
            p_N
        \end{pmatrix}\\
        &=
        \left(
        2
        B
        \begin{pmatrix}
            O_{N \times N} &I_N \\
            -I_N &O_{N \times N}
        \end{pmatrix}
        B^\top
        \right)
        B
        \begin{pmatrix}
            q_1 \\
            \vdots \\
            q_N \\
            p_1 \\
            \vdots \\
            p_N
        \end{pmatrix},
    \end{split}
\end{equation}
where $C_B = (B^\dagger B)^{-1} B^\dagger = (B^\top B)^{-1} B^\top = \tilde{H}^{-1} B^\top$ from Lemma \ref{lem:property-CB}.
Since
\begin{align}
2
B
\begin{pmatrix}
    O_{N \times N} &I_N \\
    -I_N &O_{N \times N}
\end{pmatrix}
B^\top
\end{align}
is a real anti-symmetric matrix, thus the following Proposition \ref{prop:positive-hamiltonian-carleman-koopman} holds by Proposition \ref{prop:koopman-preserving} in Appendix \ref{appendix-sub:koopman}.

\begin{prop} \label{prop:positive-hamiltonian-carleman-koopman}
    Under the above setting, if $\tilde{H} > 0$ holds, 
    $A \coloneqq 
    2
    \begin{pmatrix}
        O_{N \times N} &I_N \\
        -I_N &O_{N \times N}
    \end{pmatrix}
    \tilde{H}
    $ becomes a pure imaginary diagonalizable matrix.
\end{prop}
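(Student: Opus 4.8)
The plan is to recognize $A$ as similar to a real antisymmetric matrix and then to use that real antisymmetric matrices are normal with purely imaginary spectrum. Write $J \coloneqq \begin{pmatrix} O_{N\times N} & I_N \\ -I_N & O_{N\times N} \end{pmatrix}$, so that $A = 2J\tilde{H}$. Since $\tilde{H}>0$ is real symmetric, it has a unique symmetric positive-definite square root $\tilde{H}^{1/2}$, which is invertible, and
\[
A = 2J\tilde{H} = 2\tilde{H}^{-1/2}\bigl(\tilde{H}^{1/2} J \tilde{H}^{1/2}\bigr)\tilde{H}^{1/2},
\]
so $A$ is similar to $K \coloneqq 2\tilde{H}^{1/2} J \tilde{H}^{1/2}$ via $P=\tilde{H}^{1/2}$.

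Next I would verify that $K$ is real antisymmetric: $K^\top = 2(\tilde{H}^{1/2})^\top J^\top (\tilde{H}^{1/2})^\top = -2\tilde{H}^{1/2} J \tilde{H}^{1/2} = -K$, using $J^\top=-J$ and the symmetry of $\tilde{H}^{1/2}$. Hence $iK$ is Hermitian, so $K$ is normal, unitarily diagonalizable over $\mathbb{C}$, and every eigenvalue of $K$ equals $i$ times a (real) eigenvalue of $iK$, i.e.\ $\sigma_p(K)\subset i\mathbb{R}$. Since similarity preserves both diagonalizability and the spectrum, $A$ is diagonalizable and $\sigma_p(A)=\sigma_p(K)\subset i\mathbb{R}$, i.e.\ $A$ is pure imaginary diagonalizable, which is the claim.

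An alternative route, matching the construction already set up in the excerpt, avoids the square root: by Lemma~\ref{lem:real-positive} choose $B\in\mathbb{R}^{M\times 2N}$ with $\mathrm{rank}\,B=2N$ and $\tilde{H}=B^\top B$; then Eq.~\eqref{eq:hamiton-caonical-linear-B} gives $BAC_B = 2BJB^\top$, which is manifestly real antisymmetric (hence automatically traceless, so the divergence-free condition is free). One then invokes the ``only if'' direction already contained in Theorem~\ref{thm:koopman-linear-schrodinger-condition} (equivalently Proposition~\ref{prop:koopman-preserving}) to conclude that $A$ is pure imaginary diagonalizable.

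The only nonroutine point is the passage \emph{from} a one-way reversible transformation producing a real antisymmetric $BAC_B$ \emph{back to} $A$ itself being pure imaginary diagonalizable. In the square-root formulation this is packaged cleanly as a similarity $A\sim K$. In the $B$-formulation it requires the observation that $\mathrm{Im}\,B$ is a $BAC_B$-invariant subspace on which $BAC_B$ is conjugate to $A$ via the isomorphism $B$ (using $C_B B=I_{2N}$ and $BAC_B B = BA$), together with the fact that the restriction of a diagonalizable operator to an invariant subspace is again diagonalizable. Everything else — existence and properties of $\tilde{H}^{1/2}$, the antisymmetry of $K$, and the normality/imaginary spectrum of real antisymmetric matrices — is standard linear algebra.
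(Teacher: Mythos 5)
Your proof is correct and is essentially the paper's own argument: the paper establishes the claim by producing $D=\tilde{H}>0$ with $DA=2\tilde{H}J\tilde{H}$ real anti-symmetric and then invoking Proposition~\ref{prop:real-A-diagonalizable-pure-imaginary}, whose proof is exactly your conjugation $A\sim \sqrt{\tilde H}\,A\,\sqrt{\tilde H}^{-1}=2\tilde{H}^{1/2}J\tilde{H}^{1/2}$, a real anti-symmetric (hence normal, purely imaginary-spectrum) matrix. Your ``alternative route'' via $\tilde H=B^\top B$ and $BAC_B=2BJB^\top$ is literally the chain the paper writes in Section~\ref{subsec:classical-quadratic-hamiltonian-dynamics-general}, so both of your variants match the paper.
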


Next, we assume that $-\tilde{H} > 0$.
From Lemma \ref{lem:real-positive}, there exists $B \in \mathbb{R}^{M \times 2N} \ \mathrm{with} \ \mathrm{rank} B =2N$ such that $-\tilde{H} = B^\top B$.
Eq. \eqref{eq:hamilton-canonical-linear} become
\begin{equation} \label{eq:hamiton-caonical-linear-negative}
    \begin{split}
        \frac{d}{dt}
        (-B)
        \begin{pmatrix}
            q_1 \\
            \vdots \\
            q_N \\
            p_1 \\
            \vdots \\
            p_N
        \end{pmatrix}
        &=
        (-B)
        \left(
        2
        \begin{pmatrix}
            O_{N \times N} &I_N \\
            -I_N &O_{N \times N}
        \end{pmatrix}
        \tilde{H} 
        \right)
        C_{-B} (-B)
        \begin{pmatrix}
            q_1 \\
            \vdots \\
            q_N \\
            p_1 \\
            \vdots \\
            p_N
        \end{pmatrix}\\
        &=
        \left(
        2
        (-B)
        \begin{pmatrix}
            O_{N \times N} &-I_N \\
            I_N &O_{N \times N}
        \end{pmatrix}
        (-B)^\top
        \right)
        (-B)
        \begin{pmatrix}
            q_1 \\
            \vdots \\
            q_N \\
            p_1 \\
            \vdots \\
            p_N
        \end{pmatrix},
    \end{split}
\end{equation}
where $C_{-B} = \left((-B)^\dagger (-B)\right)^{-1} (-B)^\dagger = \left((-B)^\top (-B)\right)^{-1} (-B)^\top = \tilde{H}^{-1} B^\top$ from Lemma \ref{lem:property-CB}.
Since
\begin{align}
2
(-B)
\begin{pmatrix}
    O_{N \times N} &-I_N \\
    I_N &O_{N \times N}
\end{pmatrix}
(-B)^\top
\end{align}
is a real anti-symmetric matrix, the following Proposition \ref{prop:negative-hamiltonian-carleman-koopman} holds by Proposition \ref{prop:koopman-preserving} in Appendix \ref{appendix-sub:koopman}.

\begin{prop} \label{prop:negative-hamiltonian-carleman-koopman}
    Under the above setting, if $-\tilde{H} > 0$ holds, 
    $A \coloneqq 
    2
    \begin{pmatrix}
        O_{N \times N} &I_N \\
        -I_N &O_{N \times N}
    \end{pmatrix}
    \tilde{H}
    $ becomes a pure imaginary diagonalizable matrix.
\end{prop}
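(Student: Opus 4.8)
The plan is to follow exactly the pattern already established for Proposition~\ref{prop:positive-hamiltonian-carleman-koopman}, adapting it to the sign reversal forced by the hypothesis $-\tilde{H}>0$. Since $-\tilde{H}$ is then real symmetric positive definite, Lemma~\ref{lem:real-positive} in Appendix~\ref{appendix-sub:koopman} supplies a matrix $B\in\mathbb{R}^{M\times 2N}$ with $\mathrm{rank}\,B=2N$ and $-\tilde{H}=B^\top B$; in particular $M\ge 2N$, so $B$ is a (real) one-way reversible linear transformation in the sense of Sec.~\ref{section:preliminary}.

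Next I would take $-B$ (equivalently $B$ itself, which has the same Gram matrix) as the one-way reversible linear transformation applied to Eq.~\eqref{eq:hamilton-canonical-linear}. By Lemma~\ref{lem:property-CB} the characteristic matrix is $C_{-B}=\bigl((-B)^\dagger(-B)\bigr)^{-1}(-B)^\dagger=(B^\top B)^{-1}(-B^\top)=\tilde{H}^{-1}B^\top$, and substituting $A=2\begin{pmatrix}O_{N\times N}&I_N\\-I_N&O_{N\times N}\end{pmatrix}\tilde{H}$ gives
\begin{equation*}
(-B)\,A\,C_{-B}=2(-B)\begin{pmatrix}O_{N\times N}&-I_N\\I_N&O_{N\times N}\end{pmatrix}(-B)^\top,
\end{equation*}
which is precisely the transformed generator displayed in Eq.~\eqref{eq:hamiton-caonical-linear-negative}. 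Since $\begin{pmatrix}O_{N\times N}&-I_N\\I_N&O_{N\times N}\end{pmatrix}$ is real anti-symmetric and $B$ is real, the matrix $(-B)AC_{-B}$ is real anti-symmetric; in particular its diagonal vanishes, so $\mathrm{tr}\bigl((-B)AC_{-B}\bigr)=0$.

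It then remains to conclude that $A$ is pure imaginary diagonalizable, and for this I would invoke Proposition~\ref{prop:koopman-preserving} in Appendix~\ref{appendix-sub:koopman}: having exhibited a real one-way reversible linear transformation $-B$ for which $(-B)AC_{-B}$ is real anti-symmetric with zero trace (so that the associated mapped Hamiltonian~\eqref{eq:mapped-hamiltonian} preserves the total Fock number), the ``only if'' direction of Theorem~\ref{thm:koopman-linear-schrodinger-condition} forces $A$ to be pure imaginary diagonalizable. This is exactly the half of Theorem~\ref{thm:koopman-linear-schrodinger-condition} already proved in the appendix, so no new structural work is needed.

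The only genuinely delicate point is the sign bookkeeping: because $B^\top B=-\tilde{H}$ rather than $\tilde{H}$, the inverse $(B^\top B)^{-1}=-\tilde{H}^{-1}$ contributes an extra minus sign that must cancel against the symplectic block so that $(-B)AC_{-B}$ emerges anti-symmetric rather than symmetric; writing the transformation as $-B$ (instead of $B$) is what makes that cancellation transparent. Everything else is a mechanical substitution, and the substantive content — that a real anti-symmetric Koopman generator obtained from a one-way reversible linear transformation forces $A$ to be pure imaginary diagonalizable — is inherited wholesale from Theorem~\ref{thm:koopman-linear-schrodinger-condition} and Proposition~\ref{prop:koopman-preserving}.
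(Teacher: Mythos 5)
Your proposal is correct and follows essentially the same route as the paper: factor $-\tilde{H}=B^\top B$ via Lemma~\ref{lem:real-positive}, use $-B$ as the one-way reversible linear transformation so that $C_{-B}=\tilde{H}^{-1}B^\top$ and $(-B)AC_{-B}=2(-B)\begin{pmatrix}O_{N\times N}&-I_N\\I_N&O_{N\times N}\end{pmatrix}(-B)^\top$ is real anti-symmetric, and then conclude via Proposition~\ref{prop:koopman-preserving} and Theorem~\ref{thm:koopman-linear-schrodinger-condition} that $A$ is pure imaginary diagonalizable. The sign bookkeeping you flag is handled exactly as in Eq.~\eqref{eq:hamiton-caonical-linear-negative}, so no further comment is needed.
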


According to Proposition \ref{prop:positive-hamiltonian-carleman-koopman} and \ref{prop:negative-hamiltonian-carleman-koopman}, $\tilde{H}$ being a positive or negative matrix is the sufficient condition on classical quadratic Hamiltonian so that 
we can map the corresponding Hamiltonian's canonical equation to the Schr{\"o}dinger equation.

\subsection{Concrete Example} \label{subsec:classical-quadratic-hamiltonian-dynamics-example}
We provide an example that cannot be dealt with the method in Ref.~\cite{babbush_exponential_2023} but can be in the present framework.
Let Hamiltonian $H(q_1, q_2, p_1, p_2)$ be
\begin{align} \label{eq:hamiltonian-example}
    H(q_1, q_2, p_1, p_2) &= q_1^2 + 2 q_2^2 + p_1^2 + 2 p_2^2 + 2 q_1 p_2 + 2q_2 p_1 \\
    &=
    \begin{pmatrix}
        q_1 &q_2 &p_1 &p_2
    \end{pmatrix}
    \tilde{H}
    \begin{pmatrix}
        q_1 \\
        q_2 \\
        p_1 \\
        p_2
    \end{pmatrix},
\end{align}
where $\tilde{H}
=
\begin{pmatrix}
    1 &0 &0 &1 \\
    0 &2 &1 &0 \\
    0 &1 &1 &0 \\
    1 &0 &0 &2
\end{pmatrix}
$.
From the Hamilton's canonical equation of $H(q_1, q_2, p_1, p_2)$, we obtain
\begin{equation} \label{eq:second-order-ode-position}
    \frac{d^2}{dt^2} 
    \begin{pmatrix}
        q_1 \\
        q_2 \\
        p_1 \\
        p_2
    \end{pmatrix}
    =
    -
    \begin{pmatrix}
        0 &0 &0 &-4 \\
        0 &12 &4 &0 \\
        0 &-4 &0 &0 \\
        4 &0 &0 &12
    \end{pmatrix}
    \begin{pmatrix}
        q_1 \\
        q_2 \\
        p_1 \\
        p_2
    \end{pmatrix}
    \eqqcolon
    -A^{\prime}
    \begin{pmatrix}
        q_1 \\
        q_2 \\
        p_1 \\
        p_2
    \end{pmatrix}.
\end{equation}
Since $A^{\prime}$ in Eq. \eqref{eq:second-order-ode-position} is not a real symmetric positive semidefinite matrix, 
Eq. \eqref{eq:second-order-ode-position} cannot be applied to Theorem 4 in Ref.~\cite{babbush_exponential_2023} that generalizes the results in Ref.~\cite{babbush_exponential_2023}.
Thus, Eq. \eqref{eq:second-order-ode-position} cannot be dealt with the method in Ref.~\cite{babbush_exponential_2023}.

On the other hand, since $\tilde{H}$ is a positive matrix, the Hamilton's canonical equation of $H(q_1, q_2, p_1, p_2)$ is mapped into Schr\"{o}dinger equation under an appropriate Carleman transforming matrix or Koopman-von Neumann transforming matrix from Proposition \ref{prop:positive-hamiltonian-carleman-koopman}.
Indeed,
we can take a Koopman-von Neumann transforming matrix $B$ as
\begin{align}
    B= \frac{1}{\sqrt{5}}
    \begin{pmatrix}
        2 &0 &0 &1 \\
        0 &3 &1 &0 \\
        0 &1 &2 &0 \\
        1 &0 &0 &3
    \end{pmatrix}
\end{align}
because $B^\top B A$ is a real anti-symmetric matrix, and Proposition \ref{prop:koopman-preserving} and \ref{prop:tBBA-anti-symmetric} in Appendix \ref{appendix-sub:koopman} holds.
Then the mapped Hamiltonian becomes
\begin{align}
    2i
    B
    \begin{pmatrix}
        O_{2 \times 2} &I_2 \\
        -I_2 &O_{2 \times 2}
    \end{pmatrix}
    B^\top
    =
    \frac{2i}{5}
    \begin{pmatrix}
        0 &-1 &3 &0 \\
        1 &0 &0 &8 \\
        -3 &0 &0 &1 \\
        0 &-8 &-1 &0
    \end{pmatrix}.
\end{align}
Therefore, the Hamilton's canonical equation of the above $H(q_1, q_2, p_1, p_2)$ is a nontrivial example that cannot be treated in the method of Ref.~\cite{babbush_exponential_2023}
but can be mapped to the Schr{\"o}dinger equation in our framework.

\section{Coupled Harmonic Oscillators} \label{section:coupled-harmonic-oscillators}
In this section, we revisit the coupled harmonic oscillators described in Ref.~\cite{babbush_exponential_2023}.
We demonstrate a sufficient condition to map the Hamilton's canonical equation of the coupled harmonic oscillators to the Schr{\"o}dinger equation.
We also show the BQP-complete problem in Ref.~\cite{babbush_exponential_2023} belongs to the problem class that can be solved efficiently with Koopman-von Neumann embedding.

\subsection{Carleman and Koopman-von Neumann Embedding for Coupled Harmonic Oscillators} \label{subsec:carleman-koopman-coupled-harmonic-oscillators}
We investigate a sufficient condition on coupled harmonic oscillators so that 
we can map the corresponding Hamiltonian's canonical equation to the Schr{\"o}dinger equation.
Coupled harmonic oscillators in Ref.~\cite{babbush_exponential_2023} are given by
\begin{equation} \label{eq:coupled-harmonic-oscillators-Newton}
    m_j \ddot{x}_j(t) = 
    \sum_{\substack{k=1 \\ k \neq j}}^N \kappa_{jk} \left( x_k(t)-x_j(t) \right) - \kappa_{jj} x_j(t)
    \hspace{1em}
    (j \in \{1, \dots, N\}),
\end{equation}
where $m_j$ is the point positive mass, and $\kappa_{jk}$ is the spring constant satisfying $\kappa_{jk}=\kappa_{kj} \ge 0$.
Eq. \eqref{eq:coupled-harmonic-oscillators-Newton} has the Hamiltonian
\begin{equation}
    H(q_1, \dots, q_N, p_1, \dots, p_N) 
    = \sum_{j=1}^N \frac{1}{2} \kappa_{jj} q_j^2 
    + \sum_{1 \le j < k \le N} \frac{1}{2} \kappa_{jk} (q_j - q_k)^2 
    + \sum_{j=1}^N \frac{1}{2 m_j} p_j^2 ,
\end{equation}
where $q_j=x_j,  p_j=m_j \dot{x}_j$.
The Hamiltonian can be represented as
\begin{equation}
    \begin{split}
        &\quad H(q_1, \dots, q_N, p_1, \dots, p_N) \\
        &=
        \frac{1}{2}
        \begin{pmatrix}
            q_1 &\hdots &\hdots &q_N &p_1 &\hdots &\hdots &p_N
        \end{pmatrix} \\
        &\quad
        \begin{pmatrix}
            \sum_{j=1}^N \kappa_{1j} &-\kappa_{12} &\hdots &-\kappa_{1N} 
            &0 &\hdots &\hdots &0 \\
            -\kappa_{21} &\ddots &\ddots &\vdots 
            &\vdots &\ddots &  &\vdots \\
            \vdots &\ddots &\ddots &-\kappa_{N-1 \, N} 
            &\vdots & &\ddots  &\vdots \\
            -\kappa_{N1} &\hdots &-\kappa_{N \, N-1} &\sum_{j=1}^N \kappa_{Nj} 
            &0 &\hdots &\hdots  &0 \\
            0 &\hdots &\hdots &0 &1/m_1 &0 &\hdots &0 \\
            \vdots &\ddots &  &\vdots &0 &\ddots &\ddots &\vdots \\
            \vdots & &\ddots  &\vdots &\vdots &\ddots &\ddots &0 \\
            0 &\hdots &\hdots  &0 &0 &\hdots &0 &1/m_N
        \end{pmatrix}
        \begin{pmatrix}
            q_1 \\
            \vdots \\
            \vdots \\
            q_N \\
            p_1 \\
            \vdots \\
            \vdots \\
            p_N
        \end{pmatrix}.
    \end{split}
\end{equation}
Hamilton's canonical equation of $H(q_1, \dots, q_N, p_1, \dots, p_N)$ becomes
\begin{equation} \label{eq:hamilton-canonical-oscillator}
    \begin{split}
        \frac{d}{dt}
        \begin{pmatrix}
            q_1 \\
            \vdots \\
            \vdots \\
            q_N \\
            p_1 \\
            \vdots \\
            \vdots \\
            p_N
        \end{pmatrix}
        &=
        \begin{pmatrix}
            0 &\hdots &\hdots &0 &1/m_1 &0 &\hdots &0 \\
            \vdots &\ddots &  &\vdots &0 &\ddots &\ddots &\vdots \\
            \vdots & &\ddots  &\vdots &\vdots &\ddots &\ddots &0 \\
            0 &\hdots &\hdots  &0 &0 &\hdots &0 &1/m_N \\
            -\sum_{j=1}^N \kappa_{1j} &\kappa_{12} &\hdots &\kappa_{1N} 
            &0 &\hdots &\hdots &0 \\
            \kappa_{21} &\ddots &\ddots &\vdots 
            &\vdots &\ddots &  &\vdots \\
            \vdots &\ddots &\ddots &\kappa_{N-1 \, N} 
            &\vdots & &\ddots  &\vdots \\
            \kappa_{N1} &\hdots &\kappa_{N \, N-1} &-\sum_{j=1}^N \kappa_{Nj} 
            &0 &\hdots &\hdots  &0
        \end{pmatrix}
        \begin{pmatrix}
            q_1 \\
            \vdots \\
            \vdots \\
            q_N \\
            p_1 \\
            \vdots \\
            \vdots \\
            p_N
        \end{pmatrix} \\
        &\eqqcolon
        A
        \begin{pmatrix}
            q_1 \\
            \vdots \\
            q_N \\
            p_1 \\
            \vdots \\
            p_N
        \end{pmatrix}.
    \end{split}
\end{equation}
Since 
\begin{align}
    \begin{pmatrix}
        1/m_1 &0 &\hdots &0 \\
        0 &\ddots &\ddots &\vdots \\
        \vdots &\ddots &\ddots &0 \\
        0 &\hdots &0 &1/m_N    
    \end{pmatrix}
\end{align}
is a positive matrix,
according to Proposition \ref{prop:positive-hamiltonian-carleman-koopman}, if 
\begin{align}
    \begin{pmatrix}
        \sum_{j=1}^N \kappa_{1j} &-\kappa_{12} &\hdots &-\kappa_{1N} \\
        -\kappa_{21} &\ddots &\ddots &\vdots \\
        \vdots &\ddots &\ddots &-\kappa_{N-1 \, N} \\
        -\kappa_{N1} &\hdots &-\kappa_{N \, N-1} &\sum_{j=1}^N \kappa_{Nj}
    \end{pmatrix}
\end{align}
is a positive matrix,
then $\tilde{H}$ is a positive matrix and the Hamilton's canonical equation of $H(q_1, \dots, q_N, p_1, \dots, p_N)$ will be mapped to the Schr\"{o}dinger equation.
Therefore, 
\begin{align}
\begin{pmatrix}
    \sum_{j=1}^N \kappa_{1j} &-\kappa_{12} &\hdots &-\kappa_{1N} \\
    -\kappa_{21} &\ddots &\ddots &\vdots \\
    \vdots &\ddots &\ddots &-\kappa_{N-1 \, N} \\
    -\kappa_{N1} &\hdots &-\kappa_{N \, N-1} &\sum_{j=1}^N \kappa_{Nj}
\end{pmatrix}
\end{align}
being a positive matrix is the sufficient condition on coupled harmonic oscillators so that 
we can map the corresponding Hamiltonian's canonical equation to the Schr{\"o}dinger equation.

\subsection{BQP-complete Problem that can be Solved by Koopman-von Neumann Embedding} \label{subsec:BQP-completeness-Koopman-von Neumann}
We will show there is a BQP-complete problem in the class that can be solved efficiently by Koopman-von Neumann embedding.
The authors of Ref.~\cite{babbush_exponential_2023} demonstrate that 
the problem of estimating the ratio of the kinetic energy of a specific coupled harmonic oscillator to the total energy of the system (CHO problem) can be solved efficiently using Carleman embedding and Hamiltonian simulation.
They also prove that the CHO problem includes a BQP-complete problem.
From Theorem \ref{thm:carleman-schrodinger} and \ref{thm:koopman-linear-schrodinger-condition}, 
the class of linear differential equations that can be mapped to Schr\"{o}dinger equation by Carleman embedding is identical to that class by Koopman-von Neumann embedding.
Therefore, it appears that the CHO problem can be solved efficiently with Koopman-von Neumann embedding.
However, it is unclear whether the CHO problem can be resolved efficiently using Koopman-von Neumann embedding
because it is not proved that the computational complexity of solving the CHO problem with Koopman-von Neumann embedding is equivalent to that with Carleman embedding.
To show the CHO problem can be solved efficiently using Koopman-von Neumann embedding, 
we need to prove the mapped Hamiltonian via Koopman-von Neumann embedding $H_{\rm Koo}$ can be constructed efficiently.
Since Ref.~\cite{babbush_exponential_2023} shows $H_{\rm Car}$ can be constructed efficiently,
we only have to demonstrate the existence of the unitary matrix $U$ that satisfies $H_{\rm Koo} = U H_{\rm Car} U^\dagger$ and can be constructed efficiently.

The Carleman transforming matrix $B \in \mathbb{C}^{N(N+3)/2 \times 2N}$ of $A$ in Ref.~\cite{babbush_exponential_2023} satisfies Eq. \eqref{eq:coupled-harmonic-oscillators-schrodinger} and \eqref{eq:quantum-speedup-carleman-B}.
Then, we construct a Koopman-von Neumann transforming matrix from $B$.
From Theorem \ref{thm:carleman-to-koopman}, 
$B^\prime \coloneqq 
\begin{pmatrix}
    \mathfrak{Re} B \\
    \mathfrak{Im} B
\end{pmatrix}$
is a Koopman-von Neumann transforming matrix of $A$.
Let $B^{\prime \prime} \in \mathbb{R}^{N(N+3)/2 \times N}$ be the matrix obtained by removing columns with all-zero elements from $B^\prime$,
satisfying
\begin{align}
B^{\prime \prime}
\begin{pmatrix}
    x_1(t) \\
    \vdots \\
    x_N(t) \\
    \dot{x}_1(t) \\
    \vdots \\
    \dot{x}_N(t)
\end{pmatrix}
=
\begin{pmatrix}
    \sqrt{m_1} \dot{x}_1(t) \\
    \vdots \\
    \sqrt{m_N} \dot{x}_N(t) \\
    \sqrt{\kappa_{11}} x_1(t) \\
    \vdots \\
    \sqrt{\kappa_{NN}} x_N(t) \\
    \sqrt{\kappa_{12}} \left( x_1(t)-x_2(t) \right) \\
    \vdots \\
    \sqrt{\kappa_{N-1 \, N}} \left( x_{N-1}(t)-x_N(t) \right)
\end{pmatrix}.
\end{align}
Since we can check easily $B^{\prime \prime}AC_{B^{\prime \prime}}$ is a real anti-symmetric matrix using the fact $B^{\prime \prime}$ is a Koopman-von Neumann transforming matrix of $A$, $B^{\prime \prime}$ is a Koopman-von Neumann transforming matrix of $A$.
Now we focus on the relation between $B$ and $B^{\prime \prime}$, 
$B$ and $B^{\prime \prime }$ satisfy $B^{\prime \prime} =UB$, where
\begin{align}
U \coloneqq 
\begin{pmatrix}
    1 &  &  &  &  &  &  &  &  \\
      &\ddots &  &  &  &  &  &  &  \\
      &   &1 &  &  &  &  &  &  \\
      &  &  &-i &  &  &  &  &  \\
     &   &  &  &\ddots &  &  &  &  \\
     &   &  &  &  &-i &  &  &  \\
     &   &  &  &  &  &-i &  &  \\
     &   &  &  &  &  &  &\ddots &  \\
     &   &  &  &  &  &  &  &-i
\end{pmatrix}.
\end{align}
The mapped Hamiltonian via the Carleman embedding $H_{\rm Car}$ in Ref.~\cite{babbush_exponential_2023} is
\begin{align}
H_{\rm Car} = i BAC_B.
\end{align}
The mapped Hamiltonian via the Koopman-von Neumann embedding with $B^{\prime \prime}$ is
\begin{align}
H_{\rm Koo} = i B^{\prime \prime}AC_{B^{\prime \prime}}.
\end{align}
Then, the relation between $H_{\rm Car}$ and $H_{\rm Koo}$ is
\begin{align}
H_{\rm Koo} = U H_{\rm Car} U^\dagger.
\end{align}
Since $U$ has 1 and $-i$ regularly arranged on the diagonal elements, we can construct $U$ efficiently.
Then, there exists the unitary matrix $U$ that satisfies $H_{\rm Koo} = U H_{\rm Car} U^\dagger$ and can be constructed efficiently.
Therefore, we can solve the CHO problem efficiently, which includes the BQP-complete problem, by using Koopman-von Neumann embedding and a Hamiltonian simulation.

\section{Conclusion and Discussion} \label{section:conclusion}
In this study, we investigated the relation between Carleman and Koopman-von Neumann embedding, especially focusing on linear differential equations.
First, we investigated the necessary and sufficient condition for embedding 
a given linear differential equation into the Schr{\"o}dinger equation via Carleman embedding and Koopman-von Neumann embedding.
In both cases, the condition for reducing to the Schr\"{o}dinger equation is the same; $A$ is pure imaginary diagonalizable.
We also discussed the computational complexity of a related quantum algorithm to solve the linear differential equation using Carleman or Koopman-von Neumann embedding.
Next, we applied the above result to a special case where the linear differential equation 
are given as a Hamilton's canonical equation with a quadratic classical Hamiltonian.
We found that the matrix $\tilde H$ defining the classical quadratic Hamiltonian
is positive or negative, then the Hamilton's canonical equation can be mapped into 
the Schr{\"o}dinger equation.
The coupled harmonic oscillator system discussed in Ref.~\cite{babbush_exponential_2023}
is a special case of this framework.
Focusing on this relationship,
we showed the BQP-complete problem in Ref.~\cite{babbush_exponential_2023} belongs to the problem class that can be resolved efficiently using Koopman-von Neumann embedding.
This clearly indicates that 
solving differential equations using the Koopman-von Neumann embedding~\cite{engel_linear_2021,tanaka_quantum_2023}
includes nontrivial problems even in the linear case.
Furthermore, we found a nontrivial example that cannot be described as the coupled harmonic oscillator in Ref.~\cite{babbush_exponential_2023} but 
can be mapped to the Schr{\"o}dinger equation in our framework.

Our results have clarified the conditions under which general linear equations can be mapped to the Schr\"{o}dinger equation by means of Carleman or Koopman-von Neumann embedding.
This has laid the foundation for the concrete construction of future quantum algorithms for solving linear differential equations using Hamiltonian simulation. 
Furthermore, since the Carleman and Koopman-von Neumann embeddings are compatible in the linear domain, 
we expect that a perturbative approach of these arguments to the nonlinear domain will be useful for exploring quantum algorithms for nonlinear differential equations.

\begin{acknowledgements}
    This work is supported by MEXT Quantum Leap Flagship Program (MEXT Q-LEAP) Grant No. JPMXS0120319794 and JST COI-NEXT Grant No. JPMJPF2014.
\end{acknowledgements}

\bibliographystyle{apsrev4-1}
\bibliography{carleman-koopman-linear}

\appendix
\section{Carleman Embedding} \label{appendix:carleman-embedding}
Following Refs.~\cite{liu_efficient_2021, forets_explicit_2017}, we introduce the Carleman embedding which is the method of transforming a nonlinear differential equation into a linear differential equation.
First,
we prepare the tensor product of $x=(x_1, \dots, x_N)^\top \in \mathbb{R}^N$
\begin{equation}
    x^{[j]} \coloneqq \underbrace{x \otimes \dots \otimes x}_j .
\end{equation}
For example,
when $j=1, 2$, $x^{[1]}, x^{[2]}$ are $x^{[1]}=x, \ x^{[2]}=(x_1{}^2, x_1 x_2, \dots, x_{N} x_{N-1}, x_N {}^2)^\top$.
Assume that the differential equation is given by
\begin{equation} \label{eq:Nd-poly-differential-eq}
    \frac{d}{dt}
    \begin{pmatrix}
        x_1 \\
        \vdots \\
        x_N
    \end{pmatrix}
    =
    \begin{pmatrix}
        p_1(x_1, \dots, x_N) \\
        \vdots \\
        p_n(x_1, \dots, x_N)
    \end{pmatrix},
\end{equation}
where $p_1(X_1, \dots, X_N), \dots, p_n(X_1, \dots, X_N)$ are real polynomials in variables $X_1, \dots, X_N$ 
$\ \mathrm{with} \ p_1(\bm{0})=\dots=p_n(\bm{0})=0$.
Let $k$ be $k \coloneqq \max_{j \in \{1, \dots, N\} } \deg p_j(X_1, \dots, X_N)$,
Eq. \eqref{eq:Nd-poly-differential-eq} is rewritten by
\begin{equation} \label{eq:Nd-poly-differential-eq-matrix}
    \frac{d}{dt}
    \begin{pmatrix}
        x_1 \\
        \vdots \\
        x_N
    \end{pmatrix}
    =F_1 x+ F_2 x^{[2]}+\dots +F_k x^{[k]},
\end{equation}
where $F_l \in \mathbb{R}^{N \times N^l} \ (l \in \{1, \dots, k\})$ are suitable matrices.
Now, we define the transfer matrix
\begin{equation}
    A_{j+l-1}^j \coloneqq \sum_{\nu =1}^j \overbrace{I_N \otimes \dots \otimes \underbrace{F_l}_\nu \otimes \dots \otimes I_N}^j 
    \in \mathbb{R}^{N^j \times N^{j+l-1}} \hspace{1em} (j \in \mathbb{Z}_{>0}, l \in \{1, \dots, k\}),
\end{equation}
then we can transform Eq. \eqref{eq:Nd-poly-differential-eq-matrix} into
\begin{equation}
    \frac{d}{dt} x^{[j]} = \sum_{l=0}^{k-1} A_{j+l}^j x^{[j+l]} \hspace{1em} (j \in \mathbb{Z}_{>0}).
\end{equation}
Therefore,
we obtain
\begin{equation} \label{eq:Carleman-definition}
    \frac{d}{dt}
    \begin{pmatrix}
        x^{[1]} \\
        x^{[2]} \\
        x^{[3]} \\
        \vdots
    \end{pmatrix}
    =
    \begin{pmatrix}
        A_1^1 &A_2^1 &A_3^1 &\hdots &A_k^1 &0 &0 &\hdots \\
        0 &A_2^2 &A_3^2 &\hdots &A_k^2 &A_{k+1}^2 &0 &\hdots \\
        0 &0 &A_3^3 &\hdots &A_k^3 &A_{k+1}^3 &A_{k+2}^3 &\hdots \\
        \vdots &\vdots &\vdots & &\vdots &\vdots &\vdots &
    \end{pmatrix}
        \begin{pmatrix}
        x^{[1]} \\
        x^{[2]} \\
        x^{[3]} \\
        \vdots
    \end{pmatrix}.
\end{equation}
Let the matrix in the right hand of Eq. \eqref{eq:Carleman-definition} be $\mathcal{A}$,
and each components of $x^{[j]}$ be explicitly,
then we obtain
\begin{equation} \label{eq:Carleman-definition-2}
    \frac{d}{dt}
    \begin{pmatrix}
        x_1 \\
        \vdots \\
        x_n \\
        x_1 {}^2 \\
        x_1 x_2 \\
        \vdots \\
        x_n x_{n-1} \\
        x_n {}^2 \\
        \vdots
    \end{pmatrix}
    = \mathcal{A}
    \begin{pmatrix}
        x_1 \\
        \vdots \\
        x_n \\
        x_1 {}^2 \\
        x_1 x_2 \\
        \vdots \\
        x_n x_{n-1} \\
        x_n {}^2 \\
        \vdots
    \end{pmatrix}.
\end{equation}
The above method is called Carleman embedding, which transforms a nonlinear differential equation \eqref{eq:Nd-poly-differential-eq} into the linear differential equation of the form like Eq. \eqref{eq:Carleman-definition} and \eqref{eq:Carleman-definition-2}.

\section{Koopman-von Neumann Embedding} \label{appendix:koopman}
Following Refs.~\cite{kowalski_nonlinear_1997, tanaka_quantum_2023}, we introduce the Koopman-von Neumann embedding which is the method of transforming a nonlinear differential equation into a linear differential equation.
First, we introduce an orthonormal polynomial sequence and the Rodrigues' Formula.
Then, we show the details of the Koopman-von Neumann embedding with the Hermite polynomial.
We note that the Hermite polynomial $\{H_n(x)\}_{n \in \mathbb{Z}_{\ge 0}}$ is the orthonormal polynomial sequence and satisfies the Rodrigues' formula \eqref{eq:def-Rodrigues}.

\subsection{Rodrigues' Formula} \label{appendix-sub:Rodrigues}
First, we define a weight function and an orthonormal polynomial sequence as described in Ref.~\cite{chihara-intro-orthogonal-poly} with some modification.

\begin{definition}
    Let $a,b \in \mathbb{R}\cup \{\pm \infty\} \ \mathrm{with} \ a<b$.
    A function $w \colon (a,b) \to \mathbb{R}_{\ge 0}$ is called a weight function,
    when $w$ satisfies $w(x)>0$ on almost everywhere $x \in (a,b)$ and $\int_a^b \lvert x \rvert^n w(x) \, dx < \infty \ (n \in \mathbb{Z}_{\ge 0})$.
\end{definition}

\begin{definition}
    Let $a,b \in \mathbb{R}\cup \{\pm \infty\} \ \mathrm{with} \ a<b$, and $w$ be a weight function on $(a,b)$.
    A sequence $\{p_n(x)\}_{n=0}^\infty \subset \mathbb{R}[x]$ is called an orthonormal polynomial sequence,
    when $\{p_n(x)\}_{n=0}^\infty$ satisfies the following two conditions,
    \begin{enumerate}
        \item $p_n(x)$ is a polynomial of degree $n$ for $n \in \mathbb{Z}_{\ge 0}$,
        \item $\int_a^b p_m(x)p_n(x) w(x) \, dx = \delta_{m,n}$ for $m,n \in \mathbb{Z}_{\ge 0}$.
    \end{enumerate}
\end{definition}

Next, we define the Rodrigues' formula that defines a typical orthonormal polynomial sequence as described in Ref.~\cite{chihara-intro-orthogonal-poly} with some modification.

\begin{definition}
    Let $a,b$ be in $\mathbb{R}\cup \{\pm \infty\} \ \mathrm{with} \ a<b$, and $w$ be the weight function on $(a,b)$ that satisfies $w^\prime (x)/w(x) = (cx+d)/ X(x)$, where $c,d \in \mathbb{R}$ and $X(x) \in \mathbb{R}[x] \ \mathrm{with} \deg X(x) \le 2$.
    Moreover, we assume that $w(x), X(x)$ also satisfy $\left( \frac{d}{dx} \right)^k \left[ w(x)X(x)^n \right] =0$ at $x=a,b$ for $0 \le k < n$.
    An orthonormal polynomial sequence $\{p_n(x)\}_{n=0}^\infty$ is defined by a formula of type
    \begin{equation} \label{eq:def-Rodrigues}
        p_n(x) = \frac{1}{K_n} \frac{1}{w(x)} \frac{d^n}{dx^n} \left[ w(x) X(x)^n \right] \hspace{1em} (n \in \mathbb{Z}_{\ge 0}),
    \end{equation}
    where $K_n \in \mathbb{R} \setminus \{0\}$ is an appropriate constant.
    Eq. \eqref{eq:def-Rodrigues} is called the Rodrigues' formula.
\end{definition}

\subsection{Koopman-von Neumann Embedding with the Hermite Polynomial} \label{appendix-sub:Koopman-von Neumann-hermite-polynomial}
Now we select the Hermite polynomial $\{H_n(x)\}_{n \in \mathbb{Z}_{\ge 0}}$ as the orthonormal polynomial sequence that satisfies the Rodrigues' formula \eqref{eq:def-Rodrigues} in Appendix \ref{appendix-sub:Rodrigues}.
We assume that the following differential equations are given,
\begin{equation} \label{eq:Nd-nonlinear-differential-eq}
    \frac{d}{dt} x_j = F_j (\bm{x}) \hspace{1em} (j \in \{1, \dots, N \}),
\end{equation}
where $\bm{x}=(x_1, \dots, x_N)^\top$, and analytic function $F \colon \mathbb{R}^N \ni \bm{x} \mapsto \left( F_1(\bm{x}), \dots, F_N(\bm{x}) \right)^\top \in \mathbb{R}^N$.
Let $a_j^\dagger, a_j \ (j \in \{1, \dots, N\})$ be $j$th bosonic creation and annihilation operators, and $\{ \ket{n_j} \}_{n_j=0}^\infty$ be an eigenvector sequence of the number operator $N_j = a_j^\dagger a_j$ such that $N_j \ket{n_j}=n_j \ket{n_j}$.
Now we take the Hilbert space spanned by $\{ \otimes_{j=1}^N \ket{n_j} \}_{n_1, \dots, n_N \in \mathbb{Z}_{\ge 0}}$.
We introduce the special vector $\ket{\bm{x}}=\otimes_{j=1}^N \ket{x_j} \ \left(x_1, \dots, x_N \in (a,b) \right)$ defined by 
\begin{equation}
    \ket{\bm{x}}
    \coloneqq \bigotimes_{j=1}^N \left( w(x_j)^{1/2} \sum_{n_j=0}^\infty H_{n_j} (x_j) \ket{n_j} \right).
\end{equation}
We define the hermitian operator $\hat{x}_j \ (j \in \{1, \dots, N\})$ that is represented by $a_j^\dagger, a_j$ and satisfies $\hat{x}_j \ket{\bm{x}}= x_j \ket{\bm{x}}$ for $x_1, \dots, x_N \in (a,b)$.
We also define the hermitian operator $\hat{k}_j \ (j \in \{1, \dots, N\})$ that is represented by $a_j^\dagger, a_j$ and satisfies $[ \hat{x}_j, \hat{k}_l ] = i \delta_{j, l}$ for $j,l \in \{1, \dots, N\}$.
Now, the solution $\bm{x}$ of Eq. \eqref{eq:Nd-nonlinear-differential-eq} satisfies the differential equation
\begin{equation} \label{eq:koopman-definition}
    i \frac{d}{dt} \tilde{\ket{\bm{x}}} = \hat{H} \tilde{\ket{\bm{x}}},
\end{equation}
where the operator $\hat{H}$ is
\begin{equation} \label{eq:hamiltonian-koopman-hermite-poly}
    \hat{H}=\sum_{j=1}^N \frac{1}{2} \left( \hat{k}_j F_j(\hat{x}_j) +  F_j(\hat{x}_j) \hat{k}_j \right),
\end{equation}
and the vector $\tilde{\ket{\bm{x}}}$ is
\begin{equation} \label{eq:koopman-x-Nd}
\tilde{\ket{\bm{x}}} = \exp \left( \frac{1}{2} \int_0^t \mathrm{div} \bm{F} (\bm{x}(\tau)) \, d \tau \right) \ket{\bm{x}}.
\end{equation}
To obtain the Schr\"{o}dinger equation of $\ket{\bm{x}}$, we assume that \begin{equation} \label{eq:div-0}
    \mathrm{div} F(\bm{x}) = 0
\end{equation}
for $\bm{x} \in (a,b)^N$.
From Eq. \eqref{eq:koopman-x-Nd}, we obtain $\tilde{\ket{\bm{x}}} = \ket{\bm{x}}$,
then Eq. \eqref{eq:koopman-definition} becomes 
\begin{equation} \label{eq:koopman-definition-tr0}
    i \frac{d}{dt} \ket{\bm{x}} = \hat{H} \ket{\bm{x}}.
\end{equation}
The above method is called Koopman-von Neumann embedding, which transforms a nonlinear differential equation \eqref{eq:Nd-nonlinear-differential-eq} into the linear differential equation of the form like Eq. \eqref{eq:koopman-definition}.
In this case, we note that we choose the Hermite polynomial as the orthonormal polynomial sequence that satisfies the Rodrigues' formula \eqref{eq:def-Rodrigues} in Appendix \ref{appendix-sub:Rodrigues}.

\section{The Proof of Theorems in Section \ref{section:carleman-linear}, \ref{section:Koopman-von Neumann} and \ref{section:relation-carleman-koopman}}
\subsection{The Proof of Theorem \ref{thm:carleman-schrodinger}} \label{appendix-sub:carleman-thm}
First, we characterize an anti-hermitian matrix.

\begin{lemma} \label{lem:anti-hermitian}
For $N \in \mathbb{Z}_{>0}$,
let $A \in \mathbb{C}^{N \times N}$,
the following conditions are equivalent:
\begin{enumerate}
    \item $A$ is an anti-hermitian matrix.
    \item For all $\ket{x},\ket{y} \in \mathbb{C}^N$, $\braket{x | (A+A^\dagger) | y} = 0$.
\end{enumerate}
\end{lemma}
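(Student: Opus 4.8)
The plan is to prove the two implications $(1)\Rightarrow(2)$ and $(2)\Rightarrow(1)$ separately. The direction $(1)\Rightarrow(2)$ is immediate: if $A$ is anti-hermitian then $A^\dagger=-A$, so $A+A^\dagger=0$ as a matrix, and hence $\braket{x|(A+A^\dagger)|y}=0$ for every pair of vectors. No calculation is really needed here beyond substituting the definition.

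For the converse $(2)\Rightarrow(1)$, set $S\coloneqq A+A^\dagger$ and observe that $S$ is hermitian by construction. The hypothesis says $\braket{x|S|y}=0$ for all $\ket{x},\ket{y}\in\mathbb{C}^N$. First I would take $\ket{x}=\ket{y}$ to be the standard basis vectors $\ket{e_j}$, which forces every diagonal entry $S_{jj}$ to vanish; then taking $\ket{x}=\ket{e_j}$, $\ket{y}=\ket{e_k}$ with $j\neq k$ forces every off-diagonal entry $S_{jk}$ to vanish as well. Hence $S=O_{N\times N}$, i.e.\ $A^\dagger=-A$, which is exactly the statement that $A$ is anti-hermitian. (Alternatively, one can phrase it slightly more slickly: a matrix $S$ with $\braket{x|S|y}=0$ for all $\ket{x},\ket{y}$ must be the zero matrix, since choosing $\ket{y}$ to range over a basis shows $S\ket{y}=0$ for all basis $\ket{y}$.)

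There is essentially no obstacle; this lemma is a routine unpacking of definitions. The only minor point worth being careful about is working over $\mathbb{C}$ rather than $\mathbb{R}$: over the complexes, the bilinear-form argument using both $\ket{x}$ and $\ket{y}$ independently (rather than a single quadratic-form argument $\braket{x|S|x}=0$) is the clean route, and since the statement already quantifies over all $\ket{x}$ \emph{and} all $\ket{y}$ this causes no trouble. I would present the proof in two short paragraphs mirroring the two implications, with the basis-vector substitutions made explicit in the $(2)\Rightarrow(1)$ direction.
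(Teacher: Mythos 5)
Your proof is correct and follows essentially the same route as the paper: the forward direction by substituting $A+A^\dagger=0$, and the converse by evaluating $\braket{e_j|(A+A^\dagger)|e_k}$ on standard basis vectors to conclude every entry vanishes. Your explicit split into diagonal and off-diagonal entries is just a slightly more detailed phrasing of the paper's single-line basis argument.
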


\begin{proof}
    $(1. \Longrightarrow 2.)$
    Since $A$ is an anti-hermitian matrix, $A+A^\dagger = 0$. Then, for all $\ket{x},\ket{y} \in \mathbb{C}^N$, $\braket{x | (A+A^\dagger) | y} = 0$.
    \newline
    $(2. \Longrightarrow 1.)$
    Let $\ket{e_j} \in \mathbb{C}^N \ (j \in \{1, \dots, N \} )$ be the vector whose $j$ th component is 1 and the others are 0.
    From condition 2., for all $j,k \in \{1, \dots, N\}$, $\braket{e_j | (A+A^\dagger) | e_k}=0$, then $A+A^\dagger=0$. Thus, $A$ is an anti-hermitian matrix.
\end{proof}

Next, we characterize the condition of Theorem \ref{thm:carleman-schrodinger} using $C_B$.

\begin{prop} \label{prop:BACB-anti-hermitian}
Let $A \in \mathbb{R}^{N \times N}$.
For $B \in \mathbb{C}^{M \times N} \ \mathrm{with} \ \mathrm{rank} B =N$,
the following conditions are equivalent:
\begin{enumerate}
    \item There exists $C \in \mathbb{C}^{N \times M} \  \mathrm{with} \ CB=I_N$ such that $BAC$ is an anti-hermitian matrix.
    \item $BAC_B$ is an anti-hermitian matrix.
\end{enumerate}
\end{prop}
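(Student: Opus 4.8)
The plan is to handle the two implications separately, with essentially all the work in $(1 \Rightarrow 2)$. The implication $(2 \Rightarrow 1)$ is immediate: take $C = C_B$, which satisfies $C_B B = I_N$ by Lemma~\ref{lem:property-CB}, so condition~1 holds with this choice. For $(1 \Rightarrow 2)$ the key idea is to show that \emph{any} left inverse $C$ of $B$ (i.e.\ $CB = I_N$) for which $BAC$ is anti-hermitian must in fact satisfy $BAC = BAC_B$ as matrices, so that the anti-hermiticity transfers verbatim to $BAC_B$.

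To carry this out I would use the orthogonal decomposition $\mathbb{C}^M = \mathrm{Im}\, B \oplus (\mathrm{Im}\, B)^\perp$ for the standard Hermitian inner product. First, on $\mathrm{Im}\, B$: every vector there equals $B\bm{x}$ for a unique $\bm{x} \in \mathbb{C}^N$ (uniqueness since $\mathrm{rank}\, B = N$), and $CB\bm{x} = \bm{x} = C_B B \bm{x}$, hence $BAC(B\bm{x}) = BA\bm{x} = BAC_B(B\bm{x})$; so $BAC$ and $BAC_B$ agree on $\mathrm{Im}\, B$ irrespective of the choice of $C$. Second, on $(\mathrm{Im}\, B)^\perp$: by Lemma~\ref{lem:property-CB} we have $C_B|_{(\mathrm{Im}\, B)^\perp} = 0$, so $BAC_B$ vanishes there, and it remains to show $BAC$ does too. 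Here anti-hermiticity enters: since $BAC = B(AC)$ we have $\mathrm{Im}(BAC) \subseteq \mathrm{Im}\, B$, and $(BAC)^\dagger = -BAC$ gives $(\mathrm{Ker}(BAC))^\perp = \mathrm{Im}\bigl((BAC)^\dagger\bigr) = \mathrm{Im}(BAC) \subseteq \mathrm{Im}\, B$; taking orthogonal complements yields $(\mathrm{Im}\, B)^\perp \subseteq \mathrm{Ker}(BAC)$, i.e.\ $BAC$ annihilates $(\mathrm{Im}\, B)^\perp$.

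Combining the two cases over the direct sum gives $BAC = BAC_B$ on all of $\mathbb{C}^M$, so $BAC_B$ is anti-hermitian, completing $(1 \Rightarrow 2)$. The main obstacle --- really the only nonroutine step --- is recognizing that anti-hermiticity of a matrix whose image already lies in $\mathrm{Im}\, B$ forces it to kill $(\mathrm{Im}\, B)^\perp$; the rest is bookkeeping with the decomposition and Lemma~\ref{lem:property-CB}. As an alternative to this operator-identity argument one could instead invoke Lemma~\ref{lem:anti-hermitian} and verify $\braket{x | (BAC_B + (BAC_B)^\dagger) | y} = 0$ for all $x,y$, splitting $x,y$ along $\mathrm{Im}\, B \oplus (\mathrm{Im}\, B)^\perp$ and using $C_B|_{(\mathrm{Im}\, B)^\perp} = 0$ together with condition~1; but proving the stronger statement $BAC = BAC_B$ directly seems cleanest.
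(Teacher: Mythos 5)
Your proposal is correct and follows essentially the same route as the paper: both reduce $(1 \Rightarrow 2)$ to proving the operator identity $BAC = BAC_B$ by splitting $\mathbb{C}^M = \mathrm{Im}\, B \oplus (\mathrm{Im}\, B)^\perp$, noting agreement on $\mathrm{Im}\, B$ from $CB = C_B B = I_N$, and using anti-hermiticity together with $\mathrm{Im}(BAC) \subseteq \mathrm{Im}\, B$ to force $BAC$ to annihilate $(\mathrm{Im}\, B)^\perp$. The only (cosmetic) difference is that you establish this last vanishing via the duality $(\mathrm{Ker}\, M)^\perp = \mathrm{Im}(M^\dagger)$, whereas the paper does it by the explicit inner-product computation of its Lemma on anti-hermitian matrices.
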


\begin{proof}
    The implication $(2. \Longrightarrow 1.)$ is quite clear. To prove the converse, let's assume that condition 1. holds, and we want to show that $BAC_B = BAC$.
    First, notice that $C_B B = C B = I_N$. This means that for any vector $\ket{x} \in \mathrm{Im} B$, we have $C_B \ket{x} = C \ket{x}$. Consequently, we have $BAC_B \ket{x} = BAC \ket{x}$ for $\ket{x} \in \mathrm{Im} B$.
    
    Since $BAC$ is an anti-hermitian matrix, we can apply Lemma \ref{lem:anti-hermitian}. Therefore, for all pairs of vectors $(\ket{x}, \ket{y}) \in (\mathrm{Im} B) \times (\mathrm{Im} B)^\perp$, we have $\braket{x| BAC + (BAC)^\dagger |y} = 0$.
    Now, because $\ket{y} \in (\mathrm{Im} B)^\perp$, we can conclude that $\braket{x|(BAC)^\dagger |y} = (BAC \ket{x})^\dagger \ket{y} = 0$. Consequently, $\braket{x| BAC|y} = 0$.
    We also have $\braket{z| BAC|y} = 0$ for any $\ket{z} \in (\mathrm{Im} B)^\perp$.
    Given that $\mathbb{C}^M = (\mathrm{Im} B) \oplus (\mathrm{Im} B)^\perp$, for all pairs of vectors $(\ket{x}, \ket{y}) \in \mathbb{C}^M \times (\mathrm{Im} B)^\perp$, we still have $\braket{x| BAC|y} = 0$. 
    Therefore, $BAC \ket{y} = 0$ for all $\ket{y} \in (\mathrm{Im} B)^\perp$.
    According to Lemma \ref{lem:property-CB}, it is evident that $C_B |_{(\mathrm{Im} B)^\perp} = 0$. Consequently, for all $\ket{y} \in (\mathrm{Im} B)^\perp$, we have $BAC_B \ket{y} = 0$.
    Then, $BAC_B \ket{y} = BAC \ket{y}$ for all $\ket{y} \in (\mathrm{Im} B)^\perp$.
    
    Since $\mathbb{C}^M = (\mathrm{Im} B) \oplus (\mathrm{Im} B)^\perp$, for all vectors $\ket{x} \in \mathbb{C}^M$, we have $BAC_B \ket{x} = BAC \ket{x}$. Therefore, $BAC_B = BAC$.
    Considering condition 1., we can conclude that $BAC$ is an anti-hermitian matrix, which implies that $BAC_B$ is also an anti-hermitian matrix.
    As a result, condition 2. is satisfied.
\end{proof}

From the proof of Proposition \ref{prop:BACB-anti-hermitian}, the following Corollary \ref{cor:BAC=BACB} holds.

\begin{cor} \label{cor:BAC=BACB}
    Under the setting of Proposition \ref{prop:BACB-anti-hermitian}, if $BAC$ is an anti-hermitian matrix, then $BAC_B=BAC$.
\end{cor}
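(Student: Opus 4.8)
The plan is to recognize that the claimed equality $BAC_B = BAC$ is precisely the intermediate fact established partway through the proof of Proposition~\ref{prop:BACB-anti-hermitian} (on the way to concluding that $BAC_B$ is anti-hermitian), so the proof amounts to isolating that portion of the argument. The strategy is to compare $BAC_B$ and $BAC$ as linear maps on the orthogonal decomposition $\mathbb{C}^M = (\mathrm{Im}\,B) \oplus (\mathrm{Im}\,B)^\perp$ and to check that they agree on each summand.

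First I would handle $\mathrm{Im}\,B$: by Lemma~\ref{lem:property-CB} we have $C_B B = I_N$, and $CB = I_N$ by hypothesis, so $C_B$ and $C$ act identically on $\mathrm{Im}\,B$; hence $C_B\ket{x} = C\ket{x}$, and therefore $BAC_B\ket{x} = BAC\ket{x}$, for every $\ket{x} \in \mathrm{Im}\,B$. For $(\mathrm{Im}\,B)^\perp$, Lemma~\ref{lem:property-CB} gives $C_B|_{(\mathrm{Im}\,B)^\perp} = 0$, so $BAC_B\ket{y} = 0$ there, and it remains to show $BAC\ket{y} = 0$ for $\ket{y} \in (\mathrm{Im}\,B)^\perp$. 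This is where anti-hermiticity of $BAC$ enters: by Lemma~\ref{lem:anti-hermitian}, $\braket{x | BAC + (BAC)^\dagger | y} = 0$ for all $\ket{x},\ket{y}$. Taking $\ket{x} \in \mathrm{Im}\,B$ and $\ket{y} \in (\mathrm{Im}\,B)^\perp$, the adjoint term is $\braket{x | (BAC)^\dagger | y} = (BAC\ket{x})^\dagger\ket{y} = 0$ since $BAC\ket{x} \in \mathrm{Im}\,B$ is orthogonal to $\ket{y}$, whence $\braket{x | BAC | y} = 0$. The same inner product also vanishes for $\ket{x} \in (\mathrm{Im}\,B)^\perp$, directly because $BAC\ket{y} \in \mathrm{Im}\,B$. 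Summing these over $\mathbb{C}^M = (\mathrm{Im}\,B)\oplus(\mathrm{Im}\,B)^\perp$ yields $\braket{x | BAC | y} = 0$ for every $\ket{x} \in \mathbb{C}^M$, so $BAC\ket{y} = 0 = BAC_B\ket{y}$.

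Combining the two cases, $BAC_B$ and $BAC$ agree on all of $\mathbb{C}^M$, hence $BAC_B = BAC$. The only step that requires care is the vanishing of $BAC$ on $(\mathrm{Im}\,B)^\perp$, which genuinely needs both the anti-hermiticity hypothesis and the orthogonal decomposition of $\mathbb{C}^M$; the remaining pieces follow immediately from Lemmas~\ref{lem:property-CB} and~\ref{lem:anti-hermitian}.
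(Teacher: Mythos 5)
Your proof is correct and takes essentially the same route as the paper: the paper obtains Corollary \ref{cor:BAC=BACB} by pointing back to the $(1.\Longrightarrow 2.)$ direction of the proof of Proposition \ref{prop:BACB-anti-hermitian}, which establishes $BAC_B=BAC$ by checking agreement on $\mathrm{Im}\,B$ via $C_BB=CB=I_N$ and on $(\mathrm{Im}\,B)^\perp$ via Lemma \ref{lem:anti-hermitian} and $C_B|_{(\mathrm{Im}\,B)^\perp}=0$, exactly as you do. No gaps.
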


Next,
we investigate the condition equivalent to $BAC_B$ being an anti-hermitian matrix.

\begin{prop} \label{prop:BdBA-anti-hermitian}
Let $A \in \mathbb{R}^{N \times N}$.
For $B \in \mathbb{C}^{M \times N} \ \mathrm{with} \ \mathrm{rank} B =N$,
the following conditions are equivalent:
\begin{enumerate}
    \item $BAC_B$ is an anti-hermitian matrix.
    \item $B^\dagger BA$ is an anti-hermitian matrix.
\end{enumerate}
\end{prop}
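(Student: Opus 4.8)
The plan is to reduce both conditions to the vanishing of a single matrix built from $A$ and the Gram matrix $G \coloneqq B^\dagger B$. Since $\mathrm{rank} B = N$, the matrix $G$ is Hermitian and invertible --- this is precisely what is established in the proof of Lemma~\ref{lem:property-CB} --- and $C_B = G^{-1} B^\dagger$, so that $BAC_B = B A G^{-1} B^\dagger$. Taking adjoints and using $G^{-1} = (G^{-1})^\dagger$ gives $(BAC_B)^\dagger = B G^{-1} A^\dagger B^\dagger$, hence $BAC_B + (BAC_B)^\dagger = B\,(AG^{-1} + G^{-1}A^\dagger)\,B^\dagger$. Likewise $B^\dagger B A = GA$ and $(GA)^\dagger = A^\dagger G$, so $B^\dagger B A + (B^\dagger B A)^\dagger = GA + A^\dagger G = G\,(AG^{-1} + G^{-1}A^\dagger)\,G$. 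Setting $X \coloneqq AG^{-1} + G^{-1}A^\dagger$, condition~1 is equivalent to $BXB^\dagger = 0$ and condition~2 is equivalent to $GXG = 0$.

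Because $G$ is invertible, $GXG = 0$ holds iff $X = 0$, so it remains only to prove that $BXB^\dagger = 0$ iff $X = 0$. The implication $X = 0 \Rightarrow BXB^\dagger = 0$ is immediate. For the converse, I would use that $\mathrm{rank} B = N$ makes $B \colon \mathbb{C}^N \to \mathbb{C}^M$ injective and $B^\dagger \colon \mathbb{C}^M \to \mathbb{C}^N$ surjective (since $\mathrm{rank} B^\dagger = \mathrm{rank} B = N$). From $B\,(XB^\dagger) = 0$ and injectivity of $B$ one gets $XB^\dagger = 0$, and then surjectivity of $B^\dagger$ forces $X = 0$; equivalently, for all $a,b \in \mathbb{C}^M$ we have $0 = a^\dagger B X B^\dagger b = (B^\dagger a)^\dagger X (B^\dagger b)$, and since $B^\dagger a$ and $B^\dagger b$ sweep out all of $\mathbb{C}^N$ this gives $X = 0$. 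Chaining the equivalences yields condition~1 $\iff X = 0 \iff$ condition~2.

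It is worth noting that the implication $2 \Rightarrow 1$ also has a direct one-line check: if $GA = -A^\dagger G$, then $BAC_B = BG^{-1}(GA)G^{-1}B^\dagger = -BG^{-1}A^\dagger B^\dagger = -(BAC_B)^\dagger$. I do not expect a genuine obstacle here; the only point that needs care is the use of $\mathrm{rank} B = N$ to pass from $BXB^\dagger = 0$ to $X = 0$ (injectivity of $B$ together with surjectivity of $B^\dagger$), plus the routine bookkeeping that $G$ and $G^{-1}$ are fixed under conjugation. Note also that reality of $A$ plays no role in this proposition: the argument goes through verbatim for any $A \in \mathbb{C}^{N \times N}$.
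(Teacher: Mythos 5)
Your proof is correct and follows essentially the same route as the paper's: both rest on writing $C_B=(B^\dagger B)^{-1}B^\dagger$ and exploiting that $G=B^\dagger B$ is Hermitian and invertible (the paper phrases each direction as a conjugation $S\mapsto P^\dagger S P$ preserving anti-hermiticity, using $C_B B=I_N$ for $1\Rightarrow 2$, while you package both conditions as the vanishing of the single matrix $X=AG^{-1}+G^{-1}A^\dagger$ and pass from $BXB^\dagger=0$ to $X=0$ via injectivity of $B$ and surjectivity of $B^\dagger$). Your side remarks --- the one-line check of $2\Rightarrow 1$ and the observation that reality of $A$ is never used --- are also accurate.
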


\begin{proof}
    $(1. \Longrightarrow 2.)$
    Since $BAC_B$ is an anti-hermitian matrix and $C_BB=I_N$ from Lemma \ref{lem:property-CB}, $B^\dagger (BAC_B) B=B^\dagger BA$ is an anti-hermitian matrix.
    Therefore, condition 2. holds.
    \newline
    $(2. \Longrightarrow 1.)$
    Since $\mathrm{rank} B =N$, $B^\dagger B$ is a regular matrix as proved in Lemma \ref{lem:property-CB}.
    From condition 2., 
    \begin{align}
        (B^\dagger B)^{-1} (B^\dagger BA) \left( (B^\dagger B)^{-1} \right)^\dagger = A (B^\dagger B)^{-1}
    \end{align} 
    is an anti-hermitian matrix.
    Since $C_B = (B^\dagger B)^{-1} B^\dagger$,
    $B\left( A (B^\dagger B)^{-1} \right) B^\dagger = BAC_B$ is an anti-hermitian matrix.
    Therefore, condition 1. holds.
\end{proof}

Next,
we prove the equivalence of the set of $B^\dagger B$ and the set of positive matrices.

\begin{lemma} \label{lem:complex-positive}
\begin{equation}
\left\{ B^\dagger B \mid B \in \mathbb{C}^{M \times N} \ \mathrm{with} \ \mathrm{rank} B =N \right\}
=\left\{ D \in \mathbb{C}^{N \times N} \mid D >0 \right\}
\end{equation}
holds.
\end{lemma}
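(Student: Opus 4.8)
The plan is to prove the two inclusions separately; in fact the proof is essentially a repackaging of the computation already carried out in Lemma~\ref{lem:property-CB}.

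For the inclusion $\subseteq$, I would take an arbitrary $B \in \mathbb{C}^{M \times N}$ with $\mathrm{rank} B = N$ and check that $D \coloneqq B^\dagger B$ is positive definite. Hermiticity is immediate since $(B^\dagger B)^\dagger = B^\dagger B$. For positivity, for any $x \in \mathbb{C}^N$ one has $x^\dagger B^\dagger B x = \lVert Bx \rVert^2 \ge 0$, and equality forces $Bx = 0$; since $\mathrm{rank} B = N$ gives $\dim \mathrm{Ker} B = 0$ by the rank--nullity theorem, this means $x = 0$. Hence $x^\dagger D x > 0$ for all $x \neq 0$, i.e.\ $D > 0$. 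This is exactly the argument already used to establish the invertibility of $B^\dagger B$ in Lemma~\ref{lem:property-CB}.

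For the inclusion $\supseteq$, I would start from an arbitrary positive definite $D \in \mathbb{C}^{N \times N}$ and exhibit a suitable $B$. Spectrally decompose $D = U \Lambda U^\dagger$ with $U$ unitary and $\Lambda = \mathrm{diag}(\lambda_1, \dots, \lambda_N)$, $\lambda_j > 0$; set $D^{1/2} \coloneqq U\, \mathrm{diag}(\sqrt{\lambda_1}, \dots, \sqrt{\lambda_N})\, U^\dagger$, which is Hermitian, positive definite, and satisfies $(D^{1/2})^\dagger D^{1/2} = D$. If $M = N$, take $B = D^{1/2}$; if $M > N$, take $B$ to be $D^{1/2}$ with $M - N$ additional zero rows appended, so that $B \in \mathbb{C}^{M \times N}$, $B^\dagger B = (D^{1/2})^\dagger D^{1/2} = D$, and $\mathrm{rank} B = \mathrm{rank} D^{1/2} = N$ because $D^{1/2}$ is invertible. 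Thus $D$ lies in the left-hand set.

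I do not anticipate a genuine obstacle here; the only point requiring a little care is that the claimed equality should hold for every fixed $M \ge N$ (not merely for some $M$), which is why the zero-row padding in the second inclusion is spelled out explicitly. One could alternatively phrase the $\supseteq$ direction via a Cholesky factorization of $D$ in place of the principal square root; either route works.
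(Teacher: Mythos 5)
Your proof is correct and follows essentially the same route as the paper: the $\subseteq$ direction is the positivity-plus-injectivity argument already used to invert $B^\dagger B$ in Lemma~\ref{lem:property-CB}, and the $\supseteq$ direction constructs $B$ by stacking the principal square root $\sqrt{D}$ on top of an $(M-N)\times N$ zero block, exactly as in the paper. No gaps.
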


\begin{proof}
    $( \subset )$
    Let $B \in \mathbb{C}^{M \times N} \ \mathrm{with} \ \mathrm{rank} B =N$.
    Since $B^\dagger B$ satisfies $B^\dagger B \ge 0$ and $B^\dagger B$ is a regular matrix as proved in Lemma \ref{lem:property-CB}, thus $B^\dagger B$ satisfies $B^\dagger B > 0$.
    That is, $B^\dagger B \in \left\{ D \in \mathbb{C}^{N \times N} \mid D >0 \right\}$.
    \newline
    $( \supset )$
    Let $D \in \mathbb{C}^{N \times N} \ \mathrm{with} \  D >0$.
    Since $D>0$, $\sqrt{D} \in \mathbb{C}^{N \times N}$ exists and satisfies $\sqrt{D}>0$.
    $\sqrt{D}>0$ leads $\mathrm{rank} \sqrt{D} = N$.
    We suppose $B \coloneqq 
    \begin{pmatrix}
        \sqrt{D} \\
        O_{(M-N) \times N}
    \end{pmatrix}
    \in \mathbb{C}^{M \times N}
    $.
    Since $\mathrm{rank} \sqrt{D} = N$, $\mathrm{rank} B = N$ holds.
    From the definition of $B$, $B^\dagger B = D$ holds.
    Then, $D \in \left\{ B^\dagger B \mid B \in \mathbb{C}^{M \times N} \ \mathrm{with} \ \mathrm{rank} B =N \right\}$.
\end{proof}

Proposition \ref{prop:DA-anti-hermitian} is proved by Lemma \ref{lem:complex-positive}.

\begin{prop} \label{prop:DA-anti-hermitian}
For $A \in \mathbb{R}^{N \times N}$,
the following conditions are equivalent:
\begin{enumerate}
    \item There exists $B \in \mathbb{C}^{M \times N} \ \mathrm{with} \ \mathrm{rank} B=N$ such that $B^\dagger BA$ is an anti-hermitian matrix.
    \item There exists $D \in \mathbb{C}^{N \times N} \ \mathrm{with} \ D>0$ such that $DA$ is an anti-hermitian matrix.
\end{enumerate}
\end{prop}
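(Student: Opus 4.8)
The plan is to deduce this equivalence immediately from Lemma \ref{lem:complex-positive}, which identifies the set $\{B^\dagger B \mid B \in \mathbb{C}^{M \times N},\ \mathrm{rank} B = N\}$ with the set $\{D \in \mathbb{C}^{N \times N} \mid D > 0\}$ of positive definite matrices. Both conditions of the Proposition have the same form once this identification is in place: condition~1 asserts the existence of a matrix of the first set whose product with $A$ (on the right) is anti-hermitian, and condition~2 asserts exactly the same for a matrix of the second set. Since the two sets coincide, so do the two existence statements.

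Concretely, for $(1 \Rightarrow 2)$ I would take the matrix $B$ furnished by condition~1 and put $D := B^\dagger B$; the $(\subset)$ inclusion of Lemma \ref{lem:complex-positive} gives $D > 0$, and $DA = B^\dagger B A$ is anti-hermitian by hypothesis. For $(2 \Rightarrow 1)$ I would take the matrix $D$ furnished by condition~2 and apply the $(\supset)$ inclusion of Lemma \ref{lem:complex-positive} to obtain some $B \in \mathbb{C}^{M \times N}$ with $\mathrm{rank} B = N$ and $B^\dagger B = D$ (for instance $B = \sqrt{D}$ when $M = N$, padded with zero rows otherwise, exactly as in the proof of that lemma); then $B^\dagger B A = DA$ is anti-hermitian.

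There is no real obstacle here: all the substance has already been packaged into Lemma \ref{lem:complex-positive}, whose only nontrivial ingredient is that a positive definite matrix $D$ possesses a positive definite square root, so that $D = (\sqrt{D})^\dagger \sqrt{D}$ with $\sqrt{D}$ of full rank. The one point to remain aware of is that the row dimension $M$ is a free parameter subject only to $M \ge N$; this is harmless, since the construction above produces a valid $B$ for every such $M$.
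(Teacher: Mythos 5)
Your proposal is correct and matches the paper exactly: the paper likewise derives this proposition immediately from Lemma \ref{lem:complex-positive}, substituting $D = B^\dagger B$ in one direction and recovering a full-rank $B$ with $B^\dagger B = D$ in the other. No gaps.
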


Finally, we show the equivalence of the condition about only $A$.

\begin{prop} \label{prop:complex-A-diagonalizable-pure-imaginary}
For $A \in \mathbb{R}^{N \times N}$,
the following conditions are equivalent:
\begin{enumerate}
    \item There exists $D \in \mathbb{C}^{N \times N} \ \mathrm{with} \ D>0$ such that $DA$ is an anti-hermitian matrix.
    \item $A$ is diagonalizable and $\sigma_p(A) \subset i \mathbb{R}$.
\end{enumerate}
\end{prop}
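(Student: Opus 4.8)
The plan is to prove the two implications separately. The bridge in both directions is the observation that, for $D>0$, the positive definite square root $D^{1/2}$ exists, is hermitian, and is regular, so that the similarity $A\mapsto D^{1/2}AD^{-1/2}$ produces the same matrix as the congruence $D^{-1/2}(DA)D^{-1/2}$ of $DA$; this lets us pass between the ``weighted'' anti-hermiticity of $DA$ and an intrinsic spectral statement about $A$.

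\textbf{$(2. \Longrightarrow 1.)$} Here I would simply exhibit an explicit witness $D$. Since $A$ is diagonalizable with $\sigma_p(A)\subset i\mathbb{R}$, choose a regular $P\in\mathbb{C}^{N\times N}$ with $P^{-1}AP=\Lambda$, where $\Lambda$ is diagonal and all its diagonal entries lie in $i\mathbb{R}$, so $\Lambda^\dagger=-\Lambda$. Set $D:=(P^{-1})^\dagger P^{-1}$; since $P^{-1}$ is regular, $D$ is positive definite. Using $P^{-1}P=I_N$ one gets $DA=(P^{-1})^\dagger P^{-1}\,P\Lambda P^{-1}=(P^{-1})^\dagger\Lambda P^{-1}$, hence $(DA)^\dagger=(P^{-1})^\dagger\Lambda^\dagger P^{-1}=-(P^{-1})^\dagger\Lambda P^{-1}=-DA$. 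Thus $DA$ is anti-hermitian, which is condition~1.

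\textbf{$(1. \Longrightarrow 2.)$} Conversely, suppose $D>0$ and $(DA)^\dagger=-DA$. Form $\tilde A:=D^{1/2}AD^{-1/2}$. Writing $D=D^{1/2}D^{1/2}$ gives $\tilde A=D^{-1/2}(DA)D^{-1/2}$, so, since $D^{-1/2}$ is hermitian, $\tilde A^\dagger=D^{-1/2}(DA)^\dagger D^{-1/2}=-\tilde A$: congruence by the hermitian invertible matrix $D^{-1/2}$ turns the anti-hermitian matrix $DA$ into the anti-hermitian matrix $\tilde A$. An anti-hermitian matrix is normal, hence unitarily diagonalizable, and $i\tilde A$ is hermitian, so its eigenvalues are real, i.e.\ $\sigma_p(\tilde A)\subset i\mathbb{R}$. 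Since $\tilde A$ is similar to $A$ via the regular matrix $D^{1/2}$, and both diagonalizability and the eigenvalue set are similarity invariants, $A$ is diagonalizable with $\sigma_p(A)\subset i\mathbb{R}$, which is condition~2.

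\textbf{Where the work is.} There is no deep obstacle: the only points requiring care are that $D^{1/2}$ is well defined and invertible --- which is exactly what $D>0$ provides --- and that congruence by a hermitian invertible matrix preserves anti-hermiticity while simultaneously realizing a similarity. The one genuinely non-mechanical step is guessing the correct witness $D=(P^{-1})^\dagger P^{-1}$ in the $(2.\Longrightarrow 1.)$ direction; the rest is routine bookkeeping. Together with Propositions \ref{prop:BACB-anti-hermitian}, \ref{prop:BdBA-anti-hermitian}, and \ref{prop:DA-anti-hermitian}, this completes the proof of Theorem \ref{thm:carleman-schrodinger}.
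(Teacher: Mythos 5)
Your proof is correct and takes essentially the same approach as the paper's: the $(1.\Longrightarrow 2.)$ direction is identical (congruence of $DA$ by $\sqrt{D}^{-1}$ realizes the similarity $A \mapsto \sqrt{D}A\sqrt{D}^{-1}$ onto an anti-hermitian matrix), and in $(2.\Longrightarrow 1.)$ your explicit witness $D=(P^{-1})^\dagger P^{-1}$ is, up to the opposite convention for the diagonalizing matrix, exactly the matrix $P'^2$ that the paper extracts via the polar decomposition $P=UP'$. The only difference is that you write the witness down and verify it directly rather than deriving it through the polar-decomposition detour.
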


\begin{proof}
    $(1. \Longrightarrow 2.)$
    Since $D>0$, $\sqrt{D} \in \mathbb{C}^{N \times N}$ exists and satisfies $\sqrt{D}>0$.
    $DA$ is an anti-hermitian matrix, then $\sqrt{D}^{-1} (DA) \left( \sqrt{D}^{-1} \right)^\dagger = \sqrt{D} A \sqrt{D}^{-1}$ is an anti-hermitian matrix.
    Since $A$ is similar to $\sqrt{D} A \sqrt{D}^{-1}$ and $\sqrt{D} A \sqrt{D}^{-1}$ is an anti-hermitian matrix, $A$ is diagonalizable and $\sigma_p (A) = \sigma_p \left( \sqrt{D} A \sqrt{D}^{-1} \right) \subset i \mathbb{R}$.
    Therefore, condition 2. holds.
    \newline
    $(2. \Longrightarrow 1.)$
    According to condition 2., there exists $P \in \mathbb{C}^{N \times N}$ such that $PAP^{-1}$ is a diagonal matrix.
    Since $\sigma_p (A) \subset i \mathbb{R}$, $PAP^{-1}$ is an anti-hermitian matrix.
    Since $P$ is a regular matrix, we can apply the polar decomposition for $P$, then there exists a unitary matrix $U \in \mathbb{C}^{N \times N}$ and a positive matrix $P' \in \mathbb{C}^{N \times N}$ (i.e., $P'>0$) such that $P=UP'$.
    Since $PAP^{-1}$ is an anti-hermitian matrix, $U^\dagger (PAP^{-1}) U = U^\dagger (UP'AP'^{-1}U^\dagger) U = P'AP'^{-1}$ is also an anti-hermitian matrix.
    Since $P'AP'^{-1}$ is an anti-hermitian matrix, $P' (P'AP'^{-1}) P'^\dagger = P'^2A$ is also an anti-hermitian matrix.
    $P'$ satisfies $P' > 0$, then $P'^2 > 0$ holds.
    Let $D \in \mathbb{C}^{N \times N}$ be $D=P'^2$, $D$ satisfies $D>0$ and $DA$ is an anti-hermitian matrix.
    Therefore, condition 1. holds.
\end{proof}

From Proposition \ref{prop:BACB-anti-hermitian}, \ref{prop:BdBA-anti-hermitian}, \ref{prop:DA-anti-hermitian} and \ref{prop:complex-A-diagonalizable-pure-imaginary},
Theorem \ref{thm:carleman-schrodinger} is proved.

\subsection{The Proof of Theorem \ref{thm:koopman-linear-schrodinger-condition}} \label{appendix-sub:koopman}
First, we characterize an anti-symmetric matrix.

\begin{lemma} \label{lem:anti-symmetric}
For $N \in \mathbb{Z}_{>0}$,
let $A \in \mathbb{R}^{N \times N}$, 
the following conditions are equivalent:
\begin{enumerate}
    \item $A$ is a real anti-symmetric matrix.
    \item For all $\ket{x} \in \mathbb{R}^N$, $\braket{x|A|x} = 0$.
\end{enumerate}
\end{lemma}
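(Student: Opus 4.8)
The plan is to prove the two implications separately, in parallel with the proof of Lemma~\ref{lem:anti-hermitian}.

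For $(1. \Longrightarrow 2.)$ I would observe that for any $\ket{x} \in \mathbb{R}^N$ the quantity $\braket{x|A|x} = x^\top A x$ is a real scalar, hence equal to its own transpose: $x^\top A x = (x^\top A x)^\top = x^\top A^\top x$. Using $A^\top = -A$ this equals $-x^\top A x$, so $2\,x^\top A x = 0$ and therefore $\braket{x|A|x} = 0$.

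For $(2. \Longrightarrow 1.)$ I would polarize, exactly as in the $(2. \Longrightarrow 1.)$ part of Lemma~\ref{lem:anti-hermitian}. Let $\ket{e_j} \in \mathbb{R}^N$ denote the vector whose $j$-th component is $1$ and whose other components are $0$. Applying condition 2. to $\ket{e_j}$ gives the diagonal entry $\braket{e_j|A|e_j} = 0$ for every $j$, and applying it to $\ket{e_j}+\ket{e_k}$ and expanding gives
\begin{equation*}
0 = \braket{e_j + e_k | A | e_j + e_k} = \braket{e_j|A|e_j} + \braket{e_j|A|e_k} + \braket{e_k|A|e_j} + \braket{e_k|A|e_k} = \braket{e_j|A|e_k} + \braket{e_k|A|e_j}
\end{equation*}
for all $j,k$. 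Hence the $(j,k)$ and $(k,j)$ entries of $A$ sum to zero, i.e.\ $A^\top = -A$, so $A$ is a real anti-symmetric matrix.

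I do not expect any real obstacle here; the one conceptual point to keep in mind is that the real quadratic form $x \mapsto x^\top A x$ only depends on the symmetric part $\tfrac{1}{2}(A+A^\top)$ of $A$, so its vanishing on all of $\mathbb{R}^N$ is exactly equivalent to the symmetric part being zero, which is precisely condition 1. This is also the reason that, unlike the complex Hermitian case in Lemma~\ref{lem:anti-hermitian} where a single quadratic form $\braket{x|(A+A^\dagger)|x}$ does not suffice, no additional bilinear test on pairs $(\ket{x},\ket{y})$ is needed once we work over $\mathbb{R}$.
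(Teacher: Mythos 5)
Your proof is correct and follows essentially the same route as the paper's: both directions rest on the fact that the quadratic form $x^\top A x$ sees only the symmetric part $\tfrac{1}{2}(A+A^\top)$, and the converse is obtained by polarization (the paper polarizes with general vectors $\ket{x}\pm\ket{y}$, you with standard basis vectors $\ket{e_j}+\ket{e_k}$ --- a cosmetic difference). One small caveat on your closing aside: over $\mathbb{C}$ the vanishing of the single quadratic form $\braket{x|M|x}$ for all $\ket{x}\in\mathbb{C}^N$ does in fact force $M=0$ (by complex polarization), so the bilinear test on pairs in Lemma~\ref{lem:anti-hermitian} is a choice of formulation rather than a necessity; this does not affect the validity of your argument here.
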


\begin{proof}
    $(1. \Longrightarrow 2.)$
    $A$ is a real anti-symmetric matrix, then $A+A^\top = 0$.
    Then, for all $\ket{x} \in \mathbb{R}^N$, $0 = \braket{x|\left(A+A^\top\right)|x}=2 \braket{x|A|x}$, thus $\braket{x|A|x} = 0$.
    Therefore, condition 2. holds.
    \newline
    $(2. \Longrightarrow 1.)$
    From condition 2., for all $\ket{x}, \ket{y} \in \mathbb{R}^N$, 
    \begin{equation}
        \braket{x| (A+A^\top) | y} 
        = \frac{1}{2} \left\{ 
        (\bra{x}+\bra{y}) A (\ket{x}+\ket{y}) 
        -  (\bra{x}-\bra{y}) A (\ket{x}-\ket{y})
        \right\}
        =0.
    \end{equation}
    Then $A+A^\top = 0$ holds, that is, $A \in \mathbb{R}^{N \times N}$ is a real anti-symmetric matrix.
\end{proof}

Next, using $C_B$, we characterize the condition of Theorem \ref{thm:koopman-linear-schrodinger-condition}.

\begin{prop} \label{prop:koopman-preserving}
    We suppose that a differential equation $\frac{d}{dt} \bm{x}(t) = A \bm{x} (t)$ with any initial value $\bm{x} (0) \in \mathbb{R}^N$, where $A \in \mathbb{R}^{N \times N}$.
    The following conditions are equivalent:
    \begin{enumerate}
        \item There exists $(B, C) \in \mathbb{R}^{M \times N} \times \mathbb{R}^{N \times M} \ \mathrm{with} \ \mathrm{rank} B = N \ \mathrm{and} \ CB=I_N$ such that 
        a linear differential equation for $\bm{y}(t):=B\bm{x}$, 
        $\frac{d}{dt} \bm{y}(t)  = BAC \bm{y}(t)$ satisfies the mapped Hamiltonian preserves the total fock number and $\mathrm{div} (BAC \bm{y})=0$, i.e., $\mathrm{tr} (BAC) =0$ holds.
        \item There exists $B \in \mathbb{R}^{M \times N} \ \mathrm{with} \ \mathrm{rank} B = N$ such that $BAC_B$ is a real anti-symmetric matrix.
    \end{enumerate}
\end{prop}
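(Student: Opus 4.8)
The plan is to convert condition~1 into a purely algebraic statement about the matrix $G:=BAC$, and then to reuse Proposition~\ref{prop:BACB-anti-hermitian} and Lemma~\ref{lem:property-CB} to reach condition~2. The pivotal intermediate claim, which I would isolate as a lemma, is that the mapped Hamiltonian associated with $\frac{d}{dt}\bm{y}=G\bm{y}$ --- that is, Eq.~\eqref{eq:mapped-hamiltonian} with $A$ and $N$ replaced by $G$ and $M$ --- commutes with the total number operator $\hat N=\sum_{j=1}^M a_j^\dagger a_j$ if and only if $G$ is real anti-symmetric.

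To prove that claim I would expand $\hat x_l$ and $\hat k_j$ in the usual ladder-operator representation underlying the Hermite embedding and normal-order the symmetrized products appearing in $\hat H_G=\frac{1}{2}\sum_{j,l}g_{jl}(\hat k_j\hat x_l+\hat x_l\hat k_j)$. A short commutator computation gives
\begin{equation*}
\hat H_G=\frac{i}{2}\sum_{j,l}g_{jl}(a_j^\dagger a_l-a_l^\dagger a_j)+\frac{i}{2}\sum_{j,l}g_{jl}(a_j^\dagger a_l^\dagger-a_j a_l),
\end{equation*}
where the first sum commutes with $\hat N$ and the second changes $\hat N$ by $\pm 2$. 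Since $a_j^\dagger a_l^\dagger=a_l^\dagger a_j^\dagger$ and the operators $\{a_j^\dagger a_l^\dagger\}_{j\le l}$ are linearly independent, the second sum --- equivalently, $[\hat H_G,\hat N]$ --- vanishes exactly when $g_{jl}+g_{lj}=0$ for all $j,l$, i.e.\ when the symmetric part of $G$ is zero. I would also observe that anti-symmetry forces $\mathrm{tr}\,G=0$, so the divergence-free requirement in condition~1 is automatically implied once Fock-number preservation holds.

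Given this characterization the two implications are short. For $(1\Rightarrow2)$: condition~1 yields that $BAC$ is real anti-symmetric, hence anti-hermitian; since the real matrix $C\in\mathbb{C}^{N\times M}$ satisfies $CB=I_N$, Proposition~\ref{prop:BACB-anti-hermitian} applies and gives that $BAC_B$ is anti-hermitian; because $B$ is real, $C_B=(B^\top B)^{-1}B^\top$ is real, so $BAC_B$ is real and anti-hermitian, i.e.\ real anti-symmetric, which is exactly condition~2. For $(2\Rightarrow1)$: take $C\coloneqq C_B$, which is real and satisfies $C_B B=I_N$ by Lemma~\ref{lem:property-CB}; then $BAC=BAC_B$ is real anti-symmetric by hypothesis, so $\mathrm{tr}(BAC)=0$, and by the characterization the mapped Hamiltonian for $\frac{d}{dt}\bm{y}=BAC\bm{y}$ preserves the total Fock number --- so condition~1 holds.

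I expect the ladder-operator computation in the first step to be the main obstacle: one must track the commutators carefully to obtain the clean split into a number-preserving piece and a $\pm2$-changing piece, and, crucially, verify that the latter vanishes precisely under full anti-symmetry of $G$ rather than merely under the weaker condition $\mathrm{tr}\,G=0$ (a symmetric traceless $G$ already produces a nonzero number-changing term). Everything downstream is a direct bookkeeping application of results already established for the Carleman case, the only extra point being that realness of $B$ makes $C_B$ real, so that ``anti-hermitian'' and ``anti-symmetric'' coincide here.
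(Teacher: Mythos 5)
Your proof is correct, and for the direction $(1\Rightarrow 2)$ it takes a genuinely different route from the paper. Your key lemma --- that the number-changing part $\tfrac{i}{2}\sum_{j,l}g_{jl}(a_j^\dagger a_l^\dagger-a_ja_l)$ of $\hat H_G$ vanishes, equivalently $[\hat H_G,\hat N]=0$, precisely when the symmetric part of $G$ vanishes --- is a correct computation (the operators $a_j^\dagger a_l^\dagger$ for $j\le l$ are indeed independent, e.g.\ by applying them to the vacuum), and it lets you read off anti-symmetry of $BAC$ \emph{directly} from the Fock-number-preservation hypothesis, after which Proposition~\ref{prop:BACB-anti-hermitian} and the realness of $C_B=(B^\top B)^{-1}B^\top$ finish the job. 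The paper instead argues dynamically: it restricts the Schr\"odinger equation to the invariant one-particle sector, uses norm conservation and a continuity argument to show $\sum_j y_j(t)^2$ is constant so that the Gaussian weight cancels, deduces $BA=-i\hat H^{(1)}B$ from the arbitrariness of the initial condition, and then shows $\braket{y|BAC_B|y}$ is simultaneously real and purely imaginary, invoking Lemma~\ref{lem:anti-symmetric}. Your argument is shorter, avoids the delicate step about the finitely many solutions of $e^{-x/2}\sqrt{x}=a$, and proves the slightly stronger statement that Fock-number preservation alone (without the trace/divergence condition or any appeal to solutions of the ODE) forces anti-symmetry; the paper's version has the side benefit of making explicit how the one-particle reduction reproduces the original linear ODE, which it reuses in Corollary~\ref{cor:koopman-hamiltonian-BACB}. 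Your $(2\Rightarrow 1)$ direction coincides with the paper's, which performs the same normal-ordering computation under the anti-symmetry assumption.
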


\begin{proof}
    $(1. \Longrightarrow 2.)$
    According to condition 1., the Hamiltonian preserves the total fock number, so the Hamiltonian can be block diagonalized with bases $\{ 
    \ket{0 \dots 0}, \ket{1 0 \dots 0}, \dots, \ket{0 \dots 0 1}, \ket{2 0 \dots 0}, \ket{110\dots 0}, \dots, \ket{0 \dots 02}, \dots \}$.
    Since $\mathrm{div} (BAC \bm{y}) =0$, 0-th Hermite polynomial is $H_0(x)=1$, 1-st Hermite polynomial is $H_1(x)=2x$, and the weight function of the Hermite polynomials is $w(x)=\exp(-x^2)$, the Schr\"{o}dinger equation like Eq. \eqref{eq:koopman-definition-tr0} restricted in the subspace spanned by $\{\ket{10\dots0}, \dots, \ket{0\dots01} \}$ becomes
    \begin{equation} \label{eq:koopman-schrodinger-1}
        i \frac{d}{dt}
        \left\{
        \exp \left( - \sum_{j=1}^M \frac{y_j(t)^2}{2} \right)
        \begin{pmatrix}
        2 y_1(t) \\
        \vdots \\
        2 y_M(t)
        \end{pmatrix}
        \right\}
        =
        \hat{H}^{(1)}
        \left\{
        \exp \left( - \sum_{j=1}^M \frac{y_j(t)^2}{2} \right)
        \begin{pmatrix}
        2 y_1(t) \\
        \vdots \\
        2 y_M(t)
        \end{pmatrix}
        \right\},
    \end{equation}
    where $\bm{y}(t) \coloneqq (y_1(t), \dots, y_M(t))^\top$ and $\hat{H}^{(1)}$ is the Hamiltonian matrix restricted in the subspace spanned by $\{\ket{10\dots0}, \dots, \ket{0\dots01} \}$.
    Since the norm of the state vector in a Schr\"{o}dinger equation is preserved, so
    \begin{equation}
        \left\lVert 
        \exp \left(- \sum_{j=1}^M \frac{y_j(t)^2}{2} \right)
        \begin{pmatrix}
        2 y_1(t) \\
        \vdots \\
        2 y_M(t)
        \end{pmatrix}
        \right\rVert
        =
        2 \exp \left(- \sum_{j=1}^M \frac{y_j(t)^2}{2} \right) \sqrt{ \sum_{j=1}^M y_j(t)^2 }
    \end{equation}
    is constant.
    By the continuity of $\bm{y}(t)$ and the number of the solution $x$ of equation $e^{-x/2} \sqrt{x}=a \ (a \ge 0)$ being finite, $\sum_{j=1}^M y_j(t)^2$ is constant.
    Thus, we have transformed Eq. \eqref{eq:koopman-schrodinger-1} into the following form, denoted as Eq. \eqref{eq:koopman-schrodinger-2}:
    \begin{equation} \label{eq:koopman-schrodinger-2}
        \frac{d}{dt}
        \begin{pmatrix}
            y_1(t) \\
            \vdots \\
            y_M(t)
        \end{pmatrix}
        =
        -i \hat{H}^{(1)}
        \begin{pmatrix}
            y_1(t) \\
            \vdots \\
            y_M(t)
        \end{pmatrix}.
    \end{equation}
    Since $\bm{y}(t) = B \bm{x}(t)$, we can rewrite the equation as:
    \begin{equation} \label{eq:normalized-B-H1}
        \frac{d}{dt}
        B
        \begin{pmatrix}
            x_1(t) \\
            \vdots \\
            x_N(t)
        \end{pmatrix}
        =
        -i \hat{H}^{(1)}
        B
        \begin{pmatrix}
            x_1(t) \\
            \vdots \\
            x_N(t)
        \end{pmatrix},
    \end{equation}
    where $\bm{x} (t) \coloneqq (x_1(t), \dots, x_N(t))^\top$.
    Additionally, considering the equation $\frac{d}{dt} \bm{x}(t) = A \bm{x}(t)$, we can derive:
    \begin{equation} \label{eq:A-to-BA}
        \frac{d}{dt}
        B
        \begin{pmatrix}
            x_1(t) \\
            \vdots \\
            x_N(t)
        \end{pmatrix}
        =
        BA
        \begin{pmatrix}
            x_1(t) \\
            \vdots \\
            x_N(t)
        \end{pmatrix}.
    \end{equation}
    From Eq. \eqref{eq:normalized-B-H1} and \eqref{eq:A-to-BA}, we obtain the equation:
    \begin{equation} \label{eq:BA-H1B}
        -i \hat{H}^{(1)} B
        \begin{pmatrix}
            x_1(t) \\
            \vdots \\
            x_N(t)
        \end{pmatrix}
        =
        BA
        \begin{pmatrix}
            x_1(t) \\
            \vdots \\
            x_N(t)
        \end{pmatrix}.
    \end{equation}
    By the continuity of $(x_1(t), \dots, x_N(t))^\top$, $t \downarrow 0$ leads  Eq. \eqref{eq:BA-H1B} holds at $t=0$.
    Eq. \eqref{eq:BA-H1B} at $t=0$ holds for any initial value $\bm{x} (0) \in \mathbb{R}^N$, then $BA=-i \hat{H}^{(1)}B$.
    Thus $BAC_B = -i \hat{H}^{(1)}B C_B$ holds.
    Since $\mathbb{R}^M$ can be represented as $\mathbb{R}^M = (\mathrm{Im} B) \oplus (\mathrm{Im} B)^\perp$, let $\ket{y}$ be $\ket{y}=\ket{y_1}+\ket{y_2}$, where $(\ket{y_1}, \ket{y_2}) \in (\mathrm{Im} B) \times (\mathrm{Im} B)^\perp$.
    $C_B B = I_N$ by Lemma \ref{lem:property-CB} leads $BC_B |_{\mathrm{Im} B} = \mathrm{Id}_{\mathrm{Im} B}$.
    Since $C_B |_{(\mathrm{Im} B)^\perp}=0$, $BC_B |_{\mathrm{Im} B} = \mathrm{Id}_{\mathrm{Im} B}$, and $\hat{H}^{(1)}$ is a hermitian matrix, then $\braket{y | BAC_B | y}$ becomes
    \begin{align}
        \braket{y | BAC_B | y} 
        &= (\bra{y_1} + \bra{y_2}) BAC_B (\ket{y_1} + \ket{y_2}) \\
        &= \braket{y_1 | BAC_B | y_1} \\
        &= \braket{y_1 | (-i \hat{H}^{(1)}B C_B) | y_1} \\
        &=-i \braket{y_1 | \hat{H}^{(1)} | y_1} \in i \mathbb{R}.
    \end{align}
    $BAC_B \in \mathbb{R}^{M \times M}$ and $\ket{y} \in \mathbb{R}^M$ leads $\braket{y | BAC_B | y} \in \mathbb{R}$, then $\braket{y | BAC_B | y} \in \mathbb{R} \cap i \mathbb{R} = \{0\}$, thus $\braket{y | BAC_B | y} = 0$ holds for all $\ket{y} \in \mathbb{R}^M$.
    According to Lemma \ref{lem:anti-symmetric}, $BAC_B$ is an anti-symmetric matrix.
    Therefore, condition 2. holds.
    \newline
    $(2. \Longrightarrow 1.)$
    First, we show $\mathrm{tr} (BAC_B)=0$.
    Since $BAC_B$ is a real anti-symmetric matrix, $\mathrm{tr} (BAC_B)=0$ holds.
    
    Next, let $BAC_B$ be $BAC_B=(\tilde{a}_{jk})_{j,k=1,\dots, M}$.
    Since $BAC_B$ is an anti-symmetric matrix, $\tilde{a}_{kj}=-\tilde{a}_{jk} \ (j, k \in \{1, \dots, M\})$ holds.
    Now, let's rephrase Eq. \eqref{eq:hamiltonian-koopman-hermite-poly} using the operators $\hat{x}_j = (a_j^\dagger + a_j) / \sqrt{2}$ and $\hat{k}_j= i (a_j^\dagger - a_j) / \sqrt{2} \ (j \in \{1, \dots, M\})$, 
    where $a_j^\dagger, a_j \ (j \in \{1, \dots, N\})$ are $j$th bosonic creation and annihilation operators in Appendix \ref{appendix-sub:Koopman-von Neumann-hermite-polynomial}. 
    Eq. \eqref{eq:hamiltonian-koopman-hermite-poly} can be rewritten as follows:
    \begin{align}
        \hat{H} &= \frac{1}{2}
        \left\{
        \sum_{j=1}^M \frac{i(a_j^\dagger-a_j)}{\sqrt{2}} \left(\sum_{k=1}^M \tilde{a}_{jk} \frac{a_k^\dagger+a_k}{\sqrt{2}} \right)
        +
        \sum_{j=1}^M  \left(\sum_{k=1}^M \tilde{a}_{jk} \frac{a_k^\dagger+a_k}{\sqrt{2}} \right) \frac{i(a_j^\dagger-a_j)}{\sqrt{2}}
        \right\} \\
        &= \frac{1}{2}
        \left\{
        \sum_{j=1}^M \sum_{k=1}^M  \tilde{a}_{jk} \frac{i(a_j^\dagger-a_j)(a_k^\dagger+a_k)}{2} 
        -
        \sum_{k=1}^M \sum_{j=1}^M  \tilde{a}_{kj} \frac{i(a_k^\dagger+a_k)(a_j^\dagger-a_j)}{2} 
        \right\} \\
        &=\frac{i}{2} \sum_{j=1}^M \sum_{k=1}^M \tilde{a}_{jk} \left(a^\dagger_j a_k -a_j a^\dagger_k \right). \label{eq:hamiltonian-representation}
    \end{align}
    This representation implies the Hamiltonian preserves the total fock number.
    Therefore, condition 1. holds.
\end{proof}

From Proposition \ref{prop:koopman-preserving},
the following Corollary \ref{cor:koopman-to-carleman} holds.

\begin{cor} \label{cor:koopman-to-carleman}
    Under the setting of Proposition \ref{prop:koopman-preserving},
    if $B \in \mathbb{R}^{M \times N}$ is a Koopman-von Neumann transforming matrix of $A$, then $B$ is a Carleman transforming matrix of $A$.
\end{cor}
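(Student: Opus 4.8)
The plan is to unwind the two definitions and observe that being a Koopman--von Neumann transforming matrix is strictly stronger than being a Carleman transforming matrix, the only gap being the passage from real anti-symmetry to anti-hermiticity. Suppose $B \in \mathbb{R}^{M \times N}$ is a Koopman--von Neumann transforming matrix of $A$. By definition $\mathrm{rank} B = N$, and, applying Proposition~\ref{prop:koopman-preserving}, the matrix $BAC_B$ is a real anti-symmetric matrix.

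First I would check that $B$ meets the structural requirements in the definition of a Carleman transforming matrix: a real $M \times N$ matrix of rank $N$ is in particular a complex $M \times N$ matrix of rank $N$, hence a one-way reversible linear transformation in the sense of Sec.~\ref{section:preliminary}, and its characteristic matrix $C_B = (B^\dagger B)^{-1}B^\dagger$ equals $(B^\top B)^{-1}B^\top$ because $B$ is real, so $C_B$ is real as well. Consequently $BAC_B$ is a real matrix, using $A \in \mathbb{R}^{N \times N}$.

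Next I would promote real anti-symmetry to anti-hermiticity: since $BAC_B$ is real, $(BAC_B)^\dagger = (BAC_B)^\top = -BAC_B$, where the last equality is the hypothesis. Thus $BAC_B$ is anti-hermitian, and $B$, viewed as an element of $\mathbb{C}^{M \times N}$ with $\mathrm{rank} B = N$ such that $BAC_B$ is anti-hermitian, is by definition a Carleman transforming matrix of $A$, which is what we want.

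I do not expect any genuine obstacle here; the corollary is the bookkeeping fact that the real anti-symmetric matrices sit inside the anti-hermitian matrices and that $C_B$ is insensitive to conjugation when $B$ is real. The one point worth stating explicitly, so that the logic is airtight, is that the definition of a Carleman transforming matrix in Sec.~\ref{section:carleman-linear} is phrased for $B \in \mathbb{C}^{M \times N}$, so one must note that a real $B$ is an admissible instance there; that inclusion is immediate. One may also remark for context that the existence of such a $B$ already forces $A$ to be pure imaginary diagonalizable by Theorem~\ref{thm:koopman-linear-schrodinger-condition}, but this is not needed to prove the corollary.
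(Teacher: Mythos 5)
Your proof is correct and follows essentially the same route as the paper's: invoke the definition (via Proposition \ref{prop:koopman-preserving}) to get that $BAC_B$ is real anti-symmetric, observe that a real anti-symmetric matrix is anti-hermitian, and conclude from the definition of a Carleman transforming matrix. Your explicit check that $C_B$ is real and that a real rank-$N$ matrix qualifies as a complex one is harmless bookkeeping the paper leaves implicit.
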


\begin{proof}
    Since $B \in \mathbb{R}^{M \times N}$ is a Koopman-von Neumann transforming matrix of $A$, $BAC_B$ is a real anti-symmetric matrix according to Proposition \ref{prop:koopman-preserving}.
    Since a real anti-symmetric matrix is an anti-hermitian matrix, $BAC_B$ is an anti-hermitian matrix.
    According to Proposition \ref{prop:BACB-anti-hermitian} and the definition of the Carleman transforming matrix, $B$ is a Carleman transforming matrix of $A$.
\end{proof}

Focus on the matrix representation of $\hat{H}$ with $\{\ket{10\dots0}, \dots, \ket{0\dots01} \}$,
the following Corollary \ref{cor:koopman-hamiltonian-BACB} holds.

\begin{cor} \label{cor:koopman-hamiltonian-BACB}
    Under the setting of Proposition \ref{prop:koopman-preserving},
    the matrix representation of $-i\hat{H}$ with $\{\ket{10\dots0}, \dots, \ket{0\dots01} \}$ is coincidence with $BAC_B$ by using Eq. \eqref{eq:hamiltonian-representation}.
\end{cor}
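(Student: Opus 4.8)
The plan is to evaluate $-i\hat H$ explicitly on the single-quantum Fock states and read off the matrix entries. Write $\ket{1_m} := a_m^\dagger \ket{0}$ for the $m$-th element of the ordered basis $\{\ket{10\dots0}, \dots, \ket{0\dots01}\}$, and let $(\tilde a_{jk})_{j,k=1,\dots,M}$ denote the entries of $BAC_B$. The starting point is the normal-ordered form of $\hat H$ obtained in the proof of Proposition~\ref{prop:koopman-preserving},
\begin{equation*}
\hat H = \frac{i}{2} \sum_{j=1}^M \sum_{k=1}^M \tilde a_{jk}\left(a_j^\dagger a_k - a_j a_k^\dagger\right),
\end{equation*}
together with the two structural facts established there: $BAC_B$ is real anti-symmetric, so $\tilde a_{kj} = -\tilde a_{jk}$, and in particular $\mathrm{tr}(BAC_B) = \sum_j \tilde a_{jj} = 0$. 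These are precisely the ingredients that make the one-particle block collapse onto $BAC_B$ itself.

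The core computation uses the bosonic relation $[a_j,a_k^\dagger] = \delta_{jk}$ to move annihilation operators to the right: a line of algebra gives $a_j^\dagger a_k \ket{1_m} = \delta_{km}\ket{1_j}$ and $a_j a_k^\dagger \ket{1_m} = \delta_{jk}\ket{1_m} + \delta_{jm}\ket{1_k}$, hence
\begin{equation*}
\left(a_j^\dagger a_k - a_j a_k^\dagger\right)\ket{1_m} = \delta_{km}\ket{1_j} - \delta_{jk}\ket{1_m} - \delta_{jm}\ket{1_k}.
\end{equation*}
Contracting with $\tilde a_{jk}$ and summing, the middle term contributes $-\mathrm{tr}(BAC_B)\ket{1_m} = 0$, and in the last term I relabel the summation index and apply anti-symmetry, $\sum_k \tilde a_{mk}\ket{1_k} = -\sum_j \tilde a_{jm}\ket{1_j}$, so that it reinforces the first term rather than cancelling it. This yields $\hat H\ket{1_m} = i\sum_{j}\tilde a_{jm}\ket{1_j}$, i.e. $-i\hat H\ket{1_m} = \sum_j \tilde a_{jm}\ket{1_j}$. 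Hence the matrix of $-i\hat H$ in the basis $\{\ket{10\dots0}, \dots, \ket{0\dots01}\}$ has $(j,m)$-entry $\tilde a_{jm}$, which is exactly $BAC_B$, as claimed.

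I do not anticipate a genuine obstacle here; the argument is a short bosonic-operator computation. The only points requiring care are keeping the operator ordering in $\hat H$ straight and remembering to invoke both the tracelessness and the anti-symmetry of $BAC_B$ — without tracelessness the diagonal contribution $\sum_j \tilde a_{jj}$ would not vanish, and without anti-symmetry the two off-diagonal contributions would cancel instead of combining to $\tilde a_{jm}$. Equivalently, one may first simplify $\hat H$ to $i\sum_{j,k}\tilde a_{jk} a_j^\dagger a_k$ using these two facts, after which $\hat H\ket{1_m} = i\sum_j \tilde a_{jm}\ket{1_j}$ is immediate; this is just the same calculation packaged differently.
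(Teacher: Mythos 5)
Your computation is correct and is exactly the ``check by calculation'' that the paper's one-line proof alludes to: acting with $\hat H$ on $\ket{1_m}$, killing the diagonal term via $\mathrm{tr}(BAC_B)=0$, and combining the two off-diagonal contributions via anti-symmetry to get $-i\hat H\ket{1_m}=\sum_j \tilde a_{jm}\ket{1_j}$. Nothing is missing; your write-up simply makes explicit what the paper leaves implicit.
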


\begin{proof}
    We can check the relation by calculation.
\end{proof}

Next,
we investigate the condition equivalent to $BAC_B$ being a real anti-symmetric matrix.

\begin{prop} \label{prop:tBBA-anti-symmetric}
Let $A \in \mathbb{R}^{N \times N}$.
For $B \in \mathbb{R}^{M \times N} \ \mathrm{with} \ \mathrm{rank} B =N$,
the following conditions are equivalent:
\begin{enumerate}
    \item $BAC_B$ is a real anti-symmetric matrix.
    \item $B^\top BA$ is a real anti-symmetric matrix.
\end{enumerate}
\end{prop}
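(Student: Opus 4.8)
The plan is to mirror the argument of Proposition~\ref{prop:BdBA-anti-hermitian}, replacing the adjoint with the transpose and ``anti-hermitian'' with ``real anti-symmetric,'' the key algebraic fact being that congruence $M \mapsto S^\top M S$ by a real matrix $S$ preserves real anti-symmetry, since $(S^\top M S)^\top = S^\top M^\top S = -\,S^\top M S$ whenever $M^\top = -M$. I will also use that, because $B$ is real with $\mathrm{rank}\,B = N$, Lemma~\ref{lem:property-CB} gives $C_B = (B^\dagger B)^{-1}B^\dagger = (B^\top B)^{-1}B^\top$ with $B^\top B$ invertible and $C_B B = I_N$; note moreover that $B^\top B$ is real symmetric, hence so is $(B^\top B)^{-1}$.

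For the implication $(1. \Longrightarrow 2.)$, I would start from the assumption that $BAC_B$ is real anti-symmetric and compute the congruence by $B$:
\begin{equation*}
    B^\top (BAC_B) B = B^\top B A (C_B B) = B^\top B A,
\end{equation*}
using $C_B B = I_N$. Since congruence preserves real anti-symmetry, $B^\top B A$ is real anti-symmetric, which is condition~2.

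For the converse $(2. \Longrightarrow 1.)$, I would assume $B^\top B A$ is real anti-symmetric and conjugate it by the invertible real symmetric matrix $(B^\top B)^{-1}$:
\begin{equation*}
    (B^\top B)^{-1} (B^\top B A) \bigl((B^\top B)^{-1}\bigr)^\top = A (B^\top B)^{-1},
\end{equation*}
which is again real anti-symmetric by the congruence property. Then performing the congruence by $B$,
\begin{equation*}
    B \bigl(A (B^\top B)^{-1}\bigr) B^\top = B A (B^\top B)^{-1} B^\top = B A C_B,
\end{equation*}
is real anti-symmetric, giving condition~1.

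I do not anticipate a genuine obstacle here: the statement is the exact real-field analogue of Proposition~\ref{prop:BdBA-anti-hermitian}, and the only point requiring a moment's care is recording that $B^\top B$ (and thus its inverse) is symmetric, so that the ``$\dagger$'' appearing in the complex proof can be replaced by ``$\top$'' without spurious transposes — and that $B^\top B$ is invertible, which is already established in Lemma~\ref{lem:property-CB}.
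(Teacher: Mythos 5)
Your proposal is correct and follows essentially the same route as the paper's own proof: congruence by $B$ (using $C_B B = I_N$) for $(1.\Rightarrow 2.)$, and congruence by the symmetric invertible matrix $(B^\top B)^{-1}$ followed by congruence by $B$ for $(2.\Rightarrow 1.)$. The only difference is that you make the symmetry of $(B^\top B)^{-1}$ and the congruence-preservation argument explicit, which the paper leaves implicit.
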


\begin{proof}
    $(1. \Longrightarrow 2.)$
    Since $BAC_B$ is a real anti-symmetric matrix and $C_BB=I_N$ from Lemma \ref{lem:property-CB}, $B^\top (BAC_B) B=B^\top BA$ is an anti-symmetric matrix.
    By $A \in \mathbb{R}^{N \times N}$ and $B \in \mathbb{R}^{M \times N}$, $B^\top BA \in \mathbb{R}^{N \times N}$ is a real anti-symmetric matrix.
    Therefore, condition 2. holds.
    \newline
    $(2. \Longrightarrow 1.)$
    Since $B \in \mathbb{R}^{M \times N} \ \mathrm{with} \ \mathrm{rank} B =N$, $B^\dagger B = B^\top B$ is a regular matrix as proved in Lemma \ref{lem:property-CB}.
    From condition 2., $(B^\top B)^{-1} (B^\top BA) \left( (B^\top B)^{-1} \right)^\top = A (B^\top B)^{-1}$ is an anti-symmetric matrix.
    Since $C_B = (B^\dagger B)^{-1} B^\dagger$, $B\left( A (B^\top B)^{-1} \right) B^\top = BA \left( (B^\dagger B)^{-1} B^\dagger \right) = BAC_B$ is an anti-symmetric matrix.
    Therefore, condition 1. holds.
\end{proof}

Next,
we prove the equivalence of the set of $B^\top B$ and the set of real positive matrices.

\begin{lemma} \label{lem:real-positive}
\begin{equation}
\left\{ B^\top B \mid B \in \mathbb{R}^{M \times N} \ \mathrm{with} \ \mathrm{rank} B =N \right\}
=\left\{ D \in \mathbb{R}^{N \times N} \mid D >0 \right\}
\end{equation}
holds.
\end{lemma}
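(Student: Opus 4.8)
The plan is to mirror the proof of Lemma~\ref{lem:complex-positive} essentially verbatim, replacing $\mathbb{C}$ by $\mathbb{R}$ and the conjugate transpose $B^\dagger$ by the transpose $B^\top$; the only extra ingredient that the real case requires is that a real positive matrix admits a \emph{real} positive square root, which is immediate from the real spectral theorem.

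For the inclusion $(\subset)$, I would take $B\in\mathbb{R}^{M\times N}$ with $\operatorname{rank}B=N$. For every $x\in\mathbb{R}^N$ one has $x^\top B^\top Bx=\lVert Bx\rVert^2\ge 0$, so $B^\top B\ge 0$; moreover, applying Lemma~\ref{lem:property-CB} with $\mathbb{K}=\mathbb{R}$ shows that $B^\top B=B^\dagger B$ is invertible. Hence $B^\top B>0$, so $B^\top B$ belongs to the right-hand set.

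For the inclusion $(\supset)$, I would take $D\in\mathbb{R}^{N\times N}$ with $D>0$. Since $D$ is real symmetric and positive definite, the real spectral theorem gives a real orthogonal $Q$ with $Q^\top DQ=\operatorname{diag}(\lambda_1,\dots,\lambda_N)$ and all $\lambda_j>0$; setting $\sqrt{D}\coloneqq Q\operatorname{diag}(\sqrt{\lambda_1},\dots,\sqrt{\lambda_N})Q^\top\in\mathbb{R}^{N\times N}$ one gets $\sqrt{D}>0$, hence $\operatorname{rank}\sqrt{D}=N$. Then I would define $B\coloneqq\begin{pmatrix}\sqrt{D}\\ O_{(M-N)\times N}\end{pmatrix}\in\mathbb{R}^{M\times N}$, which has $\operatorname{rank}B=\operatorname{rank}\sqrt{D}=N$ and satisfies $B^\top B=\sqrt{D}^\top\sqrt{D}=(\sqrt{D})^2=D$, so $D$ lies in the left-hand set.

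I do not anticipate any real obstacle: this statement is the real counterpart of the already-established complex case Lemma~\ref{lem:complex-positive}, and the single point that warrants a sentence of care is the existence of a \emph{real} square root of $D$, which follows by diagonalizing the real symmetric matrix $D$ with an orthogonal matrix.
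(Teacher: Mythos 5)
Your proposal is correct and follows essentially the same route as the paper's own proof: positivity plus invertibility of $B^\top B$ via Lemma~\ref{lem:property-CB} for one inclusion, and padding $\sqrt{D}$ with a zero block for the other. The only difference is that you spell out the construction of the real positive square root via the spectral theorem, which the paper simply asserts.
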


\begin{proof}
    $( \subset )$
    Let $B \in \mathbb{R}^{M \times N} \ \mathrm{with} \ \mathrm{rank} B =N$.
    Since $B^\top B$ satisfies $B^\top B \ge 0$ and $B^\dagger B = B^\top B$ is a regular matrix as proved in Lemma \ref{lem:property-CB}, $B^\top B$ satisfies $B^\top B > 0$.
    That is, $B^\top B \in \left\{ D \in \mathbb{R}^{N \times N} \mid D >0 \right\}$.
    \newline
    $( \supset )$
    Let $D \in \mathbb{R}^{N \times N} \ \mathrm{with} \  D >0$.
    Since $D>0$ and $D$ is a real matrix, $\sqrt{D} \in \mathbb{R}^{N \times N}$ exists and satisfies $\sqrt{D}>0$.
    $\sqrt{D}>0$ leads $\mathrm{rank} \sqrt{D} = N$.
    We suppose $B \coloneqq 
    \begin{pmatrix}
        \sqrt{D} \\
        O_{(M-N) \times N}
    \end{pmatrix}
    \in \mathbb{R}^{M \times N}
    $.
    Since $\mathrm{rank} \sqrt{D} = N$, $\mathrm{rank} B = N$ holds.
    From the definition of $B$, $B^\top B = D$ holds.
    Then, $D \in \left\{ B^\top B \mid B \in \mathbb{R}^{M \times N} \ \mathrm{with} \ \mathrm{rank} B =N \right\}$.
\end{proof}

Proposition \ref{prop:DA-anti-symmetric} is proved by Lemma \ref{lem:real-positive}.

\begin{prop} \label{prop:DA-anti-symmetric}
For $A \in \mathbb{R}^{N \times N}$, the following conditions are equivalent:
\begin{enumerate}
    \item There exists $B \in \mathbb{R}^{M \times N} \ \mathrm{with} \ \mathrm{rank} B=N$ such that $B^\top BA$ is a real anti-symmetric matrix.
    \item There exists $D \in \mathbb{R}^{N \times N} \ \mathrm{with} \ D>0$ such that $DA$ is a real anti-symmetric matrix.
\end{enumerate}
\end{prop}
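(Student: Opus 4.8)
The plan is to deduce this immediately from Lemma~\ref{lem:real-positive}, which identifies the set $\{B^\top B \mid B \in \mathbb{R}^{M\times N} \ \mathrm{with} \ \mathrm{rank} B = N\}$ with the set of real positive matrices. This is the real-coefficient analogue of Proposition~\ref{prop:DA-anti-hermitian}, and the argument mirrors that one almost verbatim, with $B^\dagger$ replaced by $B^\top$ and ``anti-hermitian'' replaced by ``real anti-symmetric''.

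For the implication $(1.\Longrightarrow 2.)$, I would start from a $B \in \mathbb{R}^{M\times N}$ with $\mathrm{rank} B = N$ such that $B^\top B A$ is real anti-symmetric, set $D := B^\top B$, and invoke the inclusion $(\subset)$ of Lemma~\ref{lem:real-positive} to conclude $D > 0$. Then $DA = B^\top B A$ is real anti-symmetric by hypothesis, so condition 2 holds.

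For the converse $(2.\Longrightarrow 1.)$, given $D \in \mathbb{R}^{N\times N}$ with $D > 0$ such that $DA$ is real anti-symmetric, I would apply the inclusion $(\supset)$ of Lemma~\ref{lem:real-positive} to obtain a $B \in \mathbb{R}^{M\times N}$ with $\mathrm{rank} B = N$ and $B^\top B = D$; concretely one may take $B = \begin{pmatrix}\sqrt{D} \\ O_{(M-N)\times N}\end{pmatrix}$ as in the proof of that lemma, which requires only the standing assumption $M \ge N$ from Sec.~\ref{section:preliminary}. Then $B^\top B A = DA$ is real anti-symmetric, establishing condition 1.

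Since both directions are one-line consequences of Lemma~\ref{lem:real-positive}, I do not anticipate any genuine obstacle; the only point requiring a little care is bookkeeping on the ambient dimension $M$ so that the padded matrix $B$ realizing $D = B^\top B$ is well defined, which is automatic here. This Proposition, together with Propositions~\ref{prop:koopman-preserving} and \ref{prop:tBBA-anti-symmetric} and an analogue of Proposition~\ref{prop:complex-A-diagonalizable-pure-imaginary}, is what will be chained together to prove Theorem~\ref{thm:koopman-linear-schrodinger-condition}.
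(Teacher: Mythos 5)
Your proposal is correct and matches the paper's approach exactly: the paper also derives Proposition~\ref{prop:DA-anti-symmetric} as an immediate consequence of Lemma~\ref{lem:real-positive}, with the two implications obtained by the substitutions $D = B^\top B$ and $B = \begin{pmatrix}\sqrt{D} \\ O_{(M-N)\times N}\end{pmatrix}$ respectively. No issues.
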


To use the proof of Theorem \ref{thm:carleman-schrodinger} for showing Theorem \ref{thm:koopman-linear-schrodinger-condition}, we prove the following Lemma \ref{lem:Re-BdB}.

\begin{lemma} \label{lem:Re-BdB}
    Let $A \in \mathbb{R}^{N \times N}$, and $B \in \mathbb{C}^{M \times N} \ \mathrm{with} \ \mathrm{rank} B =N$.
    If $B^\dagger BA$ is an anti-hermitian matrix, then $(\mathfrak{Re}(B^\dagger B) )A$ is a real anti-symmetric matrix.
\end{lemma}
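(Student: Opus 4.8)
The plan is to reduce the claim to a real/imaginary split of the Hermitian matrix $B^\dagger B$, after which everything is bookkeeping. First I would set $D \coloneqq B^\dagger B$, which is Hermitian (in fact positive definite by Lemma \ref{lem:property-CB}, but only Hermiticity is used). Write $D = D_R + i D_I$ with $D_R \coloneqq \mathfrak{Re} D$ and $D_I \coloneqq \mathfrak{Im} D$, both real matrices. The identity $D^\dagger = D$ then forces $D_R^\top = D_R$ (real symmetric) and $D_I^\top = -D_I$ (real anti-symmetric); this is the only structural input needed.

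Next I would rewrite the hypothesis. Since $A$ is real, $(DA)^\dagger = A^\dagger D^\dagger = A^\top D$, so the assumption that $B^\dagger B A = DA$ is anti-hermitian is exactly $DA + A^\top D = 0$. Substituting $D = D_R + i D_I$ and separating the real part (all of $A$, $D_R$, $D_I$ are real) yields $D_R A + A^\top D_R = 0$.

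Finally, using $D_R^\top = D_R$ I get $(D_R A)^\top = A^\top D_R^\top = A^\top D_R = -D_R A$, so $D_R A = (\mathfrak{Re}(B^\dagger B))A$ is anti-symmetric; it is manifestly real, being a product of real matrices. This completes the proof.

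The computation is entirely routine, so there is no serious obstacle; the one point that needs care is the convention. Here $\mathfrak{Re}$ and $\mathfrak{Im}$ denote the entry-wise real and imaginary parts (so $M^* = \overline{M}$, not $M^\dagger$), and it is precisely the Hermiticity of $B^\dagger B$ — not merely its positivity — that makes $\mathfrak{Re}(B^\dagger B)$ symmetric and hence lets the transpose argument close. By contrast, the same split gives $D_I A + A^\top D_I = 0$ for the imaginary part, but because $D_I$ is anti-symmetric this does not by itself make $D_I A$ anti-symmetric, which explains why the lemma is stated for the real part only.
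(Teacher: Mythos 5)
Your proof is correct and is essentially the same elementary real/imaginary-part computation as the paper's: both exploit that $A$ is real so that taking real parts commutes with right-multiplication by $A$. The only (immaterial) difference is that the paper concludes directly from ``the real part of an anti-hermitian matrix is real anti-symmetric'' applied to $B^\dagger BA$, which does not even need the Hermiticity of $B^\dagger B$, whereas you route the conclusion through the symmetry of $\mathfrak{Re}(B^\dagger B)$ and the relation $\mathfrak{Re}(B^\dagger B)A + A^\top \mathfrak{Re}(B^\dagger B) = 0$.
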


\begin{proof}
    Since $B^\dagger BA$ is an anti-hermitian matrix, $\mathfrak{Re} (B^\dagger BA)$ is a real anti-symmetric matrix.
    $A$ is a real matrix, then $\mathfrak{Re} (B^\dagger BA) = (\mathfrak{Re}(B^\dagger B) )A$.
    Thus, $(\mathfrak{Re}(B^\dagger B) )A$ is a real anti-symmetric matrix.
\end{proof}

Finally, we show the equivalence of the condition about only $A$.

\begin{prop} \label{prop:real-A-diagonalizable-pure-imaginary}
For $A \in \mathbb{R}^{N \times N}$, the following conditions are equivalent:
\begin{enumerate}
    \item There exists $D \in \mathbb{R}^{N \times N} \ \mathrm{with} \ D>0$ such that $DA$ is a real anti-symmetric matrix.
    \item $A$ is diagonalizable and $\sigma_p(A) \subset i \mathbb{R}$.
\end{enumerate}
\end{prop}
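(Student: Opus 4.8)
The plan is to derive both directions from the complex analogue, Proposition~\ref{prop:complex-A-diagonalizable-pure-imaginary}, used in tandem with Lemma~\ref{lem:complex-positive} and Lemma~\ref{lem:Re-BdB}; no new machinery beyond what is already in place is needed, because a real positive-definite matrix is a (complex) positive matrix and a real anti-symmetric matrix is anti-hermitian, so the conditions over $\mathbb{R}$ sit inside the corresponding conditions over $\mathbb{C}$.

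For $(1 \Rightarrow 2)$ I would argue as follows. If $D \in \mathbb{R}^{N\times N}$ with $D>0$ makes $DA$ real anti-symmetric, then the very same $D$, viewed in $\mathbb{C}^{N\times N}$, satisfies $D>0$ and $DA$ anti-hermitian, which is precisely condition~1 of Proposition~\ref{prop:complex-A-diagonalizable-pure-imaginary}. That proposition then yields its condition~2 --- $A$ diagonalizable and $\sigma_p(A)\subset i\mathbb{R}$ --- which is verbatim condition~2 here. So this implication is immediate.

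For $(2 \Rightarrow 1)$ the idea is to produce a complex witness and then ``realify'' it. By Proposition~\ref{prop:complex-A-diagonalizable-pure-imaginary}, condition~2 yields a complex positive matrix $D_{\mathbb{C}}>0$ with $D_{\mathbb{C}}A$ anti-hermitian. By Lemma~\ref{lem:complex-positive} we may write $D_{\mathbb{C}}=B^\dagger B$ for some $B\in\mathbb{C}^{M\times N}$ with $\mathrm{rank}\,B=N$, so $B^\dagger B A$ is anti-hermitian, and Lemma~\ref{lem:Re-BdB} then gives that $(\mathfrak{Re}(B^\dagger B))A$ is real anti-symmetric; set $D:=\mathfrak{Re}(B^\dagger B)$. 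It remains to check $D>0$ over $\mathbb{R}$: $D$ is real symmetric since $B^\dagger B$ is Hermitian, and for any $0\neq x\in\mathbb{R}^N$ we have $x^\top D x = x^\top (B^\dagger B) x = \lVert Bx\rVert^2 > 0$ because $\mathrm{Ker}\,B=\{0\}$ by Lemma~\ref{lem:property-CB}. Hence $D$ is the required real positive-definite matrix.

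A self-contained alternative for $(2 \Rightarrow 1)$, should one wish to avoid passing through the complex statement, is to use the real canonical form: a real diagonalizable matrix with purely imaginary spectrum is real-similar, $A=S\tilde{A}S^{-1}$, to a block-diagonal $\tilde{A}$ built from $2\times 2$ blocks $\left(\begin{smallmatrix} 0 & \omega\\ -\omega & 0\end{smallmatrix}\right)$ together with a zero block, which is itself real anti-symmetric; then $D:=(S^{-1})^\top S^{-1}>0$ makes $DA=(S^{-1})^\top\tilde{A}S^{-1}$ real anti-symmetric. The only point requiring any care in either route is the positivity check for $\mathfrak{Re}(B^\dagger B)$ (respectively, verifying that $\tilde{A}$ is genuinely anti-symmetric, including the handling of a possible zero eigenvalue), and this is routine. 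Finally, chaining this proposition with Propositions~\ref{prop:DA-anti-symmetric}, \ref{prop:tBBA-anti-symmetric}, and \ref{prop:koopman-preserving} completes the proof of Theorem~\ref{thm:koopman-linear-schrodinger-condition}.
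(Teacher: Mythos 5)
Your proof is correct and follows essentially the same route as the paper: the $(2 \Rightarrow 1)$ direction is exactly the paper's argument (pass through Proposition~\ref{prop:complex-A-diagonalizable-pure-imaginary} and Lemma~\ref{lem:complex-positive} to get $B^\dagger B A$ anti-hermitian, take $D=\mathfrak{Re}(B^\dagger B)$ via Lemma~\ref{lem:Re-BdB}, and check positivity on real vectors). For $(1 \Rightarrow 2)$ you cite the complex proposition directly rather than repeating its $\sqrt{D}$-similarity computation in the real setting as the paper does, but this is an immaterial difference since real anti-symmetric matrices are anti-hermitian and real positive-definite matrices are positive.
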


\begin{proof}
    $(1. \Longrightarrow 2.)$
    Since $D>0$ and $D$ is a real matrix, $\sqrt{D} \in \mathbb{R}^{N \times N}$ exists and satisfies $\sqrt{D}>0$.
    $DA$ is a real anti-symmetric matrix, then $\sqrt{D}^{-1} (DA) \left( \sqrt{D}^{-1} \right)^\top = \sqrt{D} A \sqrt{D}^{-1} \in \mathbb{R}^{N \times N}$ is a real anti-symmetric matrix.
    Since $A$ is similar to $\sqrt{D} A \sqrt{D}^{-1}$ and $\sqrt{D} A \sqrt{D}^{-1}$ is a real anti-symmetric matrix, $A$ is diagonalizable and $\sigma_p (A) = \sigma_p \left( \sqrt{D} A \sqrt{D}^{-1} \right) \subset i \mathbb{R}$.
    Therefore, condition 2. holds.
    \newline
    $(2. \Longrightarrow 1.)$
    According to Proposition \ref{prop:DA-anti-hermitian} and \ref{prop:complex-A-diagonalizable-pure-imaginary}, there exists $B \in \mathbb{C}^{M \times N} \ \mathrm{with} \ \mathrm{rank} B =N$ such that $B^\dagger BA$ is an anti-hermitian matrix.
    We apply Lemma \ref{lem:Re-BdB} for $B^\dagger BA$, then $(\mathfrak{Re}(B^\dagger B) )A$ is a real anti-symmetric matrix.
    Let $D=\mathfrak{Re}(B^\dagger B) \in \mathbb{R}^{N \times N}$, then $DA$ is a real anti-symmetric matrix.
    According to Lemma \ref{lem:complex-positive}, $B^\dagger B$ satisfies $B^\dagger B > 0$, then $\braket{x | B^\dagger B |x} > 0$ holds for all $\ket{x} \in \mathbb{R}^N \setminus \{0\}$.
    Since $\ket{x} \in \mathbb{R}^N$ and $\braket{x | B^\dagger B |x} > 0$, $\mathfrak{Re} \left( \braket{x | B^\dagger B |x} \right) = \braket{x|D|x} >0$ holds.
    Thus, $D$ satisfies $D>0$.
    Therefore, condition 1. holds.
\end{proof}

From Proposition \ref{prop:koopman-preserving}, \ref{prop:tBBA-anti-symmetric}, \ref{prop:DA-anti-symmetric} and \ref{prop:real-A-diagonalizable-pure-imaginary},
Theorem \ref{thm:koopman-linear-schrodinger-condition} is proved.

\subsection{The Proof of Theorem \ref{thm:carleman-to-koopman}} \label{appendix-sub:real-imaginary}
\textit{Proof of Theorem \ref{thm:carleman-to-koopman}.}
From the definition of the Carleman transforming matrix and Proposition \ref{prop:BACB-anti-hermitian} and \ref{prop:BdBA-anti-hermitian}, $B^\dagger BA$ is an anti-hermitian matrix.
According to Lemma \ref{lem:Re-BdB}, $(\mathfrak{Re}(B^\dagger B) )A$ is a real anti-symmetric matrix.
Since $B^{\prime \top} B^\prime A = (\mathfrak{Re}(B^\dagger B) )A$, $B^{\prime \top} B^\prime A$ is a real anti-symmetric matrix.
Let $\ket{x} \in \mathrm{Ker} B^\prime$, $\ket{x}$ satisfies $(\mathfrak{Re} B) \ket{x} = (\mathfrak{Im} B) \ket{x}=0$.
Thus $0 = (\mathfrak{Re} B + i \mathfrak{Im} B) \ket{x} = B \ket{x}$, then $\ket{x} \in \mathrm{Ker} B$.
$\mathrm{rank} B = N$ and rank–nullity theorem lead $\mathrm{Ker} B =\{0\}$, then $\ket{x}=0$.
This implies $\mathrm{Ker} B^\prime = \{0\}$.
Rank–nullity theorem leads $\mathrm{rank} B^\prime = N$.
That is, $B^\prime \in \mathbb{R}^{2M \times N}$ satisfies $B^{\prime \top} B^\prime A$ is a real anti-symmetric matrix and $\mathrm{rank} B^\prime = N$, so we can apply Proposition \ref{prop:tBBA-anti-symmetric} for $B^\prime$. 
Then, $B^\prime AC_{B^\prime}$ is a real anti-symmetric matrix.
According to the definition of the Koopman-von Neumann transforming matrix, $B^\prime$ is a Koopman-von Neumann transforming matrix of $A$.
\qed

\end{document}